% This is samplepaper.tex, a sample chapter demonstrating the
% LLNCS macro package for Springer Computer Science proceedings;
% Version 2.20 of 2017/10/04
%
\documentclass[runningheads]{llncs}

%

%!TEX root = main.tex

\usepackage{amsmath,amsfonts}
\usepackage{float}
\usepackage{graphicx}
\usepackage{caption}

\usepackage{algorithmic}
\usepackage{array}

\usepackage{textcomp}
\usepackage{stfloats}
\usepackage{url}
\usepackage{verbatim}
\usepackage{graphicx}
\usepackage{color, colortbl}
\usepackage[table]{xcolor}
\usepackage{enumitem}
\usepackage{amsfonts,amsmath,amssymb,graphicx,setspace}
\usepackage{framed}
\usepackage{multirow}
\usepackage{booktabs}
\usepackage{siunitx}
\usepackage{todonotes}
\usepackage{makecell}
\pagestyle{plain}%%% adds page number
\usepackage{longtable}
\usepackage{xspace}
\usepackage{pgfplots}
\usepackage{pgfplots, pgfplotstable}
\usepackage{bm}
\usepackage{adjustbox}
\usepackage{multicol}
\usepackage{boxedminipage}
\usepackage{multirow}
\usepackage{adjustbox}
\usepackage{multicol}

\usepackage[hidelinks]{hyperref}
%\usepackage[normalem]{ulem}
%\newtheorem{assumption}{Assumption}
%\theoremstyle{theorem}
%\newtheorem{theorem}{Theorem}
%\newtheorem{lemma}{Lemma}
%\newtheorem{proposition}{Proposition}
%\theoremstyle{definition}
%\newtheorem{definition}{Definition}
%\newtheorem{claim}{Claim}
%\theoremstyle{theorem}

%%%%%%%%

\hyphenation{op-tical net-works semi-conduc-tor IEEE-Xplore}
\def\BibTeX{{\rm B\kern-.05em{\sc i\kern-.025em b}\kern-.08em
    T\kern-.1667em\lower.7ex\hbox{E}\kern-.125emX}}
\usepackage{balance}

%!TEX root = main.tex

\newcommand{\spc}{\ensuremath{\text{SPC}}\xspace}
%%%%%%%%%%%%
\usepackage{pgfplots}

\AtBeginDocument{%
  \providecommand\BibTeX{{%
    Bib\TeX}}}

\usepackage{xspace}

\usepackage{multirow}
\usepackage[mathscr]{eucal}
\usepackage{bm}
\definecolor{Gray}{gray}{0.9}
\usepackage{adjustbox}
\usepgfplotslibrary{groupplots}
\usepgfplotslibrary{colorbrewer}

\usepackage{subcaption}

\definecolor{group1}{HTML}{1F77B4} % blue
\definecolor{group2}{HTML}{FF7F0E} % orange
\definecolor{group3}{HTML}{2CA02C} % green
\definecolor{group4}{HTML}{D62728} % redRuntim

\usepgfplotslibrary{colorbrewer}

% \hypersetup{
%     colorlinks=false,
% }

%\newcommand\NameEntry[1]{%
%  \multirow{3}*{%
%    \begin{varwidth}{10em}% --- or minipage, if you prefer a fixed width
%    \flushright #1%
%    \end{varwidth}}}
%\newcommand{\specialcell}[2][c]{%
%  \begin{tabular}[#1]{@{}c@{}}#2\end{tabular}}
%\settopmatter{printacmref=false}

%\setcopyright{none}

\usepackage{colortbl}
\usepackage[skins]{tcolorbox}
\newtcolorbox{mybox}[2][]{%
  attach boxed title to top center
               = {yshift=-11pt},
               %width=85mm,%
                  %height=52mm,
  %colback      = black,
  colframe     =black,
  %fonttitle    = \bfseries,
  colbacktitle = black,
  title        = #2,#1,
  enhanced,
}

\newcommand{\et}{\textit{et al.}\xspace}
\usepackage{amsfonts}

\newcommand{\ux}{\ensuremath{u_{\st 1}}\xspace}
\newcommand{\uy}{\ensuremath{u_{\st 2}}\xspace}
%%%%%xxxxxxxxxxx%%%%%
\usepackage{float}
\usepackage[section]{placeins}
\usepackage{enumitem}
\usepackage{lipsum}
\usepackage{xcolor}
 \newcommand{\st}{\scriptscriptstyle}
%----------------------------

\usepackage{adjustbox}

\usepackage{multicol}
%\usepackage{boxedminipage}

%%%%%%%%%%XXXXXXXXX

%\usepackage[skins,breakable,most]{tcolorbox}
%\usepackage{amssymb}
%\usepackage[most]{tcolorbox}

%\usepackage{pifont}

%\newtcolorbox{mybox}[2][]{%
%  attach boxed title to top center
%               = {yshift=-11pt},
%               %width=85mm,%
%                  %height=52mm,
%  %colback      = black,
%  colframe     =black,
%  %fonttitle    = \bfseries,
%  colbacktitle = black,
%  title        = #2,#1,
%  enhanced,
%}

%\usepackage{blkarray}% http://ctan.org/pkg/blkarra

%\usepackage{amsmath}
\usepackage{bm}
\usepackage{multicol}

\newcommand{\ses}{\ensuremath\mathtt{SS}\xspace}

\newcommand{\keygen}{\ensuremath\mathtt{KGen}\xspace}
\newcommand{\enc}{\ensuremath\mathtt{Enc}\xspace}
\newcommand{\dec}{\ensuremath\mathtt{Dec}\xspace}

\newcommand{\hadd}{\ensuremath\stackrel{\st H}+\xspace}
\newcommand{\hmul}{\ensuremath\stackrel{\st H}{\times}\xspace}

\newcommand{\se}{\ensuremath{{S}}\xspace}
\newcommand{\re}{\ensuremath{{R}}\xspace}
\newcommand{\p}{\ensuremath{{P}}\xspace}

\newcommand{\tp}{\ensuremath{{T}}\xspace}

\newcommand{\parse}{\ensuremath{\mathtt{parse}}\xspace}

\newcommand{\adv}{\ensuremath{\mathcal{A}}\xspace}
\newcommand{\simm}{\ensuremath{\mathtt{Sim}}\xspace}
\newcommand{\view}{\ensuremath{\mathtt{View}}\xspace}
\newcommand{\secsh}{\ensuremath{\mathtt{SS}^{\st(t,n)}}\xspace}
\newcommand{\empt}{\ensuremath{\epsilon}\xspace}

\usepackage{mathtools}
\usepackage{adjustbox}

\usepackage{framed}
\usepackage{tikz}
\newcommand{\h}{\ensuremath\mathtt{H}\xspace}
\newcommand{\g}{\ensuremath\mathtt{G}\xspace}

\newcommand{\ot}{\ensuremath{\mathcal{OT}^{\st 2}_{\st 1}}\xspace}
\newcommand{\dqot}{\ensuremath{\mathcal{DQ\text{--}OT}^{\st 2}_{\st 1}}\xspace}
\newcommand{\dqothf}{\ensuremath{\mathcal{DQ^{\st MR}\text{--}OT}^{\st 2}_{\st 1}}\xspace}
\newcommand{\rdqothf}{\ensuremath{{\text{DQ}^{\st \text{MR}}\text{--}\text{OT}}}\xspace}
\newcommand{\duqot}{\ensuremath{\mathcal{DUQ\text{--}OT}^{\st 2}_{\st 1}}\xspace}
\newcommand{\duqothf}{\ensuremath{\mathcal{DUQ^{\st MR}\text{--}OT}^{\st 2}_{\st 1}}\xspace}
\newcommand{\rduqothf}{\ensuremath{{\text{DUQ}^{\st \text{MR}}\text{--}\text{OT}}}\xspace}
\newcommand{\onenot}{\ensuremath{\mathcal{OT}^{\st n}_{\st 1}}\xspace}

\usetikzlibrary{arrows,decorations.markings}

%\def\etal{\emph{et al}\onedot}
%\makeatother

%%%%%%%%%XXXXXXXX

   %\usepackage{amssymb}
   %\newcommand\hmmax{0} % default 3
   % \newcommand\bmmax{0} % default 4

\AtBeginDocument{%
  \providecommand\BibTeX{{%
    \normalfont B\kern-0.5em{\scshape i\kern-0.25em b}\kern-0.8em\TeX}}}
    
    %\settopmatter{printacmref=false}
%\setcopyright{none}
%\renewcommand\footnotetextcopyrightpermission[1]{}
%*****************************************************
%\theoremstyle{theorem}
%\newtheorem{theorem}{Theorem}
%\newtheorem{lemma}{Lemma}
%\newtheorem{proposition}{Proposition}
%\theoremstyle{definition}
%\newtheorem{definition}{Definition}
%\newtheorem{claim}{Claim}
%\newtheorem{remark}{Remark}
%\theoremstyle{theorem}

\usepackage{fullpage}

\begin{document}

\title{Delegated-Query Oblivious Transfer\\ and its Practical Applications}

\author{%
Yvo  Desmedt\thanks{y.desmedt@cs.ucl.ac.uk}\inst{1}
\hspace{1mm} and \hspace{1mm}
Aydin Abadi\thanks{aydin.abadi@newcastle.ac.uk}\inst{2} 
\institute{The University of Texas at Dallas \and Newcastle University} 
 }

%\author{Aydin Abadi\thanks{Aydin Abadi is with the School of Computing Science, Newcastle University, United Kingdom. Email: aydin.abadi@newcastle.ac.uk } and  
%%
%Yvo Desmedt\thanks{Yvo Desmedt is with the Department of Computer Science, The University of Texas at Dallas, United States. Email: y.desmedt@cs.ucl.ac.uk}}

\maketitle

\begin{abstract}
Databases play a pivotal role in the contemporary World Wide Web and the world of cloud computing. 
%
% Databases play a crucial role in today's World Wide Web and cloud-based world. 
%
Unfortunately, numerous privacy violations have recently garnered attention in the news. 
%
% Databases are crucial to today’s WWW and cloud-based world. Unfortunately, recently many privacy violations made the news. 
%
To enhance database privacy, we consider Oblivious Transfer (OT), an elegant
cryptographic technology. Our observation reveals that existing research in this domain primarily concentrates on theoretical cryptographic applications, overlooking various practical aspects:

\vspace{.5mm}
\begin{itemize}%[leftmargin=4mm]
    \item OTs assume parties have direct access to databases. Our ``1-out-of-2 Delegated-Query OT''
enables parties to privately query a database, without direct access. 

\item With the rise of cloud computing, physically separated databases may no longer remain so. Our ``1-out-of-2 Delegated-Query Multi-Receiver OT'' protects privacy in such evolving scenarios.

\item Research often ignores the limitations of thin clients, e.g., Internet of Things devices. To address this, we propose a compiler that transforms any 1-out-of-$n$ OT into a thin client version.

%\item OTs rely on computational assumptions, except when employing exotic approaches like noisy channels or a fully trusted party. Introducing a very fast OT, called \textit{Supersonic OT}, we provide an alternative that circumvents these approaches.

\end{itemize}
\end{abstract}

% !TEX root =main.tex

%\vspace{-3mm}
\section{Introduction}\label{sec:intro}

%%\vspace{-0.4mm}

Databases play a vital role in e-commerce, advertising, intelligence analysis, combating crime, knowledge discovery, and conducting scientific research. Privacy breaches involving databases, impacting both organizations and individuals, have become headline news. 
%
%Privacy violations of databases affecting organizations and individuals have made the news. 
%
%Databases can be highly valuable to their owners and the owners' rivals.  
Some databases (e.g., Fortune 500 companies' databases about customers' purchase history)  can be valued at millions of dollars.

Privacy issues arise when users seek access to databases they do not own or create. Furthermore, many databases are now hosted in the cloud,  adding another layer of complexity to the privacy landscape.
To simultaneously protect the privacy of the user and the database itself from each other, the cryptographic-based technology, called Oblivious Transfer (OT) has been proposed. It allows a user (called a receiver) interested in the $s$-th element of a database  $(m_{\st 0}, m_{\st 1})$  (held by a sender) to learn only $m_{\st s}$ while preserving the privacy of (i) index $s\in\{0, 1\}$ from the sender and (ii) the rest of the database's elements from the receiver. Numerous variants have been developed since OT's introduction in 1981 \cite{Rabin-OT}. 

OT is an important cryptographic protocol that has found applications within various domains, such as generic secure Multi-Party Computation (MPC)  \cite{Yao82b,AsharovL0Z13,HarnikIK07}, Private Set Intersection (PSI) \cite{DongCW13}, contract signing \cite{EvenGL85}, Federated Learning (FL) \cite{YangLCT19,RenYC22,XuLZXND22}, and accessing sensitive field elements of remote private databases while preserving privacy \cite{CamenischDN09,AielloIR01,libert2021adaptive}. 
As evidenced by this work, numerous research gaps persist in this domain. Many real-world applications have been overlooked.  These oversights align with gaps in the research on OT, as expounded upon in the next section.

\clearpage

%\vspace{2mm}
\section{Motivations and Survey}
In this section, we motivate the paper through real-world scenarios, discuss gaps in the OT research, and outline our contributions.

% In this section, we motivate the paper using real-world scenarios, discuss gaps in the OT research line, and outline our contributions. 

%\vspace{-3mm}
\subsection{Motivations}

\subsubsection{Dealing with Insiders, e.g., in Financial Institutions.}\label{sec::} 
Insider attacks pose imminent threats to various organizations and their clients, such as financial institutions and their customers.  Insiders may collaborate with external fraudsters, obtaining highly valuable data. There have been real-world incidents where bank employees have leaked or misused customers' information.

The ``Swiss Leaks'' \cite{leigh2015hsbc} is a good example to illustrate the problem of insider leaks in the banking world.  In the Swiss Leaks case, an insider attempted to sell information about accounts held by HSBC in Geneva. Later, when he failed, he leaked the information to the public.  As another example, in the case of ``JPMorgan Chase Insider Thief'' \cite{JPMorgan}, a former JPMorgan Chase personal banker has been arrested by the FBI on charges that he stole customers' account information and sold it to an undercover informant. Another notable case involved a Citibank employee who accessed sensitive customer information and used it to commit fraud \cite{citibank}. 

Additionally, a former financial advisor at Morgan Stanley, was discovered to have illicitly accessed and leaked sensitive information from approximately 730,000 accounts. This data breach compromised customers' personal details such as their names, addresses, account numbers, and investment information. This employee, who worked within Morgan Stanley's private wealth management division, entered a guilty plea in federal court in Manhattan \cite{morgan-stanley}.

%Moreover, in 2015, a former Morgan Stanley financial advisor, Galen Marsh, was found to have accessed and leaked sensitive data of about 730,000 accounts, including customers' names, addresses, account numbers, and investment information. The advisor, who worked in Morgan Stanley's private wealth management division, entered the plea in federal court in Manhattan, according to court records \cite{morgan-stanley}.

%The same issue is valid for (business) customers' queries and interests. %A well-known example is Walt Disney's strategy to secretly acquire the land for what is now Walt Disney World in Florida. To avoid real estate agents and land speculators driving up prices, it uses a method to unlink its identity from its query

In this context, an insider can exclusively target high-profile wealthy individuals and sell the victims' information to their rivals, who might make strategic investments, often remaining stealthy from the victims' perspective. For an insider, a data breach in private banking or private financial advising can be more alluring than leaking hundreds of bank accounts. Indeed, the former could yield a higher payoff while concurrently posing a lower risk of exposure. Additionally, outsiders who infiltrate the computers of an individual advisor or the third-party database can compromise the privacy of customers' queries.

Furthermore, in this setting, financial advisors, within a financial institution, frequently maintain paid subscriptions to a valuable database (e.g., containing real estate market information, market trends, and capital flows) offered by third-party providers such as CoreLogic\footnote{\url{https://www.corelogic.com/data-solutions/property-data-solutions/discovery-platform}},  Multiple Listing Service\footnote{\url{https://www.mls.com}}, or Real Capital Analytics\footnote{\url{https://www.msci.com/our-solutions/real-assets/real-capital-analytics}}.  In contrast, clients of these advisors do not necessarily need to subscribe to the database themselves. Instead, they interact with the advisors and direct their queries to them.

Hence, there is a pressing need to (i) protect customer query privacy from advisors and databases, (ii) ensure the privacy of the database from both customers and advisors, and (iii) secure the privacy of customers in the event of a data breach on the advisor's or database's side. As explained in Section \ref{sec::gap-delegated-Q-OT}, current OTs fall short of providing these features simultaneously.

%%\vspace{-1mm}
\subsubsection{Multi-Receiver OT.}\label{sec:Delegated-Query-OT-w-HF}

The adoption of cloud computing has been accelerating. The ``PwC’s 2023 Cloud Business Survey'' suggests that 78\% of executives participating in the survey have mentioned that their companies had adopted cloud in most or all parts of the business \cite{PwC2023}. Moreover, multiple (sensitive) databases belonging to different parties have been merged and hosted by a single cloud provider.  Indeed, the recent cyber-attack revealed that data belonging to British Airways, Boots, BBC, and Aer Lingus was kept by the same cloud \cite{BBB-hack}. Another example is Salesforce data exposure, where a Salesforce software bug allowed users from different organizations to access each other's data within the Salesforce Marketing Cloud \cite{salesforce1}. This incident has potentially impacted various customers, including organizations like Aldo, Dunkin Donuts, Nestle Waters, and Sony \cite{salesforce2}. 
The current OTs do not allow us to deal with this scenario, as we will elaborate in Section \ref{sec::gap-Querying-Merged-Databases}. %We present a fully private multi-receiver OT is necessary to ensure user privacy in real-world cloud settings. 

%%\vspace{-1.2mm}
\subsubsection{Querying Databases with Hidden Fields.}\label{sec::delegated-Unknown-Query-OT-motivation}

%In the context of financial institutions,

In specific applications, such as finance or healthcare, sensitive details about customers or patients must be withheld from them (at least for a certain time period). In the financial sector, this may include  
(a) a binary flag that determines whether a certain customer is deemed suspicious \cite{arora2023privacy,DrivenData}, or (b) proprietary banking strategies tailored to individual clients. In the medical sector, such information may involve a binary flag indicating whether a patient has a certain condition. In certain cases, revealing specific details about an illness or test result might endanger the patient \cite{exchange-of-health-info,Withholding-Information}. 
Hence, the result that a client/receiver obtains for its request (e.g., seeking investment advice) depends on the private flag/query $s$,  provided by a third party to its advisor who is directly dealing with the client, while the client itself is not aware of the value of $s$. We will further discuss it in Section \ref{sec::Hidden-Fields}.

% %%%%%%%%%move below

% To enable their staff members to access the organisation's remote database in a privacy-preserving manner, each organisation might be tempted to use one of the following approaches.

% \begin{itemize}%[leftmargin=4mm]

% \item  \underline{\textit{Using an existing single-receiver OT, e.g., in \cite{IshaiKNP03}}}. However, this approach suffers from information leakage. It allows the server to learn not only which organisation is accessing the data but potentially even which specific staff member is utilising a particular portion of the data. 

% \item \underline{\textit{Using an existing multi-receiver OT, e.g., in \cite{CamenischDN09}}}. Nevertheless, this will also come with a privacy cost. The existing multi-receiver OTs reveal the size of the entire database to each receiver. In this scenario, a receiver can learn the number of private records other companies have in the same database. This type of leakage is particularly significant, especially when coupled with specific auxiliary information. %For instance, the recent cyber attack revealed that data belonging to British Airways, Boots, BBC, and Aer Lingus was kept by the same cloud  \cite{BBB-hack}. 

% \end{itemize}

% Hence, a fully private multi-receiver OT is necessary to ensure user privacy in real-world cloud settings. 

%\vspace{-4mm}
\subsection{Research Gaps}

\subsubsection{Support for Delegated-Query OT.}\label{sec::gap-delegated-Q-OT} 
Current OT technologies assume that a receiver that generates the query \textit{always} has \emph{direct subscription/access to databases} and enough computation resources to generate queries that are sent to the sender. This assumption has to be relaxed when receivers are not subscribed to the database (e.g., they cannot afford it) or when receivers are thin clients, e.g., IoT devices with limited computational resources or battery lifespan. We introduce \textit{Delegated-Query Oblivious Transfer} to address these limitations and deal with insider attacks (see Section \ref{sec:protocol}). 

%se section x for solution-- repeat it for any gap

%%\vspace{-1.2mm}
\subsubsection{Querying Merged Databases.} \label{sec::gap-Querying-Merged-Databases}
Existing techniques do not support querying \emph{merged databases} in a  privacy-preserving manner. Specifically, they  
 are not suitable for the \emph{real-world multi-receiver setting} where a sender maintains multiple records\footnote{A database table consists of records/rows and fields/columns. Each record in a table represents a set of related data, e.g., last name and address.} each belonging to a different receiver.   The existing techniques do not allow a receiver to privately query such records without disclosing (i) the records, that the receiver accesses, to the sender and (ii) the number of records, that other database users have, to each receiver. Receivers with different levels of security form a natural example. The existing OTs reveal the entire database's size to receivers enabling them to acquire non-trivial information. The mere existence of private data can be considered sensitive information \cite{securityy}. We propose several \textit{Multi-Receiver OTs} to support querying merged databases in a privacy-preserving way (see Section \ref{sec::Multi-Receiver-OT}).

\subsubsection{Databases with Hidden Fields.} \label{sec::Hidden-Fields}
The current OT concept assumes the receiver knows the full query, which may not always be desired, as discussed in Section  \ref{sec::delegated-Unknown-Query-OT-motivation}. We will propose OT variants supporting a (partially) \textit{unknown query} (see Sections \ref{sec::DUQ-OT-protocol} and \ref{sec::DUQ-OT-HF}).

%%\vspace{-1.2mm}
\subsubsection{Constant Size Response.}
The current techniques that allow a receiver to obtain  a response with a \emph{constant size} for its query necessitate the receiver to possess a storage space proportional to the size of the database, to locally store the database encryption. However, meeting this demanding requirement will become challenging for a thin client (e.g., in IoT settings), if its available storage space is significantly smaller than the database size. We will introduce a generic compiler that transforms any OT with a non-constant response size to one with a constant response size (see Section \ref{sec::the-compiler}).

%%\vspace{-1mm}
%\subsubsection{Unconditionally Secure OT} 
%The state-of-the-art \emph{unconditionally secure} OTs either (i) depend on the multi-sender setting, where each sender possesses a database replica, (ii) utilize a specific communication channel (i.e., noisy channel), or (iii) require the presence of a fully trusted initializer. Nevertheless, distributing the same database across multiple servers, establishing a highly specific communication channel, or involving a fully trusted party would increase the overall deployment cost of these schemes. We will propose Supersonic OT, an unconditionally secure highly efficient OT that does not have the above limitations (see Section \ref{sec::supersonice-OT}). We also implement it and evaluate its overhead in Section \ref{sec::supersonic-OT}. 

%Fourthly, the state-of-the-art \emph{unconditionally secure} techniques either (i) rely on the multi-sender setting where each of the senders holds a replica of the database, (ii) use a specific communication channel (i.e., noisy channel), or (iii) require the involvement of a fully trusted initialiser. However, distributing the same database among multiple servers, setting up (or using) a very specific communication channel, or involving a  fully trusted party would increase the overall deployment cost of the schemes.

%\vspace{-4mm}

\subsection{Our Contributions}

%%\vspace{-1mm}

In this paper, we propose solutions to the aforementioned limitations using the following new techniques:

%\vspace{-1mm}
\begin{enumerate}[leftmargin=5.5mm]
\item \underline{$1$-out-of-$2$ Delegated-Query Oblivious Transfer} \underline{(\dqot)}: a new notion of OT  that extends the basic features of OT by allowing the receiver to \textit{delegate} two tasks: (i) computing the query and (ii) interacting with the sender. These tasks can be assigned to a pair of potentially semi-honest parties, $\p_{\st 1}$
and $\p_{\st 2}$, while ensuring that the sender and receiver privacy is also protected from $\p_{\st 1}$ and $\p_{\st 2}$. Section \ref{sec::sec-def} presents \dqot. 
\begin{enumerate}

\item {Delegated-Query OT (DQ-OT)}: a protocol that realizes   \dqot.  Section \ref{sec::DQ-OT} presents DQ-OT.

\item {Delegate-Unknown-Query OT (DUQ-OT)}:  a variant of DQ-OT which allows the receiver to extract the related message $m_s$ even if it does not (and must not) know the related index $s$. Section \ref{sec::DUQ-OT-protocol} presents  DUQ-OT.
\end{enumerate}

\item  \underline{$1$-out-of-$2$ Delegated-Query Multi-Receiver OT} (\dqothf): a new notion of OT that (in addition to offering OT's primary features) ensures (i) a receiver learns nothing about the total number of records and their field elements and (ii) the sender who maintains $z$ records $[(m_{\st 0, 0}, m_{\st 1, 0}),\ldots,$ $ (m_{\st 0, z-1},$ $m_{\st 1, z-1})]$ does not find out which query belongs to which record. Section \ref{sec::DQOT-HF} presents \dqothf. 

\begin{enumerate}%[leftmargin=3.5mm]
\item  {Delegated-Query Multi-Receiver OT (\rdqothf)}: an efficient protocol that realizes \dqothf. It is built upon DQ-OT and inherits its features. \rdqothf achieves its goal by allowing $\p_{\st 1}$ to know which record is related to which receiver.   Section \ref{sec::rdqothf} presents \rdqothf.
\item  {Delegate-Unknown-Query Multi-Receiver OT}  {(\rduqothf)}: a variant of \rdqothf which considers the case where  $\p_{\st 1}$ and $\p_{\st 2}$ do not (and must not) know which record in the database belongs to which receiver. Section \ref{sec::DUQ-OT-HF} presents \rduqothf. 
\end{enumerate}
%The protocols (such as \rdqothf and \rduqothf ) that realise \dqothf allow multiple physically separated sensitive databases to securely merge together.  

\item  \underline{A compiler}: a generic compiler that transforms any $1$-out-of-$n$ OT that requires the receiver to receive $n$ messages (as a response) into a $1$-out-of-$n$ OT that lets a receiver (i) receive only a \emph{constant} number of messages and (ii) have constant storage space. Section \ref{sec::the-compiler} presents the compiler. 

%\item \underline{Supersonic OT}: an unconditionally secure $1$-out-of-$2$ OT that \emph{does not} need to rely on (i) multiple senders, (ii) noisy channel, or (iii) the involvement of a trusted initializer. Section \ref{sec::supersonice-OT} presents Supersonic OT. Supersonic OT:
%
%\begin{itemize}[leftmargin=4mm]
%\item is fast as it does not involve any public-key-based tool.
%
%\item enables the receiver to obtain a response of size $O(1)$.
%
%\item takes 0.35 milliseconds to complete its single execution. 
%
%\item is about $10^{\st 3}$ times faster than the base OT in \cite{AsharovL0Z13} and up to  $2\times 10^{\st 3}$ times faster than the base OT in \cite{Efficient-OT-Naor}.
% 
%%\item is unconditionally secure
%
%\end{itemize}

\end{enumerate} 

%\vspace{-3mm}

\section{Preliminaries}

%%\vspace{-1mm}
\subsection{Notations}\label{sec::notations}
By $\empt$ we mean an empty string. When $y$ represents a single value, $|y|$ refers to the bit length of $y$. However, when $y$ is a tuple, $|y|$ denotes the number of elements contained within $y$. We denote a sender by $\se$ and a receiver by $\re$. We assume parties interact with each other through a regular secure channel. 
We define a parse function as $\parse(\lambda, y)\rightarrow (\ux, \uy)$, which takes as input a value $\lambda$ and a value $y$ of length at least  $\lambda$-bit. It parses $y$ into two values  $\ux$ and $\uy$ and returns $(\ux, \uy)$ where the bit length of $\ux$ is $|y|-\lambda$ and the bit length of $\uy$ is  $\lambda$. 
Also, $U$ denotes a universe of messages $m_{\st 1},\ldots, m_{\st t}$. We define $\sigma$ as the maximum size of messages in $U$, i.e., $\sigma=Max(|m_{\st 1}|,\ldots, |m_{\st t}|)$. 
We use two hash functions $\h:\{0, 1\}^{\st *}\rightarrow \{0, 1\}^{\sigma}$ and $\g:\{0, 1\}^{\st *}\rightarrow \{0, 1\}^{\st\sigma+\lambda}$ modelled as random oracles \cite{Canetti97}. 

%In this paper, we use the simulation-based paradigm of secure multi-party computation \cite{DBLP:books/cu/Goldreich2004} to define and prove the proposed protocols. We consider semi-honest non-colluding adversaries. Appendix \ref{sec::sec-model} (in the supplemental material file) restates the formal definition in the simulation-based model. 

% !TEX root =main.tex

\subsection{Security Model}\label{sec::sec-model}

In this paper, we rely on the simulation-based model of secure multi-party computation \cite{DBLP:books/cu/Goldreich2004} to define and prove the proposed protocols. Below, we restate the formal security definition within this model.

 \subsubsection{Two-party Computation.} A two-party protocol $\Gamma$ problem is captured by specifying a random process that maps pairs of inputs to pairs of outputs, one for each party. Such process is referred to as a functionality denoted by  $f:\{0,1\}^{\st  *}\times\{0,1\}^{\st  *}\rightarrow\{0,1\}^{ \st *}\times\{0,1\}^{ \st *}$, where $f:=(f_{\st  1},f_{\st  2})$. For every input pair $(x,y)$, the output pair is a random variable $(f_{\st  1} (x,y), f_{\st  2} (x,y))$, such that the party with input $x$ wishes to obtain $f_{\st  1} (x,y)$ while the party with input $y$ wishes to receive $f_{\st  2} (x,y)$. 
%
 %When $f$ is deterministic, then $f_{\st  1} =f_{ \st 2}$. 
 In the setting where $f$ is asymmetric and only one party (say the first one) receives the result, $f$ is defined as $f:=(f_{\st  1}(x,y), \empt)$.

 \subsubsection{Security in the Presence of Passive Adversaries.}  In the passive adversarial model, the party corrupted by such an adversary correctly follows the protocol specification. Nonetheless, the adversary obtains the internal state of the corrupted party, including the transcript of all the messages received, and tries to use this to learn information that should remain private. Loosely speaking, a protocol is secure if whatever can be computed by a party in the protocol can be computed using its input and output only. In the simulation-based model, it is required that a party’s view in a protocol's 
 execution can be simulated given only its input and output. This implies that the parties learn nothing from the protocol's execution. More formally, party $i$’s view (during the execution of $\Gamma$) on input pair  $(x, y)$ is denoted by $\mathsf{View}_{\st  i}^{\st  \Gamma}(x,y)$ and equals $(w, r_{\st  i}, m_{\st  1}^{\st  i}, \ldots, m_{\st  t}^{\st  i})$, where $w\in\{x,y\}$ is the input of $i^{\st  th}$ party, $r_{\st  i}$ is the outcome of this party's internal random coin tosses, and $m_{\st  j}^{\st  i}$ represents the $j^{\st  th}$ message this party receives.  The output of the $i^{\st  th}$ party during the execution of $\Gamma$ on $(x, y)$ is denoted by $\mathsf{Output}_{\st  i}^{\st  \Gamma}(x,y)$ and can be generated from its own view of the execution.  
\vspace{-1mm}
\begin{definition}
Let $f$ be the deterministic functionality defined above. Protocol $\Gamma$ securely computes $f$ in the presence of a  passive adversary if there exist polynomial-time algorithms $(\mathsf {Sim}_{\st  1}, \mathsf {Sim}_{\st  2})$ such that:
\end{definition}

%\vspace{-1mm}
  \begin{equation*}
  \{\mathsf {Sim}_{\st 1}(x,f_{\st 1}(x,y))\}_{\st x,y}\stackrel{c}{\equiv} \{\mathsf{View}_{\st 1}^{\st \Gamma}(x,y) \}_{\st x,y}
  \end{equation*}
  \begin{equation*}
    \{\mathsf{Sim}_{\st 2}(y,f_{\st 2}(x,y))\}_{ \st x,y}\stackrel{c}{\equiv} \{\mathsf{View}_{\st 2}^{\st \Gamma}(x,y) \}_{\st x,y}
  \end{equation*}
  %%%%%%%%%%%%%%%

%%%%%%%%%%%%%%

  %\vspace{-3mm}
  \subsection{Random Permutation}
A random permutation $\pi(e_{\st 0} ,\ldots, e_{\st n})\rightarrow(e'_{\st 0},\ldots, e'_{\st n})$ is a probabilistic function that takes a set $A=\{e_{\st 0}, \ldots, e_{\st n}\}$ and returns the same set of elements in a permuted order $B=\{e'_{\st 0},\ldots, e'_{\st n}\}$. The security of $\pi(.)$ requires that given set $B$ the probability that one can find the original index of an element $e'_{\st i}\in B$ is $\frac{1}{n}$.  In practice, the Fisher-Yates shuffle algorithm \cite{Knuth81} can permute a set of $n$ elements in time $O(n)$. We will use $\pi(.)$ in the protocols presented in Figures \ref{fig::DQ-OT-with-unknown-query} and \ref{fig::DQ-OT-with-unknown-query-and-HF-first-eight-phases}.

%\vspace{-3mm}

%\subsection{Controlled Swap}\label{Customised-Random-Swap}
%The idea of controlled swap was introduced by Fredkin and Toffoli  \cite{FredkinT02a}. It can be defined as function $\cper(.)$ which takes two inputs: a binary value $s$ and a pair $(c_{\st 0}, c_{\st 1})$. When $s=0$, it returns the input pair $(c_{\st 0}, c_{\st 1})$, i.e., it does not swap the elements. When $s=1$, it returns $(c_{\st 1}, c_{\st 0})$, effectively swapping the elements. 
%%
%It is evident that if $s$ is uniformly chosen at random, then $\cper(.)$ represents a random permutation, implying that the probability of swapping or not swapping is $\frac{1}{2}$. We will use $\bar{\pi}(.)$ in the protocol presented in Figure \ref{fig::Ultrasonic-OT}.

%\vspace{-3mm}
\subsection{Diffie-Hellman Assumption}\label{sec::dh-assumption}
Let $G$ be a group-generator scheme, which on input $1^{\st\lambda}$ outputs $(\mathbb{G}, p, g)$ where $\mathbb{G}$ is the description of a group, $p$ is the order of the group which is always a prime number, $\log_{\st 2}(p)=\lambda$ is a security parameter and $g$ is a generator of the group. In this paper, $g$ and $p$ can be selected by sender \se (in the context of OT). 

\subsubsection*{Computational Diffie-Hellman (CDH) Assumption.} 
We say that $G$ is hard under CDH assumption, if for any probabilistic polynomial time (PPT) adversary $\mathcal{A}$, given $(g^{\st a_1}, g^{\st a_2})$ it has only negligible probability to correctly compute $g^{\st a_1\cdot a_2}$. More formally, it holds that $Pr[\mathcal{A}(\mathbb{G}, p, g, g^{\st a_1}, g^{\st a_2})\rightarrow g^{\st a_1\cdot a_2}]\leq \mu(\lambda)$, where $(\mathbb{G}, p, g) \stackrel{\st\$}\leftarrow G(1^{\st\lambda})$, $a_{\st 1}, a_{\st 2} \stackrel{\st\$}\leftarrow \mathbb{Z}_{\st p}$, and $\mu$ is a negligible function  \cite{DiffieH76}.

\subsection{Secret Sharing}\label{sec::secret-haring}

A (threshold) secret sharing $\mathtt{SS}^{\st(t,n)}$  scheme is a cryptographic protocol that enables a dealer to distribute a string $s$, known as the secret, among $n$ parties in a way that the secret $s$ can be recovered when at least a predefined number of shares, say $t$, are combined. If the number of shares in any subset is less than $t$, the secret remains unrecoverable and the shares divulge no information about $s$. This type of scheme is referred to as $(n, t)$-secret sharing or \secsh for brevity. 

In the case where $t=n$, there exists a highly efficient XOR-based secret sharing \cite{blakley1980one}. In this case, to share the secret $s$, the dealer first picks $n-1$ random bit strings $r_{\st 1}, \ldots, r_{\st n-1}$ of the same length as the secret. Then, it computes $r_{\st n} = r_{\st 1} \oplus \ldots \oplus  r_{\st n} \oplus s$. It considers each $r_{\st i}\in\{r_{\st 1}, \ldots,r_{\st n}\}$ as a share of the secret. To reconstruct the secret, one can easily compute $r_{\st 1}\oplus \ldots \oplus r_{\st n}$. Any subset of less than $n$ shares reveals no information about the secret. We will use this scheme in this paper. A secret sharing scheme involves two main algorithms; namely, $\ses(1^{\st \lambda}, s, n, t)\rightarrow (r_{\st 1}, \ldots, r_{\st n})$: to share a secret and $\mathtt{RE}(r_{\st 1}, \ldots, r_{\st t}, n, t)\rightarrow s$ to reconstruct the secret. 

%\vspace{-3mm}
\subsection{Additive Homomorphic Encryption}\label{sec::AHE}

Additive homomorphic encryption involves three algorithms: (1) key generation: $\keygen(1^{\st\lambda})\rightarrow (sk, pk)$, which takes a security parameter as input and outputs a secret and public keys pair, (2) encryption: $\enc(pk, m)\rightarrow c$, that takes public key $pk$ and a plaintext message $m$ as input and returns a ciphertext $c$, and (3) decryption: $\dec(sk, c)\rightarrow m$, which takes secret key $sk$ and ciphertext $c$ as input and returns plaintext message $m$.  It has the following properties: 
 
 \begin{itemize}
 \item [$\bullet$]  Given two ciphertexts $\enc(pk, m_{\st 1})$ and $\enc(pk, m_{\st 2})$, one can compute the encryption of the sum of related plaintexts: 
  $\dec(sk,\mathtt{Enc}(pk, m_{\st 1})\hadd \enc(pk, m_{\st 2}))= m_{\st 1}+m_{\st 2}$, where $\hadd$ denotes homomorphic addition.

  % $\mathtt{Enc}(pk, m_{\st 1})\hadd \enc(pk, m_{\st 2})=\enc(pk, m_{\st 1}+m_{\st 2})$
  
 \item [$\bullet$] Given a ciphertext $\mathtt{Enc}(pk, m)$ and a plaintext message $c$, one can compute the encryption of the product of related plaintexts: 
 %
 %$\enc(pk, m)\hmul c = \enc(pk, m\cdot c)$
 %
  $\dec(sk, \enc(pk, m)\hmul c) = m\cdot c$, where $\hmul$ denotes homomorphic multiplication.
 \end{itemize}
We require that the encryption scheme satisfies indistinguishability against chosen-plaintext attacks (IND-CPA).  We refer readers to \cite{KatzLindell2014} for a formal definition. One such scheme that meets the above features is the Paillier public key cryptosystem, proposed in \cite{Paillier99}.

%Also, $\hadd$ and $\hmul$ denote homomorphic addition and multiplication respectively

%\vspace{-3mm}
\section{Related Work}\label{sec::related-work}

%\subsection{Banking and Insider Threats}

 Oblivious Transfer (OT) is one of the vital building blocks of cryptographic protocols and has been used in various mechanisms, such as PSI, generic MPC, and zero-knowledge proofs. 
 The traditional $1$-out-of-$2$ OT (\ot) is a protocol that involves two parties, a sender \se and a receiver \re.  \se has a pair of input messages $(m_{\st 0}, m_{\st 1})$ and \re has an index $s$. The aim of \ot is to allow \re to obtain $m_{\st s}$, without revealing anything about $s$ to \se, and without allowing \re to learn anything about  $m_{\st 1-s}$. The traditional \ot functionality is defined as $\mathcal{F}_{\st\ot}:((m_{\st 0}, m_{\st 1}), s) \rightarrow (\empt, m_{\st s})$.   %In general, OTs consider either active adversaries or passive ones. In this paper, our focus will be on the passive adversarial model.  The security of \ot (in the passive adversarial model) requires that the view of each party during the protocol's execution can be simulated given only its input and output.  
  
The notion of $1$-out-of-$2$ OT was initially proposed by Rabin \cite{Rabin-OT} which consequently was generalized by Even \et \cite{EvenGL85}. Since then, numerous variants of OT have been proposed. For instance, (i) $1$-out-of-$n$ OT, e.g., in \cite{NaorP99,Tzeng02,LiuH19}: which allows \re to pick one entry out of $n$ entries held by \se, (ii) $k$-out-of-$n$ OT, e.g., in \cite{ChuT05,JareckiL09,ChenCH10}: which allows \re to pick $k$ entries out of $n$ entries held by \se, (iii) OT extension, e.g., in \cite{IshaiKNP03,Henecka013,Nielsen07,AsharovL0Z13}: that supports efficient executions of OT (which mainly relies on symmetric-key operations), in the case OT needs to be invoked many times, and (iv) distributed OT, e.g., in \cite{NaorP00,CorniauxG13,ZhaoSJMZX20}: that allows the database to be distributed among $m$ servers/senders. %Since the latter variant is closest to our work, in the remainder of this section we elaborate on this variant. 

In the remainder of this section, we discuss several variants of OT  that have extended and enhanced the original OT in \cite{Rabin-OT}. 

%\vspace{-3.5mm}
\subsection{Distributed OT} 

 Naor and Pinkas \cite{NaorP00} proposed several protocols for distributed OT where the role of sender \se (in the original OT) is divided between several servers.  In these schemes, a receiver must contact a threshold of the servers to run the OT. 
 
 The proposed protocols are in the semi-honest model. They use symmetric-key primitives and do not involve any modular exponentiation that can lead to efficient implementations.  These protocols are based on various variants of polynomials (e.g., sparse and bivariate), polynomial evaluation, and pseudorandom function. In these distributed OTs, the security against the servers holds as long as less than a predefined number of these servers collude. 
 Later, various distributed OTs have been proposed\footnote{Distributed OT has also been called proxy OT in \cite{YaoF06}.}. For instance, Corniaux and Ghodosi \cite{CorniauxG13} proposed a verifiable $1$-out-of-$n$ distributed OT that considers the case where a threshold of the servers are potentially active adversaries. The scheme is based on a sparse $n$-variate polynomial, verifiable secret sharing, and error-correcting codes. 
 
 Moreover, Zhao \et  \cite{ZhaoSJMZX20} proposed a distributed version of OT extension that aims to preserve the efficiency of OT extension while delegating the role of \se to multiple servers a threshold of which can be potentially semi-honest. The scheme is based on a hash function and an oblivious pseudorandom function.  
However, there exists no OT that supports the delegation of the query computation to third-party servers in a privacy-preserving manner. 

%\vspace{-3.5mm}

\subsection{Multi-Receiver OT}\label{sec::multi-rec-ot}

Camenisch \et \cite{CamenischDN09} proposed a protocol for ``OT with access control''. It involves a set of receivers and a sender which maintains records of the receivers. It offers a set of interesting features; namely, (i) only authorized receivers can access certain records; (ii) the sender does not learn which record a receiver accesses, and (iii) the sender does not learn which roles (or security clearance) the receiver has when it accesses the records. 
In this scheme, during the setup, the sender encrypts all records (along with their field elements) and publishes the encrypted database for the receivers to download. Subsequently, researchers proposed various variants of OT with access control, as seen in \cite{CamenischDNZ11,CamenischDEN12,CamenischDN10}. 
Nevertheless, in all the aforementioned schemes, the size of the entire database is revealed to the receivers.

%\vspace{-3.4mm}

\subsection{OT with Constant Response Size}

Researchers have proposed several OTs, e.g., those proposed in \cite{CamenischNS07,GreenH08,ZhangLWR13},  that enable a receiver to obtain a constant-size response to its query. To achieve this level of communication efficiency, these protocols require the receiver to locally store the encryption of the entire database, in the initialization phase. During the transfer phase, the sender assists the receiver with locally decrypting the message that the receiver is interested in. 
The main limitation of these protocols is that a thin client with limited available storage space cannot use them, as it cannot locally store the encryption of the entire database.  We refer readers to \cite{OT-Survey} for a recent survey of OT.

%
%%\vspace{-3.5mm}
%
%\subsection{Unconditionally Secure OT}\label{sec::uncon-sec-OT} There have been efforts to design (both-sided) unconditionally secure OTs. Some schemes, e.g., the ones in   \cite{NaorP00,BlundoDSS07,CorniauxG13}, rely on multiple servers/senders that maintain an identical copy of the database.  Other ones, e.g., those in \cite{CrepeauK88,CrepeauMW04,IshaiKOPSW11}, rely on a specific network structure, i.e., a noisy channel, to achieve unconditionally secure OT.   There is also a scheme in \cite{rivest1999unconditionally} that achieves unconditionally secure OT using a fully trusted initializer.  
%
%Hence, there exists no (efficient) unconditionally secure OT that does not use noisy channels, multi-server, and fully trusted initializer. 
%

%

%\vspace{-3.5mm}
\section{Delegated-Query OT}\label{sec:protocol}
In this section, we present the notion of Delegated-Query $1$-out-of-$2$ OT (\dqot) and a protocol that realizes it. 
 \dqot involves four parties; namely, sender \se, receiver \re, and two helper servers $\p_{\st 1}$ and $\p_{\st 2}$ that assist \re in computing the query.  
 \dqot enables \re to delegate (i) the computation of the query and (ii) the interaction with \se to $\p_{\st 1}$ and $\p_{\st 2}$, who jointly compute \re's query and send it to \se.  
\dqot (in addition to offering the basic security of OT) ensures that \re's privacy is preserved from  $\p_{\st 1}$ and $\p_{\st 2}$, in the sense that  $\p_{\st 1}$ and $\p_{\st 2}$ do not learn anything about the actual index (i.e., $s\in\{0,1\}$) that \re is interested in, if they do not collude with each other.

\subsection{Functionality Definition}

Informally, the functionality that \dqot computes takes as input (i) a pair of messages $(m_{\st 0}, m_{\st 1})$ from \se, (ii) empty string  \empt from $\p_{\st 1}$, (iii)  empty string  \empt from $\p_{\st 2}$, and (iv) the index  $s$ (where $s\in \{0, 1\}$) from \re. It outputs an empty string $\empt$ to  \se, $\p_{\st 1}$, and $\p_{\st 2}$, and outputs the message with index $s$, i.e., $m_{\st s}$, to \re.  Formally, we define the functionality as: 
 $\mathcal{F}_{\scriptscriptstyle\dqot}:\big((m_{\st 0}, m_{\st 1}), \empt, \empt, s\big) \rightarrow (\empt, \empt, \empt, m_{\st s})$.

 %\vspace{-3mm}
 \subsection{Security Definition}\label{sec::sec-def}
 %\vspace{-1mm}
 
 Next, we present a formal definition of \dqot.

\begin{definition}[\dqot]\label{def::DQ-OT-sec-def} Let $\mathcal{F}_{\scriptscriptstyle\dqot}$ be the delegated-query OT functionality defined above. We say protocol $\Gamma$ realizes $\mathcal{F}_{\scriptscriptstyle\dqot}$ in the presence of passive adversary \se, \re, $\p_{\st 1}$, or $\p_{\st 2}$ if for  every non-uniform PPT adversary \adv
in the real model, there exists a non-uniform PPT adversary (or simulator) \simm  in
the ideal model, such that:
%
%%\vspace{-1mm}
%{\small{
\begin{equation}\label{equ::sender-sim-}
\begin{split}
\Big\{\simm_{\st\se}\big((m_{\st 0}, m_{\st 1}), \empt\big)\Big\}_{\st m_{\st 0}, m_{\st 1}, s}\stackrel{c}{\equiv} \Big\{\view_{\st\se}^{\st \Gamma}\big((m_{\st 0}, m_{\st 1}), \empt,  \empt, s\big) \Big\}_{ m_{\st 0}, m_{\st 1}, s}
\end{split}
\end{equation}

\begin{equation}\label{equ::server-sim-}
\begin{split}
\Big\{\simm_{\st\p_i}(\empt, \empt)\Big\}_{\st m_{\st 0}, m_{\st 1}, s}\stackrel{c}{\equiv}     \Big\{\view_{\st\p_i}^{\st \Gamma}\big((m_{\st 0}, m_{\st 1}), \empt,  \empt, s\big) \Big\}_{\st\st m_{\st 0}, m_{\st 1}, s}
\end{split}
\end{equation}

\begin{equation}\label{equ::reciever-sim-}
\begin{split}
\Big\{\simm_{\st\re}\Big(s, \mathcal{F}_{\scriptscriptstyle\dqot}\big((m_{\st 0}, m_{\st 1}), \empt,  \empt, s\big)\Big)\Big\}_{\st m_{\st 0}, m_{\st 1}, s}\stackrel{c}{\equiv}  \Big\{\view_{\st\re}^{\st \Gamma}\big((m_{\st 0}, m_{\st 1}), \empt,  \empt, s\big) \Big\}_{\st m_{\st 0}, m_{\st 1}, s}
\end{split}
\end{equation}
%}}
%%\vspace{-2mm}
for all $i$,  $i\in \{1,2\}$.

\end{definition}
Intuitively, Relation \ref{equ::sender-sim-} states that the view of a corrupt \se during the execution of protocol $\Gamma$ (in the real model) can be simulated by a simulator $\simm_{\st\se}$ (in the ideal model) given only \se's input and output, i.e., $(m_{\st 0}, m_{\st 1})$ and \empt respectively. 

Relation \ref{equ::server-sim-} states that the view of each corrupt server $\p_{\st i}$ during the execution of  $\Gamma$ can be simulated by a simulator $\simm_{\st\p_i}$ given only $\p_{\st i}$'s input and output, i.e., $\empt$ and $\empt$ respectively. 

Relation \ref{equ::reciever-sim-} states that the view of a corrupt \re during the execution of $\Gamma$ can be simulated by a simulator $\simm_{\st\re}$ given only \re's input and output, i.e., $s$ and $m_{\st s}$ respectively.

A \dqot scheme must meet two (new) properties, \textit{efficiency} and \textit{sender-push communication} (\spc). 
Efficiency states that the query generation of the receiver is faster compared to traditional (non-delegated) OT. 
\spc, on the other hand, stipulates that the sender transmits responses to the receiver without requiring the receiver to directly initiate a query to the sender. Below, we formally state these properties. 
%
%Appendix \ref{sec::properties-definitions}, presents formal definitions of efficiency and \spc. 

%\hspace{-4mm}

% !TEX root =main.tex

%\subsection{Formal Definitions of Efficiency and Sender-Push Communication in \dqot}\label{sec::properties-definitions}

\begin{definition}[Efficiency]\label{def::efficiency}
A \dqot scheme is considered efficient if the running time of the receiver-side request (or query) generation algorithm, denoted as $\mathtt{Request(1^{\st \lambda}, s, pk)}$, satisfies two conditions: 

\begin{enumerate}[label=$\bullet$,leftmargin=4.7mm]
    \item The running time is upper-bounded by $poly(|{m}|)$, where $poly$ is a fixed polynomial, $m$ is a tuple of messages that the sender holds, and $|{m}|$ represents the number of elements/messages in tuple $m$.

    \item The running time is asymptotically constant with respect to the security parameter $\lambda$, i.e., it is $O(1)$.  
\end{enumerate}
\end{definition}

% \begin{definition}[Efficiency]\label{def::efficiency}
% A \dqot scheme is efficient if the running time of client-side request (or query) generation algorithm $\mathtt{Request(1^{\st \lambda}, s, pk)}$ is (i) upper bounded by $poly(|{m}|)$, where $poly$ is a fixed polynomial and ${m}=(m_{\st 0}, m_{\st 1})$, and (ii) $O(1)$ with respect to the security parameter $\lambda$. 
% \end{definition}

\begin{definition}[Sender-push communication]\label{def::Server-push-comm} Let 
(a) $\mathtt{Action}_{\st \re}(t)$ represent the set of actions available to \re at time $t$ (these actions may include sending requests, receiving messages, or any other interactions \re can perform within the scheme's execution), 
(b) $\mathtt{Action}_{\st \se}(t)$ be the set of actions available to \se at time $t$,  
(c) $\mathtt{SendRequest}(\re,\se)$ be the action of \re sending a request to \se, and 
(d) $\mathtt{SendMessage}(\se,\re)$ represents the action of \se sending a message to \re. Then, we say that a \dqot scheme supports sender-push communication, if it meets the following conditions:

\begin{enumerate}[label=$\bullet$,leftmargin=4.7mm]

\item \textit{Receiver-side restricted interaction}: 
For all $t$ in the communication timeline (i.e., within the execution of the scheme), the set of actions $\mathtt{Action}_{\st \re}(t)$ available to the receiver \re is restricted such that it does not include direct requests to the sender \se. Formally,
$$\forall t: \mathtt{Action}_{\st \re}(t) \cap \{\mathtt{SendRequest}(\re,\se)\}=\emptyset$$

\item \textit{Sender-side non-restricted interaction}: 
For all $t$ in the communication timeline, the sender \se has the capability to push messages to the receiver \re without receiving explicit request directly from \re. Formally, 
$$\forall t:
\mathtt{Action}_{\st \se}(t) \subseteq \{\mathtt{SendMessage}(\se,\re)\}
$$

\end{enumerate}

\end{definition}

Looking ahead, the above two properties are also valid for the variants of \dqot; namely, \duqot, \dqothf, and  \duqothf, with a minor difference being that the receiver-side request generation algorithm in these three variants is denoted as $\mathtt{\re.Request}()$.

\subsection{Protocol}\label{sec::DQ-OT}
Now, we present an efficient 1-out-of-2 OT protocol, called DQ-OT, that realizes \dqot. 
%DQ-OT allows the receiver to delegate the generation and sending of its query (i.e., encoded index) to two non-colluding servers that are potentially passive adversaries, e.g., they may try to learn about the index in which the sender is interested, or the sender's plaintext message(s) $(m_0, m_1)$.  
%
We build DQ-OT upon the \ot proposed by  Naor and Pinkas  \cite[pp. 450, 451]{Efficient-OT-Naor}. Our motivation for this choice is primarily didactic. Appendix \ref{sec::OT-of-Naor-Pinkas} restates this OT.

The high-level idea behind the design of DQ-OT is that \re splits its index into two shares and sends each share to each $\p_{\st i}$. Each $\p_{\st i}$ computes a (partial) query and sends the result to  \se which generates the response for \re in the same manner as the original OT in \cite{Efficient-OT-Naor}. 
%
%The primary challenge is to \emph{ensure correctness} while preserving privacy. 
%
Below, we explain how  DQ-OT operates, followed by an explanation of how it achieves  correctness.

First, \re splits the index that it is interested in into two binary shares,  $(s_{\st 1}, s_{\st 2})$. Then, it picks two random values, $(r_{\st 1}, r_{\st 2})$, and then sends each pair $(s_{\st i}, r_{\st i})$ to each $\p_{\st i}$. 

Second, to compute a partial query, $\p_{\st 2}$ treats $s_{\st 2}$ as the main index that \re is interested in and computes a partial query, $\delta_{\st s_2}=g^{\st r_2}$. Also, $\p_{\st 2}$  generates another query, $\delta_{\st 1-s_2}=\frac{C}{g^{\st r_2}}$, where $C$ is a random public parameter (as defined in \cite{Efficient-OT-Naor}). $\p_{\st 2}$ sorts the two queries in ascending order based on the value of $s_{\st 2}$ and sends the resulting $(\delta_{\st 0}, \delta_{\st 1})$ to $\p_{\st 1}$.

Third, to compute its queries, $\p_{\st 1}$ treats $\delta_{\st 0}$ as the main index (that \re is interested) and computes $\beta_{\st s_1}=\delta_{\st 0}\cdot g^{\st r_1}$. Additionally, it generates another query $\beta_{\st 1-s_1}=\frac{\delta_{\st 1}}{g^{\st r_1}}$. Subsequently, $\p_{\st 1}$ sorts the two queries in ascending order based on the value of ${s_{\st 1}}$ and sends the resulting $(\beta_{\st 0}, \beta_{\st 1})$ to $\re$. 

Fourth, given the queries, \se computes the response in the same manner it does in the original OT in \cite{Efficient-OT-Naor} and sends the result to \re who extracts from it, the message that it asked for, with the help of $s_i$ and $r_i$ values.  
The detailed  DQ-OT  is presented in Figure \ref{fig::DQ-OT}.

%%%%%%%%%%%%%%%%%%%%%%%%%%%%%%
\begin{figure}[!htbp]
%\vspace{-2mm}
\setlength{\fboxsep}{.8pt}
\begin{center}
    \begin{tcolorbox}[enhanced,
    drop fuzzy shadow southwest,
    colframe=black,colback=white]
%%%%%%%%%%%%
%\vspace{-2.2mm}
\begin{enumerate}%[leftmargin=4.8mm]

\item \underline{\textit{$\se$-side Initialization:}} 
$\mathtt{Init}(1^{\st \lambda})\rightarrow pk$
\begin{enumerate}

\item chooses a sufficiently large prime number $p$.

\item selects random element
$C\stackrel{\st \$}\leftarrow \mathbb{Z}_p$ and generator $g$.
\item publishes $pk=(C, p, g)$. 

\end{enumerate}

\item \underline{\textit{\re-side Delegation:}}
$\mathtt{Request}(1^{\st \lambda}, s, pk)\hspace{-.6mm}\rightarrow\hspace{-.6mm} req=(req_{\st 1}, req_{\st 2})$

\begin{enumerate}

\item splits  the private index $s$ into two shares $(s_{\st 1}, s_{\st 2})$ by calling  $\ses(1^{\st \lambda}, s, 2, 2)\rightarrow (s_{\st 1}, s_{\st 2})$.

\item picks two uniformly random values: $r_{\st 1}, r_{\st 2} \stackrel{\st\$}\leftarrow\mathbb{Z}_{\st p}$.

\item sends $req_{\st 1}=(s_{\st 1}, r_{\st 1})$ to $\p_{\st 1}$ and $req_{\st 2}=(s_{\st 2}, r_{\st 2})$ to $\p_{\st 2}$.

\end{enumerate}

\item \underline{\textit{$\p_{\st 2}$-side Query Generation:}}
$\mathtt{\p_{\st 2}.GenQuery}(req_{\st 2}, ,pk)\rightarrow q_{\st 2}$

\begin{enumerate}

\item computes a pair of partial queries: 
%
%\vspace{-4mm}

  $$\delta_{\st s_2}= g^{\st r_2},\ \  \delta_{\st 1-s_{\st 2}} = \frac{C}{g^{\st r_2}}$$
  
\item sends $q_{\st 2}=(\delta_{\st 0}, \delta_{\st 1})$ to  $\p_{\st 1}$. 

\end{enumerate}

\item\underline{\textit{$\p_{\st 1}$-side Query Generation:}}
$\mathtt{\p_{\st 1}.GenQuery}(req_{\st 1}, q_{\st 2},pk)\rightarrow q_{\st 1}$

\begin{enumerate}

\item computes a pair of final queries as:  
%
%\vspace{-4mm}

$$\beta_{s_{\st 1}}=\delta_{\st 0}\cdot g^{\st r_{\st 1}},\ \ \beta_{\st 1-s_1}=\frac{\delta_{\st 1}} {g^{\st r_{\st 1}}}$$

\item sends $q_{\st 1}=(\beta_{\st 0}, \beta_{\st 1})$ to  $\se$.

\end{enumerate}

\item\underline{\textit{\se-side Response Generation:}}  
$\mathtt{GenRes}(m_{\st 0}, m_{\st 1}, pk, q_{\st 1})\rightarrow res$
\begin{enumerate}

\item aborts if  $C \neq \beta_{\st 0}\cdot \beta_{\st 1}$.
\item picks two uniformly random values: 
$y_{\st 0}, y_{\st 1}  \stackrel{\st\$}\leftarrow\mathbb{Z}_{\st p}$.
\item computes a response pair $(e_{\st 0}, e_{\st 1})$ as follows:

 $$e_{\st 0} := (e_{\st 0,0}, e_{\st 0,1}) = (g^{\st y_0}, \h(\beta_{\st 0}^{\st y_0}) \oplus m_{\st 0})$$
$$e_{\st 1} := (e_{\st 1,0}, e_{\st 1,1}) = (g^{\st y_1}, \h(\beta_{\st 1}^{\st y_1}) \oplus m_{\st 1})$$
\item sends $res=(e_{\st 0}, e_{\st 1})$ to \re. 

\end{enumerate}

\item\underline{\textit{\re-side Message Extraction:}} 
$\mathtt{Retreive}(res, req, pk)\hspace{-.6mm}\rightarrow\hspace{-.8mm} m_{\st s}$
\begin{enumerate}

\item sets $x=r_{\st 2}+r_{\st 1}\cdot(-1)^{\st s_{\st 2}}$

\item retrieves the related message: 
$m_{\st s}=\h((e_{\st s, 0})^{\st x})\oplus e_{\st s, 1}$

\end{enumerate}

\end{enumerate}
%%%%%%%%%%%%%%%
\end{tcolorbox}
\end{center}

\caption{DQ-OT: Our $1$-out-of-$2$ OT that supports query delegation. The input of \re is a private binary index $s$ and the input of \se is a pair of messages $(m_{\st 0}, m_{\st 1})$. Note, $\ses(.)$ is the share-generation algorithm, $\h(.)$ is a hash function, and $\$$ denotes picking a value uniformly at random.} 

\label{fig::DQ-OT}
\end{figure}
%%%%%%%%%%%%%%%%%%%%%%%%%%%%

\begin{theorem}\label{theo::DQ-OT-sec}
Let $\mathcal{F}_{\scriptscriptstyle\dqot}$ be the functionality defined in Section \ref{sec::sec-def}. If  
Discrete Logarithm (DL), Computational Diffie-Hellman (CDH), and Random Oracle (RO) assumptions hold, then DQ-OT (presented in Figure \ref{fig::DQ-OT}) securely computes $\mathcal{F}_{\scriptscriptstyle\dqot}$ in the presence of semi-honest adversaries,
% (a) semi-honest receiver \re, honest sender \se, and honest servers $\p_{\st 1}$ and $\p_{\st 2}$, (b) semi-honest \se, honest \re, and honest $\p_{\st 1}$ and $\p_{\st 2}$, or (c) semi-honest $\p_{\st i}$ (where $i\in \{1, 2\}$), honest \se, and honest \se, 
w.r.t. Definition \ref{def::DQ-OT-sec-def}. 
\end{theorem}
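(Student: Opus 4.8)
\medskip
\noindent\textbf{Proof proposal.}
The plan is to use the standard simulation paradigm. Since all four parties are semi-honest, it suffices to give, for each corruptible party $T\in\{\se,\p_{\st 1},\p_{\st 2},\re\}$, a PPT simulator that, on input $T$'s prescribed input and output, outputs a transcript computationally indistinguishable from $T$'s real view; Definition~\ref{def::DQ-OT-sec-def} only demands security against a single corrupted party, so no collusion needs to be handled. I would first dispatch correctness by a four-way case analysis on the shares $(s_{\st 1},s_{\st 2})$ with $s=s_{\st 1}\oplus s_{\st 2}$: in every case one checks $\beta_{\st 0}\cdot\beta_{\st 1}=C$ (so \se never aborts) and $\beta_{\st s}=g^{\st x}$ for $x=r_{\st 2}+r_{\st 1}\cdot(-1)^{\st s_{\st 2}}$, so that $(e_{\st s,0})^{\st x}=\beta_{\st s}^{\st y_{s}}$ and \re indeed recovers $m_{\st s}=\h((e_{\st s,0})^{\st x})\oplus e_{\st s,1}$. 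The same case analysis yields the distributional facts used below.

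For $\se$, $\p_{\st 1}$, and $\p_{\st 2}$ the simulation is information-theoretic. $\p_{\st 2}$'s view is $\big(pk,(s_{\st 2},r_{\st 2})\big)$; the simulator samples $pk$ exactly as \se does, $r_{\st 2}\stackrel{\st\$}\leftarrow\mathbb{Z}_{\st p}$, and $s_{\st 2}$ as a uniform bit, which is perfect because a single share of the XOR-based $(2,2)$ secret sharing is uniform and independent of $s$. $\p_{\st 1}$'s view is $\big(pk,(s_{\st 1},r_{\st 1}),(\delta_{\st 0},\delta_{\st 1})\big)$; the simulator additionally draws $\delta_{\st 0}$ uniformly from the group and sets $\delta_{\st 1}=C/\delta_{\st 0}$, which matches the real run since there $(\delta_{\st 0},\delta_{\st 1})$ depends only on $(s_{\st 2},r_{\st 2})$ (independent of $(s_{\st 1},r_{\st 1})$), with $\delta_{\st 0}$ uniform and $\delta_{\st 1}=C/\delta_{\st 0}$. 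For \se, whose only incoming message is $q_{\st 1}=(\beta_{\st 0},\beta_{\st 1})$, the simulator samples $pk$ and $y_{\st 0},y_{\st 1}$ honestly, draws $\beta_{\st 0}$ uniformly, and sets $\beta_{\st 1}=C/\beta_{\st 0}$; by the case analysis, in a real run $\beta_{\st 0}$ is uniform over the group and $\beta_{\st 1}=C/\beta_{\st 0}$ regardless of $s$, so this too is a perfect simulation (this is exactly why \se learns nothing about $s$).

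The substantive case is \re, whose view is $\big(s,s_{\st 1},r_{\st 1},r_{\st 2},pk,(e_{\st 0},e_{\st 1})\big)$. On input $(s,m_{\st s})$, the simulator samples $pk$ as \se, a uniform bit $s_{\st 1}$, and uniform $r_{\st 1},r_{\st 2}$, sets $s_{\st 2}=s\oplus s_{\st 1}$ and $x=r_{\st 2}+r_{\st 1}\cdot(-1)^{\st s_{\st 2}}$, forms $e_{\st s}=\big(g^{\st y_{s}},\,\h((g^{\st y_{s}})^{\st x})\oplus m_{\st s}\big)$ for a fresh uniform $g^{\st y_{s}}$ (legitimate because $\beta_{\st s}=g^{\st x}$, so only $x$ and $m_{\st s}$ are needed, both of which it has), and forms $e_{\st 1-s}$ as an independent uniform group element paired with an independent uniform $\sigma$-bit string. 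Indistinguishability is shown by a two-step hybrid on the second component of $e_{\st 1-s}$: replace the real $\h(\beta_{\st 1-s}^{\st y_{1-s}})\oplus m_{\st 1-s}$ first by $\h(R)\oplus m_{\st 1-s}$ for fresh random $R$, then by a uniform string. The second replacement is perfect by the random-oracle property, and the first is detectable only if the distinguisher (which subsumes \re together with its oracle queries) ever queries $\h$ at the point $\beta_{\st 1-s}^{\st y_{1-s}}$. Since $\beta_{\st 0}\cdot\beta_{\st 1}=C$ gives $\beta_{\st 1-s}=C/\beta_{\st s}=C/g^{\st x}$ and $e_{\st 1-s,0}=g^{\st y_{1-s}}$, that point equals $C^{\st y_{1-s}}\cdot(e_{\st 1-s,0})^{\st -x}$, so making the query is tantamount to computing the CDH value $C^{\st y_{1-s}}$ from $(g,C,e_{\st 1-s,0})$, which happens with only negligible probability under CDH.

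I expect this last step to be the main obstacle, namely setting up the reduction cleanly. Given a CDH instance $(\mathbb{G},p,g,g^{\st a},g^{\st b})$, the reduction plays \re's environment with $C:=g^{\st a}$ (statistically close to \se's honest choice of $C$) and $e_{\st 1-s,0}:=g^{\st b}$, picks $s_{\st 1},r_{\st 1},r_{\st 2}$ itself (hence knows $x$), answers \re's random-oracle queries with lazy sampling, and for each such query $Q$ outputs $Q\cdot(g^{\st b})^{\st x}$ as a candidate for $g^{\st ab}$; this is correct precisely for the ``critical'' query $Q=\beta_{\st 1-s}^{\st y_{1-s}}$, so the reduction succeeds with non-negligible probability whenever the distinguisher distinguishes. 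The care here lies in the random-oracle bookkeeping and in the fact that the simulator can build $e_{\st s}$ without $m_{\st 1-s}$; hardness of DL is subsumed by CDH and is what the underlying Naor--Pinkas OT~\cite{Efficient-OT-Naor} relies on.
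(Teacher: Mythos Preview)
Your proposal follows the same simulation-based structure as the paper's proof---one simulator per corruptible party---and reaches the same conclusion. Two points of comparison are worth noting. First, for $\p_{\st 1}$ and \se you argue \emph{perfect} (information-theoretic) indistinguishability by exploiting the algebraic constraints $\delta_{\st 0}\cdot\delta_{\st 1}=C$ and $\beta_{\st 0}\cdot\beta_{\st 1}=C$, whereas the paper's $\p_{\st 1}$ simulator samples $\delta'_{\st 0},\delta'_{\st 1}$ independently and then appeals to DL; your version is tighter and in fact sidesteps a small slip in the paper's $\p_{\st 1}$ simulation (an honest $\p_{\st 1}$ could test whether $\delta'_{\st 0}\cdot\delta'_{\st 1}=C'$ and distinguish). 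Second, for \re the paper asserts that ``due to CDH'' the value $\beta_{\st 1-s}^{\st y_{1-s}}$ cannot be computed and then invokes the random oracle, while you make the same argument explicit by giving the random-oracle extraction reduction that embeds a CDH instance as $(C,e_{\st 1-s,0})=(g^{\st a},g^{\st b})$ and recovers $g^{\st ab}$ from the critical query; this is not a different route, just the paper's argument carried out in full.
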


%Appendices \ref{sec::DQ-OT-proof} and \ref{sec::DQ-OT-proof-of-correctness} present the proof of Theorem \ref{theo::DQ-OT-sec} and proof of DQ-OT's correctness respectively. 

% !TEX root =main.tex

%%%%%%%%%%%% Proof %%%%%%%%%%%%%%%%%

\subsection{DQ-OT's Security Proof}\label{sec::DQ-OT-proof}
Below, we prove  DQ-OT's security, i.e., Theorem \ref{theo::DQ-OT-sec}.

\begin{proof}
We consider the case where each party is corrupt, at a time.

\subsubsection{Corrupt Receiver \re.} In the real execution, \re's view is:  
$\view_{\re}^{\st DQ\text{-}OT}\big(m_{\st 0}, m_{\st 1}, \empt,$ $  \empt, s\big) = \{r_{\st\re}, g, C, p, e_{\st 0}, e_{\st 1}, m_{\st s} \}$, where $g$ is a random generator, $C=g^{\st a}$ is a random public parameter, $a$ is a random value, $p$ is a large random prime number, and $r_{\st \re}$ is the outcome of the internal random coin of  \re and is used to generate $(r_{\st 1}, r_{\st 2})$.  
 Below, we construct an idea-model simulator  $\simm_{\st\re}$ which receives $(s, m_{\st s})$ from \re. 
 
 \begin{enumerate}
 \item initiates an empty view and appends uniformly random coin $r'_{\st\re}$ to it, where $r'_{\st\re}$ will be used to generate \re-side randomness. 
It chooses a large random prime number $p$ and a random generator $g$.

%%%%%

 %
 \item sets $(e'_{\st 0}, e'_{\st 1})$ as follows: 
 \begin{itemize}
 \item splits $s$ into two shares: $\ses(1^{\st\lambda}, s, 2, 2)\rightarrow (s'_{\st 1}, s'_{\st 2})$.
 \item picks uniformly random values: $C', r'_{\st 1}, r'_{\st 2},y'_{\st 0}, y'_{\st 1}\stackrel{\st\$}\leftarrow\mathbb{Z}_{\st p}$.
 %
 %\item picks two additional uniformly random values: $y'_1, y'_2\stackrel{\$}\leftarrow\mathbb{Z}_{p-1}$.
 %
 \item sets $\beta'_s=g^{\st x}$, where $x$ is set as follows: 
 \begin{itemize}
 \item[$*$] $x = r'_{\st 2} + r'_{\st 1}$, if $(s = s_{\st 1} = s_{\st 2} = 0)$ or $(s = s_{\st 1} = 1\wedge s_{\st 2} = 0)$.
 \item[$*$] $x = r'_{\st 2} - r'_{\st 1}$, if $(s =0 \wedge s_{\st 1} = s_{\st 2} = 1)$ or $(s = s_{\st 2} = 1\wedge s_{\st 1} = 0)$.
 \end{itemize}
 \item picks a uniformly random value $u\stackrel{\st\$}\leftarrow\mathbb{Z}_{p}$ and then sets $e'_{\st s} = (g^{\st y'_s}, \h(\beta'^{\st y'_s}_{\st s})\oplus m_{\st s})$ and $e'_{\st 1-s} = (g^{\st y'_{\st 1-s}}, u)$. 
 \end{itemize}
 \item  appends $(g, C', p, r'_{\st 1}, r'_{\st 2},  e'_{\st 0}, e'_{\st 1}, m_{\st s})$ to the view and outputs the view.
 \end{enumerate}
 
 Now we discuss why the two views in the ideal and real models are indistinguishable. Since we are in the semi-honest model, the adversary picks its randomness according to the protocol description; thus, $r_{\st \re}$ and $r'_{\st \re}$ model have identical distributions, so do values $(r_{\st 1}, r_{\st 2})$ in the real model and $(r'_{\st 1}, r'_{\st 2})$ in the ideal model. Also, $C$ and $C'$ have been picked uniformly at random and have identical distributions. The same applies to values $g$ and $p$ in the real and ideal models.

 Next, we argue that  $e_{\st 1-s}$ in the real model and $e'_{\st 1-s}$ in the ideal model are indistinguishable. In the real model, it holds that $e_{\st 1-s}=(g^{\st y_{\st 1-s}}, \h(\beta^{y_{1-s}}_{\st 1-s})\oplus m_{\st 1-s})$, where $\beta^{\st y_{\st 1-s}}_{\st 1-s}=\frac{C}{g^{\st x}}=g^{\st a-x}$. Since $y_{\st 1-s}$ in the real model and $y'_{\st 1-s}$ in the ideal model have been picked uniformly at random and unknown to the adversary/distinguisher, $g^{\st y_{\st 1-s}}$ and $g^{\st y'_{\st 1-s}}$  have identical distributions.

 Furthermore, in the real model, given $C=g^{\st a}$, due to the DL problem, $a$ cannot be computed by a PPT adversary. Also, due to CDH assumption, \re cannot compute  $\beta^{y_{1-s}}_{1-s}$ (i.e., the input of $\h(.)$), given $g^{\st y_{\st 1-s}}$ and $g^{\st a-x}$. We also know that $\h(.)$ is modeled as a random oracle and its output is indistinguishable from a random value. Thus,  $\h(\beta^{\st y_{\st 1-s}}_{\st 1-s})\oplus m_{\st 1-s}$ in the real model and $u$ in the ideal model are indistinguishable. This means that $e_{\st 1-s}$ and $e'_{\st 1-s}$ are indistinguishable too, due to DL, CDH, and RO assumptions. Also, in the real and idea models, $e_{\st s}$ and $e'_{\st s}$ have been defined over $\mathbb{Z}_{\st p}$ and their decryption always result in the same value $m_{\st s}$. Thus, $e_{\st s}$ and $e'_{\st s}$ have identical distributions too. 
Also, $m_{\st s}$ has identical distribution in both models. 

We conclude that the two views are computationally indistinguishable, i.e., Relation \ref{equ::reciever-sim-} (in Section \ref{sec::sec-def}) holds.

\subsubsection{Corrupt Sender \se.} In the real model, \se's view is: 
$\view_{\se}^{\st DQ\text{-}OT}\big((m_{\st 0}, m_{\st 1}), \empt,  \empt, $ $s\big)= \{r_{\st\se}, C, \beta_{\st 0}, \beta_{\st 1}\}$, where $r_{\st \se}$ is the outcome of the internal random coin of  \se.  Next, we construct an idea-model simulator  $\simm_{\st\se}$ which receives $(m_{\st 0}, m_{\st 1})$ from \se. 

\begin{enumerate}
\item initiates an empty view and appends uniformly random coin $r'_{\st\se}$ to it, where $r'_{\st\se}$ will be used to generate random values for \se.
%
%\item picks two random values $r'_0, r'_1\stackrel{\$}\leftarrow\mathbb{Z}_{p-1}$.

\item picks random values $C', r'\stackrel{\st\$}\leftarrow\mathbb{Z}_{\st p}$.

\item sets $\beta'_{\st 0}=g^{\st r'}$ and $\beta'_{\st 1}=\frac{C'}{g^{\st r'}}$. 

\item appends $\beta'_{\st 0}$ and $\beta'_{\st 1}$ to the view and outputs the view. 

\end{enumerate}

Next, we explain why the two views in the ideal and real models are indistinguishable. Recall, in the real model, $(\beta_{\st s}, \beta_{\st 1-s})$ have the following form: $\beta_{s}=g^{\st x}$ and  $\beta_{\st 1-s}=g^{\st a-x}$, where $a=DL(C)$ and $C=g^{\st a}$. In this model, because $a$ and $x$ have been picked uniformly at random and unknown to the adversary, due to DL assumption,  $\beta_{\st s}$ and $\beta_{\st 1-s}$ have identical distributions and are indistinguishable. In the ideal model, $r'$ has been picked uniformly at random and we know that $a'$ in $C' = g^{\st a'}$ is a uniformly random value, unknown to the adversary; therefore, due to DL assumption, $\beta'_{\st 0}$ and $\beta'_{\st 1}$ have identical distributions too. Moreover, values $\beta_{\st s}, \beta_{\st 1-s}, \beta'_{\st 0}$, and $\beta'_{\st 1}$  have been defined over the same field, $\mathbb{Z}_{\st p}$. Thus, they have identical distributions and are indistinguishable. 

Therefore, the two views are computationally indistinguishable, i.e., Relation \ref{equ::sender-sim-} (in Section \ref{sec::sec-def}) holds.

\subsubsection{Corrupt Server $\p_{\st 2}$.}  
 
 In the real execution, $\p_{\st 2}$'s view is: 
$\view_{\p_2}^{\st DQ\text{-}OT}\big((m_{\st 0}, m_{\st 1}), \empt,$ $ \empt, s\big)=\{g, C, p, s_{\st 2}, r_{\st 2}\}$. Below, we show how an ideal-model simulator $\simm_{\st \p_{\st 2}}$ works. 

\begin{enumerate}
\item initiates an empty view. It selects a random generator $g$ and a large random prime number $p$.
\item picks two uniformly random values $s'_{\st 2}\stackrel{\st\$}\leftarrow\mathbb U$ and $C', r'_{\st 2}\stackrel{\st\$}\leftarrow\mathbb{Z}_{\st p}$, where $\mathbb{U}$ is the output range of $\ses(.)$. 
\item appends $s'_{\st 2}, C'$ and $r'_{\st 2}$ to the view and outputs the view. 
\end{enumerate}

Next, we explain why the views in the ideal and real models are indistinguishable. 
Since values $g$ and $p$ have been picked uniformly at random in both models, they have identical distributions in the real and ideal models. 
Since $r_{\st 2}$ and $r'_{\st 2}$ have been picked uniformly at random from  $\mathbb{Z}_{\st p}$, they have identical distributions. Also, due to the security of $\ses(.)$ each share $s_{\st 2}$ is indistinguishable from a random value $s'_{\st 2}$, where $s'_{\st 2}\in \mathbb{U}$. Also, both $C$ and $C'$ have been picked uniformly at random from $\mathbb{Z}_{\st p}$; therefore, they have identical distribution. 
Thus, the two views are computationally indistinguishable, i.e., Relation \ref{equ::server-sim-} w.r.t. $\p_{\st 2}$ (in Section \ref{sec::sec-def}) holds.

\subsubsection{Corrupt Server $\p_{\st 1}$.}

In the real execution, $\p_{\st 1}$'s view is:  
$\view_{\st\p_1}^{\st DQ\text{-}OT}\big((m_{\st 0}, m_{\st 1}), \empt,$ $ \empt, s\big)=\{ g, C, p, s_{\st 1}, r_{\st 1}, \delta_{\st 0}, \delta_{\st 1}\}$. Ideal-model $\simm_{\st\p_{1}}$ works as follows.

\begin{enumerate}
\item initiates an empty view. It chooses a random generator $g$ and a large random prime number $p$. 
\item picks two random values $\delta'_{\st 0}, \delta'_{\st 1}\stackrel{\st\$}\leftarrow\mathbb{Z}_{\st p}$. 
\item picks two uniformly random values $s'_{\st 1}\stackrel{\st\$}\leftarrow\mathbb U$ and \\  $C', r'_{\st 1}\stackrel{\st\$}\leftarrow\mathbb{Z}_{\st p}$, where $\mathbb{U}$ is the output range of $\ses(.)$. 
\item appends  $s'_{\st 1}, C', r'_{\st 1}, \delta'_{\st 0}, \delta'_{\st 1}$ to the view and outputs the view. 
\end{enumerate}

Now, we explain why the views in the ideal and real models are indistinguishable. Values $g$ and $p$ have been picked uniformly at random in both models. Hence, $g$ and $p$ in the real and ideal models have identical distributions (pair-wise).

Recall, in the real model, $\p_{\st 1}$ receives $\delta_{\st s_{\st 2}}=g^{\st r_{\st 2}}$ and  $\delta_{\st 1- s_2}=g^{\st a- r_2}$ from  $\p_{\st 2}$. 
Since $a$ and $r_{\st 2}$ have been picked uniformly at random and unknown to the adversary due to DL assumption,   $\delta_{\st s_2}$ and $\delta_{\st 1-{s_2}}$ (or $\delta_{\st 0}$ and $\delta_{\st 1}$) have identical distributions and are indistinguishable from random values (of the same field). 

In the ideal model, $\delta'_{\st 0}$ and  $\delta'_{\st 1}$ have been picked uniformly at random; therefore, they have identical distributions too. Moreover,  $\delta_{\st s}, \delta_{\st 1-s}, \delta'_{\st 0}$, and $\delta'_{\st 1}$  have been defined over the same field, $\mathbb{Z}_{\st p}$. So, they have identical distributions and are indistinguishable. 
Due to the security of $\ses(.)$ each share $s_{\st 1}$ is indistinguishable from a random value $s'_{\st 1}$, where $s'_{\st 1}\in \mathbb{U}$. Also, $(r_{\st 1}, C)$ and $(r'_{\st 1}, C')$ have identical distributions, as they are picked uniformly at random from $\mathbb{Z}_{\st p}$.

Hence, the two views are computationally indistinguishable, i.e., Relation \ref{equ::server-sim-} w.r.t. $\p_{\st 1}$ (in Section \ref{sec::sec-def}) holds.
\hfill\(\Box\) 
\end{proof}

% !TEX root =main.tex

\subsection{Proof of Correctness}\label{sec::DQ-OT-proof-of-correctness}

In this section, we discuss why the correctness of DQ-OT always holds. Recall, in the original OT of Naor and Pinkas \cite{Efficient-OT-Naor}, the random value $a$ (i.e., the discrete logarithm of random value $C$) is inserted by receiver \re into the query $\beta_{\st 1-s}$ whose index (i.e., $1-s$) is not interesting to \re while the other query $\beta_{\st s}$ is free from value $a$. As we will explain below, in our DQ-OT, the same applies to the final queries that are sent to \se.  
Briefly, in DQ-OT, when:

\begin{itemize}

\item[$\bullet$]  $s=s_{\st 1}\oplus s_{\st 2}=1$ (i.e., when $s_{\st 1}\neq s_{\st 2}$), then $a$ will always appear in $\beta_{\st 1-s}=\beta_{\st 0}$; however, $a$ will not appear in $\beta_{\st 1}$. 

\item[$\bullet$] $s=s_{\st 1}\oplus s_{\st 2} = 0$ (i.e., when $s_{\st 1}=s_{\st 2}$), then $a$ will always appear in $\beta_{\st 1-s}=\beta_{\st 1}$; but $a$ will not appear in $\beta_{\st 0}$.

\end{itemize}

This is evident in Table~\ref{tab:analysis} which shows what
$\delta_{\st i}$ and $\beta_{\st j}$ are for the different values of $s_{\st 1}$ and $s_{\st 2}$. Therefore, the query pair ($\beta_{\st 0}, \beta_{\st 1}$) has the same structure as it has in \cite{Efficient-OT-Naor}. 

%\vspace{-1mm}

\begin{table}[h]
%\vspace{-1mm}
\begin{center}
\scalebox{1}{
  \begin{tabular}{|l||p{4.5cm}|p{4.5cm}|}
\hline
    &\multicolumn{1}{c|}{$s_{\st 2}=0$}&\multicolumn{1}{c|}{$s_{\st 2}=1$}\\
\hline
\hline
&\cellcolor{gray!20}$\delta_{\st 0} = g^{\st r_2}$, \hspace{3mm} $\delta_{\st 1} = g^{\st a-r_2}$&\cellcolor{gray!20} $\delta_{\st 0}=g^{\st a-r_2}$,\hspace{3mm} $\delta_{\st 1}=g^{\st r_2}$\\
\multirow{-2}{*}{$s_{\st 1}=0$}&\cellcolor{gray!20}$\beta_{\st 0}=g^{\st r_2+r_1}$, \hspace{3mm} $\beta_{\st 1}=g^{\st a-r_2-r_1}$&\cellcolor{gray!20}$\beta_{\st 0}=g^{\st a-r_2+r_1}$, \hspace{3mm} $\beta_{\st 1}=g^{\st r_2-r_1}$\\
\hline
&\cellcolor{gray!20}$\delta_{\st 0}=g^{\st r_2}$, \hspace{3mm} $\delta_{\st 1}=g^{\st a-r_2}$&\cellcolor{gray!20}$\delta_{\st 0}=g^{\st a-r_2}$, \hspace{3mm} $\delta_{\st 1}=g^{\st r_2}$\\
\multirow{-2}{*}{$s_{\st 1}=1$}&\cellcolor{gray!20}$\beta_{\st 0}=g^{\st a-r_2-r_1}$, \hspace{3mm} $\beta_{\st 1}=g^{\st r_2+r_1}$&\cellcolor{gray!20}$\beta_{\st 0}=g^{\st r_2-r_1}$, \hspace{3mm} $\beta_{\st 1}=g^{\st a-r_2+r_1}$\\
\hline
\end{tabular}
}
\end{center}
\caption{$\delta_{\st i}$ and $\beta_{\st j}$ are for the different values of $s_{\st 1}$ and $s_{\st 2}$.
We express each value as a power of $g$.}
\label{tab:analysis}
%\vspace{-3mm}
\end{table}

Next, we show why, in DQ-OT, \re can extract the correct message, i.e., $m_s$. Given \se's reply pair $(e_{\st 0}, e_{\st 1})$ and its original index $s$, \re knows which element to pick from the response pair, i.e., it picks $e_{\st s}$.  
%
%\vspace{-1mm}
Moreover, given $g^{\st y_s}\in e_{\st s}$, \re can  recompute $\h(g^{\st y_s})^{\st x}$, as it knows the value of $s, s_{\st 1}$, and $s_{\st 2}$.  Specifically, as Table~\ref{tab:analysis} indicates, when: 

\begin{itemize}
\item[$\bullet$]  $\overbrace{(s= s_{\st 1}=s_{\st 2}=0)}^{\st \text{Case 
 1}}$ or $\overbrace{(s=s_{\st 1}=1 \wedge s_{\st 2}=0)}^{\st \text{Case 
 2}}$, then \re can set $x = r_{\st 2} + r_{\st 1}$.

\begin{itemize}

 \item In Case 1, it holds $\h((g^{\st y_0})^{\st x})= \h((g^{\st y_0})^{\st r_2 +r_1})=q$. Also, $e_{\st 0}=\h(\beta_{\st 0}^{\st y_0})\oplus m_{\st 0}=\h((g^{\st r_2+r_1})^{\st y_0})\oplus m_{\st 0}$. Thus, $q\oplus e_{\st 0}=m_{\st 0}$.
 
 \item In Case 2, it holds $\h((g^{\st y_1})^{\st x})= \h((g^{\st y_1})^{\st r_2 +r_1})=q$. Moreover, $e_{\st 1}=\h(\beta_{\st 1}^{\st y_1})\oplus m_{\st 1}=\h((g^{\st r_2+r_1})^{\st y_1})\oplus m_{\st 1}$. Hence, $q\oplus e_{\st 1}=m_{\st 1}$.

 \end{itemize}

 \vspace{.6mm} 

\item[$\bullet$]  $\overbrace{(s=0 \wedge s_{\st 1}=s_{\st 2}=1)}^{\st \text{Case 
 3}}$ or $\overbrace{(s=s_{\st 2}=1 \wedge s_{\st 1}=0)}^{\st \text{Case 
 4}}$, then \re can set $x = r_{\st 2} - r_{\st 1}$.

\begin{itemize}

 \item In Case 3, it holds $\h((g^{\st y_0})^{\st x})= \h((g^{\st y_0})^{\st r_2 - r_1})=q$. On the other hand, $e_{\st 0}=\h(\beta_{\st 0}^{\st y_0})\oplus m_{\st 0}=\h((g^{\st r_2 - r_1})^{\st y_0})\oplus m_{\st 0}$. Therefore, $q\oplus e_{\st 0}=m_{\st 0}$.
 
 \item In Case 4, it holds $\h((g^{\st y_1})^{\st x})= \h((g^{\st y_1})^{\st r_2 -r_1})=q$. Also, $e_{\st 1}=\h(\beta_{\st 1}^{\st y_1})\oplus m_{\st 1}=\h((g^{\st r_2-r_1})^{\st y_1})\oplus m_{\st 1}$. Hence, $q\oplus e_{\st 1}=m_{\st 1}$.
 
 \end{itemize}

\end{itemize}

We conclude that DQ-OT always allows honest \re to recover the message of its interest, i.e., $m_s$.

%%%%%%%%%%%%%%%%%%%%%%%%%%%%%%%%

%\vspace{-3mm}
\section{Delegated-Unknown-Query OT}\label{sec::DUQ-OT}

%The database $D$ will, similarly to Bellare-Micali, send $R$ the following:
%$\alpha_i=g^{y_i}$, where $y_i\in_R Z_q$ ($i\in {0,1}$)
%and $c_0=m_0\oplus \beta_0^{y_0}$ and $c_1=m_1\oplus \beta_1^{y_1}$, where
%$m_i$ is the plaintext in the database.
%
%When $P_i$ privately give $s_i,r_i$ ($i\in \{1,2\}$) to
%$R$, then $R$ can decrypt in a straightforward way.
%(Note that $s_i$ might have been chosen by $R$, in
%which case only $r_i$ is needed.)
%We now motivate an alternative approach to $R$ decrypting.

In certain cases, the receiver itself may not know the value of query $s$. Instead, the query is issued by a third-party query issuer (\tp).  %An example in the context of banking is that customers \re do not know whether they are premium ones or not. %In that case only $r_2$ needs to be revealed to $R$; as we explain in Section \ref{sec:Alternative-decryption}.
In this section, we present a new variant of \dqot, called Delegated-Unknown-Query 1-out-of-2 OT (\duqot). It enables \tp to issue the query while (a) preserving the security of  \dqot and (b) preserving the privacy of query $s$ from \re.

%
%\subsection{Motivation}\label{sec::elegated-Unknown-Query-OT-motivation}
%
%In the context of financial institutions, there exist specific details about customers that the financial institutions must not disclose to the customers. These details are generally related to security measures, proprietary systems, and certain operational aspects, such as (a) fraud prevention measures, including a binary flag that determines whether a certain customer is deemed suspicious \cite{arora2023privacy}, or (b) proprietary banking strategies, developed by banks for risk management, investment, and financial planning related to each client. 
%
%
%Therefore, in certain cases, the result that a client/receiver obtains for its request depends on the  private flag/query $s$ that a financial institution (as $\tp$) provides to its financial advisor who is directly dealing with the client, while the client itself is not aware of the value of $s$.  
%\vspace{-4mm}
\subsection{Security Definition}\label{sec::DUQ-OT-definition}

The functionality that \duqot computes takes as input (a) a pair of messages $(m_{\st 0}, m_{\st 1})$ from \se, (b) empty strings \empt from $\p_{\st 1}$, (c)  \empt from $\p_{\st 2}$, (d)   \empt from $\re$, and (e) the index $s$ (where $s\in \{0, 1\}$) from \tp. It outputs an empty string $\empt$ to  \se, \tp, $\p_{\st 1}$, and $\p_{\st 2}$, and outputs the message with index $s$, i.e., $m_{\st s}$, to \re. More formally, we define the functionality as: 
 $\mathcal{F}_{\scriptscriptstyle\duqot}:\big((m_{\st 0}, m_{\st 1}), \empt, \empt, \empt, s\big) \rightarrow (\empt, \empt, \empt, m_{\st s}, \empt)$. 
Next, we present a formal definition of \duqot.

\begin{definition}[\duqot]\label{def::DUQ-OT-sec-def} Let $\mathcal{F}_{\scriptscriptstyle\duqot}$ be the functionality defined above. We assert that protocol $\Gamma$ realizes $\mathcal{F}_{\scriptscriptstyle\duqot}$ in the presence of passive adversary \se, \re, $\p_{\st 1}$, or $\p_{\st 2}$, if for  every PPT adversary \adv
in the real model, there exists a non-uniform PPT simulator \simm  in
the ideal model, such that:

%{\small{
\begin{equation}\label{equ::DUQ-OT-sender-sim-}
\begin{split}
\Big\{\simm_{\st\se}\big((m_{\st 0}, m_{\st 1}), \empt\big)\Big\}_{\st m_{\st 0}, m_{\st 1}, s}\stackrel{c}{\equiv}  \Big\{\view_{\st \se}^{\st \Gamma}\big((m_{\st 0}, m_{\st 1}), \empt,  \empt, \empt, s\big) \Big\}_{\st m_{\st 0}, m_{\st 1}, s}
\end{split}
\end{equation}
\begin{equation}\label{equ::DUQ-OT-server-sim-}
\begin{split}
\Big\{\simm_{\st\p_i}(\empt, \empt)\Big\}_{\st m_{\st 0}, m_{\st 1}, s}\stackrel{c}{\equiv} 
\Big\{\view_{\st\p_i}^{\st \Gamma}\big((m_{\st 0}, m_{\st 1}), \empt,  \empt, \empt, s\big) \Big\}_{\st m_0, m_1, s}
\end{split}
\end{equation}
\begin{equation}\label{equ::DUQ-OT-t-sim-}
\begin{split}
\Big\{\simm_{\st\tp}(s, \empt)\Big\}_{\st m_0, m_1, s}\stackrel{c}{\equiv}  \Big\{\view_{\st\tp}^{\st \Gamma}\big((m_{\st 0}, m_{\st 1}), \empt,  \empt, \empt, s\big) \Big\}_{\st m_0, m_1, s}
\end{split}
\end{equation}
\begin{equation}\label{equ::DUQ-OT-reciever-sim-}
\begin{split}
\Big\{\simm_{\st\re}\Big(\empt, \mathcal{F}_{\scriptscriptstyle\duqot}\big((m_{\st 0}, m_{\st 1}), \empt,  \empt,\empt, s\big)\Big)\Big\}_{\st m_0, m_1, s}\stackrel{c}{\equiv} \Big\{\view_{\st\re}^{\st \Gamma}\big((m_{\st 0}, m_{\st 1}), \empt,  \empt, \empt, s\big) \Big\}_{\st m_0, m_1, s}
\end{split}
\end{equation}

%}}
for all $i$,  $i\in \{1,2\}$. Since \duqot is a variant of \dqot, it also supports efficiency and \spc, as discussed in Section \ref{sec::sec-def}.

\end{definition}

%\hspace{-4.5mm}

\subsection{Protocol}\label{sec::DUQ-OT-protocol}
In this section, we present DUQ-OT that realizes \duqot.  

\subsubsection{Main Challenge to Overcome.} 
One of the primary differences between DUQ-OT and previous OTs in the literature (and DQ-OT) is that in DUQ-OT, \re does not know the secret index $s$. The knowledge of $s$ would help \re pick the suitable element from \se's response; for instance, in the DQ-OT, it picks  $e_s$ from $(e_{\st 0}, e_{\st 1})$. Then, it can extract the message from the chosen element. In DUQ-OT, to enable \re to extract the desirable message from \se's response without the knowledge of $s$, we rely on the following observation and technique. 
We know that (in any OT) after decrypting $e_{\st s-1}$, \re would obtain a value indistinguishable from a random value (otherwise, it would learn extra information about $m_{\st s-1}$). 
Therefore, if \se imposes a certain publicly known structure to messages $(m_{\st 0}, m_{\st 1})$, then after decrypting \se's response, only $m_{\st s}$ would preserve the same structure. 
In DUQ-OT, \se imposes a publicly known structure to $(m_{\st 0}, m_{\st 1})$  and then computes the response. Given the response, \re tries to decrypt \emph{every} message it received from \se and accepts only the result that has the structure.

\subsubsection{An Overview.} 
DUQ-OT operates as follows. First, \re picks two random values and sends each to a $\p_{\st i}$. Also, \tp splits the secret index $s$ into two shares and sends each share to a $\p_{\st i}$. Moreover, \tp selects a random value $r_{\st 3}$ and sends it to \re and \se. Given the messages receives from \re and \tp, each $\p_{\st i}$ generates queries the same way they do in DQ-OT.  
Given the final query pair and $r_{\st 3}$,  \se first appends $r_{\st 3}$ to $m_{\st 0}$ and $m_{\st 1}$ and then computes the response the same way it does in DQ-OT, with the difference that it also randomly permutes the elements of the response pair.  Given the response pair and $r_{\st 3}$, \re decrypts each element in the pair and accepts the result that contains $r_{\st 3}$. 
Figure \ref{fig::DQ-OT-with-unknown-query} presents DUQ-OT in more detail. 
%%%%%%%%%%%%

\begin{figure}[!htbp]
\setlength{\fboxsep}{1pt}
\begin{center}
    \begin{tcolorbox}[enhanced, 
    drop fuzzy shadow southwest,
    colframe=black,colback=white]
%%%%%%%%%%%%
\begin{enumerate}%[leftmargin=4.5mm]

\item \underline{\textit{$\se$-side Initialization:}} 
$\mathtt{Init}(1^{\st \lambda})\rightarrow pk$
\begin{enumerate}

\item chooses a sufficiently large prime number $p$.

\item selects random element
$C \stackrel{\st \$}\leftarrow \mathbb{Z}_p$ and generator $g$.
\item publishes $pk=(C, p, g)$. 

\end{enumerate}

\item \underline{\textit{\re-side Delegation:}} $\mathtt{\re.Request}( pk)\hspace{-.6mm}\rightarrow\hspace{-.6mm} req=(req_{\st 1}, req_{\st 2})$
\begin{enumerate}

\item picks two uniformly random values: 
$r_{\st 1}, r_{\st 2} \stackrel{\st\$}\leftarrow\mathbb{Z}_{\st p}$.

\item sends $req_{\st 1}= r_{\st 1}$ to $\p_{\st 1}$ and $req_{\st 2}=r_{\st 2}$ to $\p_{\st 2}$.

\end{enumerate}

\item \underline{\textit{$\tp$-side Query Generation:}}
$\mathtt{\tp.Request}(1^{\st \lambda}, s, pk)\hspace{-.7mm}\rightarrow\hspace{-.7mm} (req'_{\st 1}, req'_{\st 2}, sp_{\st \se})$

\begin{enumerate}

\item splits  the private index $s$ into two shares $(s_{\st 1}, s_{\st 2})$ by calling  $\ses(1^{\st\lambda}, s, 2, 2)\rightarrow (s_{\st 1}, s_{\st 2})$.

\item picks a uniformly random value: $r_{\st 3} \stackrel{\st\$}\leftarrow\{0,1\}^{\st\lambda}$.

\item sends $req'_{\st 1} =s_{\st 1}$ to  $\p_{\st 1}$, $req'_{\st 2}=s_{\st 2}$ to  $\p_{\st 2}$. It also sends secret parameter $sp_{\st \se}=r_{\st 3}$ to \se and $sp_{\st \re} = (req'_{\st 2}, sp_{\st \se})$ to \re.

\end{enumerate}

\item \underline{\textit{$\p_{\st 2}$-side Query Generation:}}
$\mathtt{\p_{\st 2}.GenQuery}(req_{\st 2}, req'_{\st 2}, pk)\rightarrow q_{\st 2}$

\begin{enumerate}

\item computes a pair of partial queries: 

 $$\delta_{\st s_2}= g^{\st r_2},\ \  \delta_{\st 1-s_2} = \frac{C}{g^{\st r_2}}$$
  
\item sends $q_{\st 2}=(\delta_{\st 0}, \delta_{\st 1})$ to  $\p_{\st 1}$. 

\end{enumerate}

\item\underline{\textit{$\p_{\st 1}$-side Query Generation:}}
$\mathtt{\p_{\st 1}.GenQuery}(req_{\st 1},req'_{\st 1}, q_{\st 2},pk)\rightarrow q_{\st 1}$

\begin{enumerate}

\item computes a pair of final queries as:  
%
%\vspace{-4mm}

$$\beta_{\st s_1}=\delta_{\st 0}\cdot g^{\st r_1},\ \ \beta_{\st 1-s_1}=\frac{\delta_{\st 1}} {g^{\st r_1}}$$

\item sends $q_{\st 1} =(\beta_{\st 0}, \beta_{\st 1})$ to  $\se$.

\end{enumerate}

\item\underline{\textit{\se-side Response Generation:}} 
$\mathtt{GenRes}(m_{\st 0}, m_{\st 1}, pk, q_{\st 1}, sp_{\st \se})\rightarrow res$

\begin{enumerate}

\item aborts if  $C \neq \beta_{\st 0}\cdot \beta_{\st 1}$.
\item picks two uniformly random values: 
$y_{\st 0}, y_{\st 1}  \stackrel{\st\$}\leftarrow\mathbb{Z}_{\st p}$.
    
\item computes a response pair $(e_{\st 0}, e_{\st 1})$ as follows:
%
%\vspace{-3mm}

$$e_{\st 0} := (e_{\st 0,0}, e_{\st 0,1}) = (g^{\st y_0}, \g(\beta_{\st 0}^{\st y_0}) \oplus (m_{\st 0}||r_{\st 3}))$$
$$e_{\st 1} := (e_{\st 1,0}, e_{\st 1,1}) = (g^{\st y_1}, \g(\beta_{\st 1}^{\st y_1}) \oplus (m_{\st 1}||r_{\st 3}))$$

\item randomly permutes the elements of the pair $(e_{\st 0}, e_{\st 1})$ as follows: $\pi(e_{\st 0}, e_{\st 1})\rightarrow ({e}'_{\st 0}, {e}'_{\st 1})$.

\item sends $res=(e'_{\st 0}, e'_{\st 1})$ to \re. 

\end{enumerate}

\item\underline{\textit{\re-side Message Extraction:}} 
$\mathtt{Retreive}(res, req, pk, sp_{\st \re})\hspace{-.6mm}\rightarrow\hspace{-.8mm} m_{\st s}$

\begin{enumerate}

\item sets $x=r_{\st 2}+r_{\st 1}\cdot(-1)^{\st s_2}$. 

\item retrieves message $m_{\st s}$ as follows. 
 $\forall i, 0\leq i\leq1:$

\begin{enumerate}

\item sets $y=\g(({e}'_{\st i, 0})^{\st x})\oplus {e}'_{\st i, 1}$.

\item  calls $\parse(\gamma, y)\rightarrow (\ux, \uy)$.

\item sets  $m_{\st s}=\ux$, if $\uy=r_{\st 3}$.

\end{enumerate}

\end{enumerate}

\end{enumerate}
%%%%%%%%%%%%%%%
\end{tcolorbox}
\end{center}
%\vspace{-4.5mm}
    \caption{DUQ-OT: Our $1$-out-of-$2$ OT that supports query delegation while preserving the privacy of query from \re. %In the protocol, $\g(.)$ is a hash function, $\pi(.)$ is a random permutation, and $\$$ denotes picking a value uniformly at random.  
    }
    \label{fig::DQ-OT-with-unknown-query}
    %\vspace{-3mm}
\end{figure}
%%%%%%%%%%%%%%%%%%%%%%%%%%%%

\begin{theorem}\label{theo::DUQ-OT-sec}
Let $\mathcal{F}_{\scriptscriptstyle\duqot}$ be the functionality defined in Section \ref{sec::DUQ-OT-definition}. If  
DL, CDH, and RO assumptions hold and random permutation $\pi(.)$  is secure, then DUQ-OT (presented in Figure \ref{fig::DQ-OT-with-unknown-query}) securely computes $\mathcal{F}_{\scriptscriptstyle\duqot}$ in the presence of semi-honest adversaries, 
%
% (a) semi-honest receiver \re, honest  \se,  \tp, $\p_{\st 1}$, and $\p_{\st 2}$, (b) semi-honest \tp and  honest \re \se, $\p_{\st 1}$, and $\p_{\st 2}$, (c) semi-honest \se, honest \tp, \re, $\p_{\st 1}$, and $\p_{\st 2}$, or (d) semi-honest $\p_{\st i}$ (where $i\in \{1, 2\}$), honest \se, \tp, and \re, 
%
w.r.t. Definition \ref{def::DUQ-OT-sec-def}. 

\end{theorem}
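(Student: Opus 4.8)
The plan is to establish the four simulation relations of Definition~\ref{def::DUQ-OT-sec-def} one corrupted party at a time, reusing the structure of the DQ-OT proof of Section~\ref{sec::DQ-OT-proof} wherever possible. Three of the cases are light adaptations of that proof. For corrupt $\p_{\st 1}$ and corrupt $\p_{\st 2}$ the view is the same as in DQ-OT except that the share $s_{\st i}$ and the blinder $r_{\st i}$ now arrive from two different parties (\tp and \re); since a $2$-out-of-$2$ XOR share is still uniform and the $\delta$-pair still has the form $(g^{r_{\st 2}}, g^{a-r_{\st 2}})$ with $a = DL(C)$, the DL-based argument carries over verbatim, and $\simm_{\st \p_i}$ samples $s_{\st i}', r_{\st i}', C'$ uniformly (and, for $i=1$, $\delta_{\st 0}', \delta_{\st 1}'$ uniformly over $\mathbb{Z}_{\st p}$). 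For corrupt \tp the case is immediate: \tp receives no protocol message, so its view is just its input $s$ together with the coins that produced $s_{\st 1}, s_{\st 2}, r_{\st 3}$, and $\simm_{\st \tp}$, on input $(s,\empt)$, outputs $s$ with fresh uniform coins. For corrupt \se the view is exactly that of DQ-OT plus the single value $r_{\st 3}$ received from \tp; since $r_{\st 3}$ is uniform and independent of everything else in \se's view, $\simm_{\st \se}$ runs the DQ-OT sender-simulator and additionally appends a fresh uniform $r_{\st 3}'$. In all three cases the indistinguishability arguments are those already given in Section~\ref{sec::DQ-OT-proof}, with only the extra independent uniform component to match.

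The substantive case is a corrupt receiver \re, which is where the ``unknown query'' mechanism must be shown not to leak $s$. Here \re's real view is $\{r_{\st \re}, g, C, p, s_{\st 2}, r_{\st 3}, e_{\st 0}', e_{\st 1}'\}$: \re learns the share $s_{\st 2}$ and the tag $r_{\st 3}$ (both forwarded inside $sp_{\st \re}$) but not $s_{\st 1}$, hence not $s$. On input $(\empt, m_{\st s})$, the simulator $\simm_{\st \re}$ will (i) pick $g, p$, a uniform $C'$, uniform $r_{\st 1}', r_{\st 2}', \hat y, \tilde y$, a uniform share $s_{\st 2}'$ and a uniform $\lambda$-bit tag $r_{\st 3}'$; (ii) set $x' = r_{\st 2}' + r_{\st 1}'\cdot(-1)^{s_{\st 2}'}$ and build the ``good'' ciphertext $\hat e = \big(g^{\hat y},\, \g((g^{\hat y})^{x'}) \oplus (m_{\st s}\,\|\,r_{\st 3}')\big)$; (iii) build a ``decoy'' ciphertext $\tilde e = (g^{\tilde y},\, u)$ with $u$ uniform of the length of $\g$'s output; and (iv) place $\hat e$ and $\tilde e$ into $(e_{\st 0}', e_{\st 1}')$ in a uniformly random order and output the assembled view. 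By construction, decrypting $\hat e$ with $x'$ yields $m_{\st s}\,\|\,r_{\st 3}'$, so the $\parse$/tag check that \re performs in its message-extraction step accepts exactly this ciphertext --- matching the honest behaviour --- without $\simm_{\st \re}$ ever needing $s$.

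Indistinguishability for this case will follow from a short hybrid argument. First, in the decoy slot of the real experiment, replace $\g(\beta_{\st 1-s}^{\st y_{\st 1-s}}) \oplus (m_{\st 1-s}\,\|\,r_{\st 3})$ by a uniform string; this is justified exactly as in the corrupt-receiver case of DQ-OT, since evaluating $\g$ at $\beta_{\st 1-s}^{\st y_{\st 1-s}}$ amounts to a CDH computation from $g^{y_{\st 1-s}}$ and $\beta_{\st 1-s} = C/g^{x}$ while $a = DL(C)$ is hidden by DL and $\g$ is a random oracle, so a reduction can extract the CDH answer from any RO query at that point. Second, once the decoy payload is uniform, the position of the good ciphertext in the pair is uniformly random by the security of $\pi(\cdot)$, matching the simulator's random placement, and the remaining components $(r_{\st 1}, r_{\st 2}, r_{\st 3}, s_{\st 2}, C, g, p)$ together with the good ciphertext (whose decryption is always $m_{\st s}\,\|\,r_{\st 3}$) are distributed identically in the two experiments. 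A small point to dispatch along the way is tag soundness: with a uniform decoy payload the decrypted decoy parses to suffix $r_{\st 3}$ only with probability $2^{-\lambda}$, so in both worlds \re accepts the intended message except with negligible probability --- this is precisely where the $\lambda$-bit length of $r_{\st 3}$ is used. I expect the corrupt-receiver case to be the only real obstacle: the care is in making the CDH reduction go through even though \re knows $x$ (it is the server randomness $y_{\st 1-s}$, not the exponent $a-x$, that must stay hidden), and in verifying that disclosing $s_{\st 2}$ and $r_{\st 3}$ to \re, together with the permuted response, still reveals nothing about $s_{\st 1}$ and hence about $s$.
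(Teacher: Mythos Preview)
Your proposal is correct and follows essentially the same approach as the paper: the $\p_{\st 1}$, $\p_{\st 2}$, \se, and \tp cases are handled exactly as in the paper (reuse of the DQ-OT simulators with the extra independent uniform $r_{\st 3}$ for \se and trivial simulation for \tp), and your corrupt-\re simulator --- sampling a uniform $s_{\st 2}'$, building one ``good'' ciphertext decrypting to $m_{\st s}\|r_{\st 3}'$ and one uniform decoy, then permuting --- is the same construction the paper gives, with the same DL/CDH/RO justification for the decoy hop. Your explicit hybrid structure and the tag-soundness remark (that the decoy's suffix equals $r_{\st 3}$ only with probability $2^{-\lambda}$) are slight refinements the paper leaves implicit, but they do not change the argument.
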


%Appendix \ref{sec::DUQ-OT-Security-Proof} presents proof of Theorem \ref{theo::DUQ-OT-sec}. 

%%%%%%%%%%%% Proof %%%%%%%%%%%%%%%%%
\subsection{DUQ-OT's Security Proof}\label{sec::DUQ-OT-Security-Proof}
Below, we prove  DUQ-OT's security theorem, i.e., Theorem \ref{theo::DUQ-OT-sec}.  Even though the proofs of DUQ-OT and DQ-OT have similarities, they have significant differences too. Thus, for the sake of completeness, we present a complete proof for DUQ-OT.

\begin{proof}
We consider the case where each party is corrupt, at a time.

\subsubsection{Corrupt Receiver \re.} In the real execution, \re's view is: 
$\view_{\re}^{\st DUQ\text{-}OT}\big(m_{\st 0}, m_{\st 1}, \empt,  \empt, \empt, s\big) = \{r_{\st\re}, g, C, p,  r_{\st 3}, s_{\st 2}, $ $e'_{\st 0}, e'_{\st 1}, m_{\st s} \}$, 
where $r_{\st \re}$ is the outcome of the internal random coin of  \re and is used to generate $(r_{\st 1}, r_{\st 2})$.  
 Below, we construct an idea-model simulator  $\simm_{\st\re}$ which receives $m_{\st s}$ from \re. 
 
 \begin{enumerate}
 \item initiates an empty view and appends uniformly random coin $r'_{\st\re}$ to it, where $r'_{\st\re}$ will be used to generate \re-side randomness, i.e., $(r'_{\st 1}, r'_{\st 2})$. 

 \item selects a random generator $g$ and a large random prime number $p$.
 \item sets response $(\bar{e}'_{\st 0}, \bar{e}'_{\st 1})$ as follows: 
 \begin{itemize}
 %
 %\item splits $s$ into two shares: $\ses(1^{\lambda}, s, 2, 2)\rightarrow (s'_1, s'_2)$.
 %
 \item picks random values: $C', r'_{\st 1}, r'_{\st 2}, y'_{\st 0}, y'_{\st 1}\stackrel{\st\$}\leftarrow\mathbb{Z}_{\st p}$, $ r'_{\st 3}\stackrel{\st\$}\leftarrow\{0, 1\}^{\st\lambda}$, $s' \stackrel{\st\$}\leftarrow\{0,1\}$, and $u\stackrel{\st\$}\leftarrow\{0, 1\}^{\st\sigma+\lambda}$. %Also, it $s' \stackrel{\$}\{0,1\}$ with the same distribution used to pick $s$ in the real model. 
 
 \item sets $x=r'_{\st 2}+r'_{\st1}\cdot (-1)^{\st s'}$ and $\beta'_{\st 0}=g^{\st x}$. 
 \item sets $\bar{e}_{\st 0} = (g^{\st y'_0}, \g(\beta'^{\st y'_0}_{\st 0})\oplus (m_{\st s}|| r'_{\st 3}))$ and $\bar{e}_{\st 1} = (g^{\st y'_{1}}, u)$. 

 \item randomly permutes the element of  pair $(\bar{e}_{\st 0}, \bar{e}_{\st 1})$. Let $(\bar{e}'_{\st 0}, \bar{e}'_{\st 1})$ be the result. 

 \end{itemize}
 \item  appends $(g, C', p, r'_{\st 3}, s', \bar{e}'_{\st 0}, \bar{e}'_{\st 1}, m_{\st s})$ to the view and outputs the view.
 \end{enumerate}
 
 Next, we argue that the views in the ideal and real models are indistinguishable.  
 As we are in the semi-honest model, the adversary picks its randomness according to the protocol description; therefore, $r_{\st \re}$ and $r'_{\st \re}$ model have identical distributions, the same holds for values $(r_{\st 3}, s_{\st 2})$ in the real model and $(r'_{\st 3}, s')$ in the ideal model, component-wise.  
Furthermore, because values $g$ and $p$ have been selected uniformly at random in both models, they  have identical distributions in the real and ideal models.

 For the sake of simplicity, in the ideal mode let $\bar{e}'_{j}=\bar{e}_{\st 1} = (g^{\st y'_{\st 1}}, u)$ and in the real model let $e'_{\st i}=e_{\st 1-s}=(g^{\st y_{1-s}}, \g(\beta^{\st y_{1-s}}_{\st 1-s})\oplus (m_{\st 1-s}||r_{\st 3}))$, where $i, j\in\{0,1 \}$.  We will explain that  $e'_{\st i}$ in the real model and $\bar{e}'_{\st j}$ in the ideal model are indistinguishable. 
 
 In the real model, it holds that $e_{\st 1-s}=(g^{\st y_{1-s}}, \g(\beta^{\st y_{1-s}}_{\st 1-s})\oplus (m_{1-s}||r_{\st 3}))$, where $\beta^{\st y_{1-s}}_{\st 1-s}=\frac{C}{g^{\st x}}=g^{\st a-x}$. Since $y_{\st 1-s}$ in the real model and $y'_{\st 1}$ in the ideal model have been picked uniformly at random and unknown to the adversary, $g^{\st y_{1-s}}$ and $g^{\st y'_{1}}$  have identical distributions.

 Moreover, in the real model, given $C=g^{\st a}$, because of DL problem, $a$ cannot be computed by a PPT adversary. 
 Furthermore, due to CDH assumption, \re cannot compute  $\beta^{\st y_{1-s}}_{\st 1-s}$ (i.e., the input of $\g(.)$), given $g^{\st y_{\st 1-s}}$ and $g^{\st a-x}$. We know that $\g(.)$ is considered as a random oracle and its output is indistinguishable from a random value. Therefore,  $\g(\beta^{\st y_{\st 1-s}}_{\st 1-s})\oplus (m_{\st 1-s} || r_{\st 3})$ in the real model and $u$ in the ideal model are indistinguishable. This means that $e_{\st 1-s}$ and $\bar{e}'_{\st j}$ are indistinguishable too, due to DL, CDH, and RO assumptions.

 Moreover, since (i) $y_{\st s}$ in the real model and $y'_{\st 0}$ in the ideal model have picked uniformly at random and (ii) the decryption of both $e'_{\st 1-i}$ and $\bar{e}'_{\st 1-j}$ contain $m_{\st s}$, $e'_{\st 1-i}$ and $\bar{e}'_{\st 1-j}$ have identical distributions. $m_{\st s}$ also has identical distribution in both models.  Both $C$ and $C'$ have also been picked uniformly at random from $\mathbb{Z}_{\st p}$; therefore, they have identical distributions. 
 
 In the ideal model, $\bar{e}_{\st 0}$ always contains encryption of actual message $m_{\st s}$ while $\bar{e}_1$ always contains a dummy value $u$. However, in the ideal model the elements of pair  $(\bar{e}_{\st 0}, \bar{e}_{\st 1})$ and in the real model the elements of  pair $(e_{\st 0}, e_{\st 1})$ have been randomly permuted, which results in $(\bar{e}'_{\st 0}, \bar{e}'_{\st 1})$ and $(e'_{\st 0}, e'_{\st 1})$ respectively. Therefore, the permuted pairs have identical distributions too. 

We conclude that the two views are computationally indistinguishable, i.e., Relation \ref{equ::DUQ-OT-reciever-sim-} (in Section \ref{sec::DUQ-OT-definition}) holds.

\subsubsection{Corrupt Sender \se.} In the real model, \se's view is: 
$\view_{\se}^{\st DUQ\text{-}OT}\big((m_{\st 0}, m_{\st 1}), \empt,  \empt ,\empt, s\big)= \{r_{\st\se}, C, r_{\st 3}, \beta_{\st 0}, \beta_{\st 1}\}$, where $r_{\st \se}$ is the outcome of the internal random coin of  \se.  Next, we construct an idea-model simulator  $\simm_{\se}$ which receives $\{m_{\st 0}, m_{\st 1}\}$ from \se. 

\begin{enumerate}
\item initiates an empty view and appends uniformly random coin $r'_{\st\se}$ to it, where $r'_{\st\se}$ will be used to generate random values for \se.
\item picks  random values $C', r'\stackrel{\st\$}\leftarrow\mathbb{Z}_{\st p}, r'_{\st 3}\stackrel{\st \$}\leftarrow\{0, 1\}^{\st \lambda}$.

\item sets $\beta'_{\st 0}=g^{\st r'}$ and $\beta'_{1}=\frac{C'}{g^{\st r'}}$.

\item appends $C', r'_{\st 3}, \beta'_{\st 0}$, and $\beta'_{\st 1}$ to the view and outputs the view. 

\end{enumerate}

Next, we explain why the two views in the ideal and real models are indistinguishable. Recall, in the real model, $(\beta_{\st s}, \beta_{\st 1-s})$ have the following form: $\beta_{s}=g^{\st x}$ and  $\beta_{\st 1-s}=g^{\st a-x}$, where $a=DL(C)$ and $C=g^{\st a}$. 

In this ideal model, as $a$ and $x$ have been picked uniformly at random and unknown to the adversary, due to DL assumption,   $\beta_{\st s}$ and $\beta_{\st 1-s}$ have identical distributions and are indistinguishable. 

In the ideal model, $r'$ and $C'$ have been picked uniformly at random and we know that $a'$ in $C'=g^{\st a'}$ is a uniformly random value, unknown to the adversary; thus, due to DL assumption, $\beta'_{\st 0}$ and $\beta'_{\st 1}$ have identical distributions too. The same holds for values $C$ and $C'$. 
Moreover, values $\beta_{\st s}, \beta_{\st 1-s}, \beta'_{\st 0}$, and $\beta'_{\st 1}$  have been defined over the same field, $\mathbb{Z}_{\st p}$. Thus, they have identical distributions and are indistinguishable.  The same holds for values $r_{\st 3}$ in the real model and $r'_{\st 3}$ in the ideal model. 

Therefore, the two views are computationally indistinguishable, i.e., Relation \ref{equ::DUQ-OT-sender-sim-} (in Section \ref{sec::DUQ-OT-definition}) holds.

\subsubsection{Corrupt Server $\p_{\st 2}$.}  
 
 In the real execution, $\p_{\st 2}$'s view is: 
$\view_{\st \p_2}^{\st DUQ\text{-}OT}\big((m_{\st 0}, m_{\st 1}), \empt, \empt, \empt, s\big)=\{ g, C, p, s_{\st 2}, r_{\st 2}\}$. 
Below, we show how an ideal-model simulator $\simm_{\st \p_{\st 2}}$ works. 

\begin{enumerate}
\item initiates an empty view. It chooses a random generator $g$ and a large random prime number $p$.

\item picks two uniformly random values $s'_{\st 2}\stackrel{\st\$}\leftarrow\mathbb U$ and $C', r'_{\st 2}\stackrel{\st\$}\leftarrow\mathbb{Z}_{\st p}$, where $\mathbb{U}$ is the output range of $\ses(.)$. 
\item appends $s'_{\st 2}, C'$ and $r'_{\st 2}$ to the view and outputs the view. 
\end{enumerate}

Next, we explain why the views in the ideal and real models are indistinguishable. 
 Values $g$ and $p$ have been selected uniformly at random in both models. Thus, they  have identical distributions in the real and ideal models. 
Since $r_{\st 2}$ and $r'_{\st 2}$ have been picked uniformly at random from  $\mathbb{Z}_{\st p-1}$, they have identical distributions. 

Also, due to the security of $\ses(.)$ each share $s_{\st 2}$ is indistinguishable from a random value $s'_{\st 2}$, where $s'_{\st 2}\in \mathbb{U}$. Also, both $C$ and $C'$ have been picked uniformly at random from $\mathbb{Z}_{\st p}$. Therefore, they have identical distributions.

Thus, the two views are computationally indistinguishable, i.e., Relation \ref{equ::DUQ-OT-server-sim-} w.r.t. $\p_{\st 2}$ (in Section \ref{sec::DUQ-OT-definition}) holds.

\subsubsection{Corrupt Server $\p_{\st 1}$.}

In the real execution, $\p_{\st 1}$'s view is: 
$\view_{\st\p_1}^{\st DUQ\text{-}OT}\big((m_{\st 0}, m_{\st 1}), \empt, \empt, \empt, s\big)=\{ g, C, p, s_{\st1}, r_{\st 1},$ $ \delta_{\st 0}, \delta_{\st 1}\}$. Ideal-model $\simm_{\st \p_{1}}$ works as follows.

\begin{enumerate}
\item initiates an empty view. It chooses a large random prime number $p$ and a random generator $g$.
\item picks two random values $\delta'_{\st 0}, \delta'_{\st 1}\stackrel{\st\$}\leftarrow\mathbb{Z}_{\st p}$. 
\item picks two uniformly random values $s'_{\st 1}\stackrel{\st\$}\leftarrow\mathbb U$ and $C', r'_{\st 1}\stackrel{\st\$}\leftarrow\mathbb{Z}_{\st p}$, where $\mathbb{U}$ is the output range of $\ses(.)$. 
\item appends  $s'_{\st 1}, g, C', p, r'_{\st 1}, \delta'_{\st 0}, \delta'_{\st 1}$ to the view and outputs the view. 
\end{enumerate}

Now, we explain why the views in the ideal and real models are indistinguishable. Recall, in the real model, $\p_{\st 1}$ receives $\delta_{\st s_2}=g^{\st r_2}$ and  $\delta_{\st 1- s_2}=g^{\st a- r_2}$ from  $\p_{\st 2}$. 
Since $a$ and $r_{\st 2}$ have been picked uniformly at random and unknown to the adversary due to DL assumption,   $\delta_{\st s_2}$ and $\delta_{\st 1-{\st s_2}}$ (or $\delta_{\st 0}$ and $\delta_{\st 1}$) have identical distributions and are indistinguishable from random values (of the same field). 

In the ideal model, $\delta'_{\st 0}$ and  $\delta'_{\st 1}$ have been picked uniformly at random; therefore, they have identical distributions too. Moreover, values $\delta_{\st s}, \delta_{\st 1-s}, \delta'_{\st 0}$, and $\delta'_{\st 1}$  have been defined over the same field, $\mathbb{Z}_{\st p}$. So, they have identical distributions and are indistinguishable. 
Due to the security of $\ses(.)$ each share $s_{\st 1}$ is indistinguishable from a random value $s'_{1}$, where $s'_{\st 1}\in \mathbb{U}$. Furthermore, $(r_{\st 1}, C)$ and $(r'_{\st 1}, C')$ have identical distributions, as they are picked uniformly at random from $\mathbb{Z}_{\st p}$.  Values $g$ and $p$ in the real and ideal models have identical distributions as they have been picked uniformly at random.

Hence, the two views are computationally indistinguishable, i.e., Relation \ref{equ::DUQ-OT-server-sim-} w.r.t. $\p_{\st 2}$ (in Section \ref{sec::DUQ-OT-definition})  holds.

\subsubsection{Corrupt \tp.} T's view can be easily simulated.  It has input $s$, but it receives no messages from its counterparts and receives no output from the protocol. Thus, its real-world view is defined as 
 $\view_{\st\tp}^{\st DUQ\text{-}OT}\big((m_{\st 0}, m_{\st 1}), \empt, \empt, \empt,$ $ s\big)=\{r_{\st \tp}, g, C,  p\}$, 
  where  $r_{\st \tp}$ is the outcome of the internal random coin of  \tp and is used to generate random values.   
Ideal-model $\simm_{\st\tp}$ initiates an empty view, picks  $r'_{\st \tp}$, $g, C$, and $p$ uniformly at random, and adds them to the view. Since, in the real model, the adversary is passive, then it picks its randomness according to the protocol's description; thus, $r_{\st \tp}, g, C, p$ and $r'_{\st \tp}, g, C, p$ have identical distributions. 

Thus, the two views are computationally indistinguishable, i.e., Relation \ref{equ::DUQ-OT-t-sim-} (in Section \ref{sec::DUQ-OT-definition}) holds.
\hfill\(\Box\) 
\end{proof}

\section{Delegated-Query Multi-Receiver Oblivious Transfers}\label{sec::Multi-Receiver-OT}

In this section, we present two new variants of \dqot; namely, (1) Delegated-Query Multi-Receiver OT (\dqothf) and (2) Delegated-Unknown-Query Multi-Receiver OT (\duqothf). They are suitable for the \emph{multi-receiver} setting in which the sender maintains a (large) database containing $z$ pairs of messages $\bm m=[(m_{\st 0, 0},m_{\st 1, 0}),\ldots,$ $ (m_{\st 0, z-1},$ $m_{\st 1, z-1})]$. 

In this setting, each pair, say $v$-th pair $(m_{\st 0, v},$ $m_{\st 1, v})\in \bm m$ is related to  a receiver,  $\re_{\st j}$, where $0\leq v\leq z-1$. Both variants (in addition to offering the efficiency, \spc, and security guarantee of \dqot) ensure that (i) a receiver learns nothing about the total number of receivers/pairs (i.e., $z$) and (ii) the sender learns nothing about which receiver is sending the query, i.e., a message pair's index for which a query was generated.  In the remainder of this section, we discuss these new variants.

\subsection{Delegated-Query Multi-Receiver OT}\label{sec::DQ-OT-HF}
The first variant \dqothf considers the setting where server $\p_{\st 1}$ or $\p_{\st 2}$ knows a client's related pair's index in the sender's database.

%%%%%%%%%%%%

 \subsubsection{Security Definition.}\label{sec::DQOT-HF}

The functionality that \dqothf computes takes as input (i)  a vector of messages $\bm{m}=[(m_{\st 0, 0},m_{\st 1, 0}),\ldots,$ $ (m_{\st 0, z-1},$ $m_{\st 1, z-1})]$ from \se, (ii) an index $v$ of a pair in $\bm{m}$ from $\p_{\st 1}$, (iii)  empty string  \empt from $\p_{\st 2}$, and (iv) the index $s$ (where $s\in \{0, 1\}$) from \re. It outputs an empty string $\empt$ to  \se, $z$ to $\p_{\st 1}$, $\empt$ to $\p_{\st 2}$, and outputs to \re $s$-th message from $v$-th pair in the vector, i.e., $m_{\st s, v}$. Formally, we define the functionality as: 
 $\mathcal{F}_{\scriptscriptstyle\dqothf}:\big([(m_{\st 0, 0},m_{\st 1, 0}),\ldots,$ $ (m_{\st 0, z-1},$ $m_{\st 1, z-1})], v, \empt, s\big) \rightarrow (\empt, z, \empt, m_{\st s, v})$, where $v\in\{0,\ldots, z-1\}$. Next, we present a formal definition of \dqothf.%, in Definition \ref{def::DQ-OT-HF-sec-def}.  

\begin{definition}[\dqothf]\label{def::DQ-OT-HF-sec-def} Let $\mathcal{F}_{\scriptscriptstyle\dqothf}$ be the functionality defined above. We say that protocol $\Gamma$ realizes $\mathcal{F}_{\scriptscriptstyle\dqothf}$ in the presence of passive adversary \se, \re, $\p_{\st 1}$, or $\p_{\st 2}$, if for  every non-uniform PPT adversary \adv
in the real model, there exists a non-uniform PPT simulator \simm  in
the ideal model, such that:

%{\small{
\begin{equation}\label{equ::sender-sim-DQ-OT-HF}
\Big\{\simm_{\st\se}\big(\bm{m}, \empt\big)\Big\}_{\st \bm{m}, v, s}\stackrel{c}{\equiv} \Big\{\view_{\st\se}^{\st \Gamma}\big(\bm{m}, v,  \empt, s\big) \Big\}_{\st \bm{m}, v, s}
\end{equation}
\begin{equation}\label{equ::server1-sim-DQ-OT-HF}
\Big\{\simm_{\st\p_1}(v, z)\Big\}_{\st \bm{m}, v, s}\stackrel{c}{\equiv} \Big\{\view_{\st\p_1}^{\st \Gamma}\big(\bm{m}, v,  \empt, s\big) \Big\}_{\st \bm{m}, v, s}
\end{equation}
\begin{equation}\label{equ::server2-sim-DQ-OT-HF}
\Big\{\simm_{\st\p_2}(\empt, \empt)\Big\}_{\st \bm{m}, v, s}\stackrel{c}{\equiv} \Big\{\view_{\st\p_2}^{ \st\Gamma}\big(\bm{m}, v,  \empt, s\big) \Big\}_{\st \bm{m}, v, s}
\end{equation}
\begin{equation}\label{equ::reciever-sim-DQ-OT-HF}
\begin{split}
\Big\{\simm_{\st\re}\Big(s, \mathcal{F}_{\scriptscriptstyle\dqothf}\big(\bm{m}, v,  \empt, s\big)\Big)\Big\}_{\st \bm{m}, v,  s}\stackrel{c}{\equiv}    \Big\{\view_{\st\re}^{ \st\Gamma}\big(\bm{m}, v,  \empt, s\big) \Big\}_{\st \bm{m}, v, s}
\end{split}
\end{equation}
%}}
where $\bm{m}=[(m_{\st 0, 0},m_{\st 1, 0}),\ldots, (m_{\st 0, z-1}, m_{\st 1, z-1})]$.

\end{definition}

%%%%%%%%%%%%

\subsubsection{Strawman Approaches.}\label{sec::OT-HFstrawman-aprroach}

One may consider using one of the following ideas in the multi-receiver setting: 

\begin{enumerate}[leftmargin=5mm]
    \item \underline{\textit{Using an existing single-receiver OT, e.g., in \cite{IshaiKNP03}}}, employing one of the following approaches:

    \begin{itemize}
        \item \textit{Approach 1}: receiver $\re_{\st j}$ sends a standard OT query to \se which computes the response for all $z$ pairs of messages. Subsequently, \se sends $z$ pair of responses to receiver $\re_{\st j}$ which discards all pairs from the response except for $v$-th pair. $\re_{\st j}$ extracts its message $m_{\st v}$ from the selected pair, similar to a regular $1$-out-of-$2$ OT.  However, this approach results in the leakage of the entire database size to $\re_{\st j}$.

        \item \textit{Approach 2}: $\re_{\st j}$ sends a standard OT query to \se, along with the index $v$ of its record.  This can be perceived as if \se holds a single record/pair. Accordingly, \se generates a response in the same manner as it does in regular $1$-out-of-$2$ OT. Nevertheless, Approach 2 leaks to $\se$ the index $v$ of the record that $\re_{\st j}$ is interested.
    \end{itemize}
    \item \underline{\textit{Using an existing multi-receiver OT, e.g., in \cite{CamenischDN09}}}. This will also come with a privacy cost. The existing multi-receiver OTs reveal the entire database's size to each receiver (as discussed in Section \ref{sec::multi-rec-ot}). In this scenario, a receiver can learn the number of private records other companies have in the same database. This type of leakage is particularly significant, especially when coupled with specific auxiliary information. 
\end{enumerate}
 Hence, a fully private multi-receiver OT is necessary to ensure user privacy in real-world cloud settings.

\subsubsection{Protocol.}\label{sec::rdqothf}
We present \rdqothf that realizes \dqothf. We build \rdqothf upon DQ-OT (presented in Figure \ref{fig::DQ-OT}). \rdqothf relies on our observation that in DQ-OT, given the response of \se, $\p_{\st 1}$ cannot learn anything, e.g., about the plaintext messages $m_{\st i}$ of  \se. Below, we formally state it.

\begin{lemma}\label{lemma::two-pairs-indis-}
Let $g$ be a generator of a group $\mathbb{G}$ (defined in Section \ref{sec::dh-assumption}) whose order is a prime number $p$ and $\log_{\st 2}(p)=\lambda$ is a security parameter. Also, let $(r_{\st 1}, r_{\st 2}, y_{\st 1}, y_{\st 2})$ be elements of $\mathbb{G}$ picked uniformly at random, $C=g^{\st a}$ be a random public value whose discrete logarithm is unknown, $(m_{\st 0}, m_{\st 1})$ be two arbitrary messages, and $\h$ be a hash function modelled as a RO (as defined in Section \ref{sec::notations}), where its output size is $\delta$-bit.  Let $\gamma=\stackrel{\st+}-r_{\st 1} \stackrel{\st+}-r_{\st 2}$, $\beta_{\st 0}=g^{\st a+\gamma}$, and $\beta_{\st 1}=g^{\st a-\gamma}$.  If DL, RO, and CDH assumptions hold, then given $r_{\st 1}, C, g^{\st r_2}$, and $\frac{C}{g^{\st r_2}}$, a PPT distinguisher cannot distinguish (i) $g^{\st y_0}$ and $g^{\st y_1}$ form  random elements of $\mathbb{G}$ and (ii) $ \h(\beta_{\st 0}^{\st y_0}) \oplus m_{\st 0}$ and $ \h(\beta_{\st 1}^{\st y_1}) \oplus m_{\st 1}$  from random elements from $\{0, 1\}^{\st\sigma}$, except for a negligible probability  $\mu(\lambda)$.

% the first elements of pairs $(g^{\st y_0}, \h(\beta_{\st 0}^{\st y_0}) \oplus m_{\st 0})$ and $(g^{\st y_1}, \h(\beta_{\st 1}^{\st y_1}) \oplus m_{\st 1})$  from random elements of $\mathbb{G}$, except for a negligible probability, $\mu(\lambda)$ and the second elements of the pairs from 

% from random elements of 

\end{lemma}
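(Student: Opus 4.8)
The plan is to prove the two indistinguishability claims together by a hybrid argument, observing that the distinguisher's view is exactly what $\p_{\st 1}$ sees during the execution of DQ-OT (augmented with $\se$'s response), and then reducing the distinguishing advantage to the CDH assumption with the random-oracle programming. First I would fix the distinguisher's input $(r_{\st 1}, C, g^{\st r_2}, C/g^{\st r_2})$ and note that from these the quantities $\beta_{\st 0}=g^{\st a+\gamma}$ and $\beta_{\st 1}=g^{\st a-\gamma}$ are determined (up to the unknown exponent $a$), exactly as in Table~\ref{tab:analysis}, and that $\se$'s response pair is $\big(g^{\st y_0}, \h(\beta_{\st 0}^{\st y_0})\oplus m_{\st 0}\big)$, $\big(g^{\st y_1}, \h(\beta_{\st 1}^{\st y_1})\oplus m_{\st 1}\big)$. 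The first claim, that $g^{\st y_0}$ and $g^{\st y_1}$ are indistinguishable from uniform group elements, is immediate: $y_{\st 0}, y_{\st 1}$ are sampled uniformly and independently of everything in the input, so $g^{\st y_0}, g^{\st y_1}$ are already uniform in $\mathbb{G}$ and nothing further is needed.

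For the second claim I would argue for one component at a time (say index $0$; index $1$ is symmetric) and then combine via a standard two-step hybrid. The key step is: conditioned on the input, the value $\beta_{\st 0}^{\st y_0} = g^{(a+\gamma)y_0}$ is a CDH-type target --- the distinguisher knows $g^{\st y_0}$ and can compute $g^{\st a+\gamma}$ only up to the unknown $a$ (it knows $g^{\st r_2}$ hence $g^{\gamma}$ only partially, but crucially it does \emph{not} know $g^{\st a}=C$'s discrete log), so computing $g^{(a+\gamma)y_0}$ requires solving CDH on the pair $(g^{\st a+\gamma}, g^{\st y_0})$. Hence under CDH no PPT algorithm queries the random oracle $\h$ at the point $\beta_{\st 0}^{\st y_0}$ except with negligible probability; conditioned on that point not being queried, $\h(\beta_{\st 0}^{\st y_0})$ is a fresh uniform $\sigma$-bit string in the distinguisher's view, so $\h(\beta_{\st 0}^{\st y_0})\oplus m_{\st 0}$ is uniform over $\{0,1\}^{\st\sigma}$ and independent of $m_{\st 0}$. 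Replacing the second component of $e_{\st 0}$ by a uniform string is therefore an indistinguishable hybrid step; doing the same for $e_{\st 1}$ gives a second step, and the triangle inequality bounds the total advantage by $2\mu(\lambda)$ plus the oracle-collision terms.

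The one place that needs care --- and the main obstacle --- is justifying that the reduction to CDH really goes through despite the distinguisher knowing $r_{\st 1}$ and the \emph{correlated} pair $g^{\st r_2}, C/g^{\st r_2}$: I must check that this extra structure does not let the distinguisher compute $g^{\st a}$ or otherwise trivialize the CDH instance. The argument is that $C=g^{\st a}$ is given with $a$ hidden (the DL assumption), and $g^{\st r_2}, C/g^{\st r_2}=g^{\st a-r_2}$ together still reveal nothing about $a$ since $r_{\st 2}$ is uniform and independent --- the pair is distributed as $(g^{\st r}, g^{\st a-r})$ for uniform $r$, which is simulatable from $C$ alone; similarly $r_{\st 1}$ is independent randomness. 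So a CDH solver can be built that, on challenge $(g^{\st\alpha}, g^{\st\beta})$, embeds $g^{\st\alpha}$ into the role of $\beta_{\st 0}$ (adjusting by the known offset $\gamma$, which it picks itself by choosing $r_{\st 1}, r_{\st 2}$) and $g^{\st\beta}$ into $g^{\st y_0}$, simulates $C$ consistently, runs the distinguisher, and extracts $\beta_{\st 0}^{\st y_0}$ from its $\h$-queries. I would also remark that the bound $\delta$ (= $\sigma$) on the oracle output length enters only to say the number of oracle queries times $2^{-\delta}$ is negligible, which is the routine oracle-collision bookkeeping. The remaining details --- the exact hybrid chain, the union bound over $\h$-queries, and the observation that $m_{\st 0}, m_{\st 1}$ being arbitrary (not random) is fine because a one-time pad with a fresh uniform key is uniform regardless of the message --- are mechanical and I would state them briefly.
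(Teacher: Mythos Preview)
Your proposal is correct and follows essentially the same route as the paper's proof: part (i) by the uniformity of the hidden exponents $y_{\st 0},y_{\st 1}$, and part (ii) by arguing that CDH prevents the adversary from computing the oracle input $\beta_{\st i}^{\st y_i}$, so in the random-oracle model the masked values are pseudorandom. The paper's own proof is a two-paragraph sketch that states exactly these points without the explicit hybrid chain or the embedding of a CDH challenge; your write-up is therefore a more detailed version of the same argument rather than a different approach.
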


%Appendix \ref{sec::proof-of-thorem} presents the proof of Lemma \ref{lemma::two-pairs-indis-}. 
%

%\subsection{Proof of Lemma \ref{lemma::two-pairs-indis-}}\label{sec::proof-of-thorem}

%Below, we prove Lemma \ref{lemma::two-pairs-indis-}.% presented in Section \ref{sec::rdqothf}. 

\begin{proof}
First, we focus on the first element of   pairs $(g^{\st y_0}, \h(\beta_{\st 0}^{\st y_0}) \oplus m_{\st 0})$ and $(g^{\st y_1}, \h(\beta_{\st 1}^{\st y_1}) \oplus m_{\st 1})$. Since  $y_{\st 0}$ and $y_{\st 1}$ have been picked uniformly at random and unknown to the adversary, $g^{\st y_0}$ and $g^{\st y_1}$ are indistinguishable from random elements of group $\mathbb{G}$. 

Next,  we turn our attention to the second element of the pairs.  Given $C=g^{\st a}$, due to DL problem, value $a$ cannot be extracted by a PPT adversary, except for a probability at most $\mu(\lambda)$. We also know that, due to CDH assumption, a PPT adversary cannot compute  $\beta^{\st y_{i}}_{\st i}$ (i.e., the input of $\h(.)$), given $g^{\st y_{\st i}}, r_1, C, g^{\st r_2}$, and $\frac{C}{g^{\st r_2}}$, where $i\in \{0, 1\}$, except for a probability at most $\mu(\lambda)$. 

We know that $\h(.)$ has been considered as a random oracle and its output is indistinguishable from a random value. Therefore,  $\h(\beta^{\st y_{0}}_{\st 0})\oplus m_{\st 0}$ and  $\h(\beta^{\st y_1}_{\st 1})\oplus m_{\st 1}$ are indistinguishable from random elements of $\{0,1\}^{\st\delta}$, except for a negligible probability, $\mu(\lambda)$. 
\hfill\(\Box\) 
\end{proof}

The main idea behind the design of \rdqothf is as follows. Given a message pair from $\p_{\st 1}$,  \se needs to compute the response for all of the receivers and sends the result to  $\p_{\st 1}$, which picks and sends only one pair in the response to the specific receiver who sent the query and discards the rest of the pairs it received from \se. Therefore, \re receives a single pair (so it cannot learn the total number of receivers or the database size), and the server cannot know which receiver sent the query as it generates the response for all of them. As we will prove, $\p_{\st 1}$ itself cannot learn the actual query of \re, too.

Consider the case where one of the receivers, say  $\re_{\st j}$, wants to send a query. In this case, within \rdqothf, messages $(s_{\st 1}, r_{\st 1})$, $(s_{\st 2}, r_{\st 2})$ and $(\beta_{\st 0}, \beta_{\st 1})$ are generated the same way as they are computed in DQ-OT.  However, given $(\beta_{\st 0}, \beta_{\st 1})$, \se generates $z$ pairs and sends them to $\p_{\st 1}$ who forwards only $v$-th pair to $\re_{\st j}$ and discards the rest.  Given the pair, $\re_{\st j}$ computes the result the same way a receiver does in DQ-OT. Figure \ref{fig::DQHT-OT} in Appendix \ref{sec::DQ-HF-OT-detailed-protocol} presents \rdqothf in detail.

%\vspace{-2mm}

\subsection{Delegated-Unknown-Query Multi-Receiver OT}\label{sec::Delegated-Unknown-Query-OT-HF}

The second variant    \duqothf can be considered as a variant of  \duqot. It is suitable for the setting where servers $\p_{\st 1}$ and $\p_{\st 2}$ do not (and must not) know a client's related index in the sender's database (as well as the index $s$ of the message that the client is interested in).% and only an external party, say \tp, knows the index. 

\subsubsection{Security Definition.}\label{sec::DUQOT-HD-def}

The functionality that \duqothf computes takes as input (i)  a vector of messages $\bm{m}=[(m_{\st 0, 0},m_{\st 1, 0}),\ldots,$ $ (m_{\st 0, z-1},$ $m_{\st 1, z-1})]$ from \se, (ii) an index $v$ of a pair in $\bm{m}$ from $\tp$, (iii) the index $s$ of a message in a pair (where $s\in \{0, 1\}$) from \tp, (iv) the total number $z$ of message pairs from $\tp$, (v)  empty string  \empt from $\p_{\st 1}$, (vi)   \empt from $\p_{\st 2}$, and (vii)  \empt from $\re$. It outputs an empty string $\empt$ to  \se and $\tp$, $z$ to $\p_{\st 1}$, $\empt$ to $\p_{\st 2}$, and outputs to \re $s$-th message from $v$-th pair in $\bm{m}$, i.e., $m_{\st s, v}$. Formally, we define the functionality as: $\mathcal{F}_{\scriptscriptstyle\duqothf}:\big([(m_{\st 0, 0},m_{\st 1, 0}),\ldots,$ $ (m_{\st 0, z-1},$ $m_{\st 1, z-1})], (v, s, z), \empt, \empt, \empt\big) \rightarrow ( \empt, \empt, z, \empt, m_{\st s, v})$, where $v\in\{0,\ldots, z-1\}$. Next, we present a formal definition of \duqothf.

%\vspace{-1mm}

\begin{definition}[\duqothf]\label{def::DUQ-OT-HF-sec-def} Let $\mathcal{F}_{\scriptscriptstyle\duqothf}$ be the functionality defined above. We assert that protocol $\Gamma$ realizes $\mathcal{F}_{\scriptscriptstyle\duqothf}$ in the presence of passive adversary \se, \re, \tp, $\p_{\st 1}$, or $\p_{\st 2}$, if for  every non-uniform PPT adversary \adv
in the real model, there exists a non-uniform PPT simulator \simm  in
the ideal model, such that:

\begin{equation}\label{equ::DUQ-OT-HF-sender-sim-}
\begin{split}
\Big\{\simm_{\st\se}\big(\bm{m}, \empt\big)\Big\}_{\st \bm{m}, s}\stackrel{c}{\equiv}  \Big\{\view_{\st\se}^{\st \Gamma}\big(\bm{m}, (v, s, z), \empt,  \empt, \empt \big) \Big\}_{\st \bm{m}, s}
\end{split}
\end{equation}

%\vspace{-2mm}

\begin{equation}\label{equ::DUQ-OT-HF-server-sim-}
\begin{split}
\Big\{\simm_{\st\p_i}(\empt, out_{\st i})\Big\}_{\st \bm{m}, s}\stackrel{c}{\equiv}  \Big\{\view_{\st\p_i}^{ \st\Gamma}\big(\bm{m}, (v, s, z), \empt,  \empt, \empt \big) \Big\}_{\st \bm{m}, s}
\end{split}
\end{equation}

\begin{equation}\label{equ::DUQ-OT-HF-t-sim-}
\begin{split}
\Big\{\simm_{\st\tp}\big((v, s, z), \empt\big)\Big\}_{\st \bm{m}, s}\stackrel{c}{\equiv}  \Big\{\view_{\st\tp}^{ \st\Gamma}\big(\bm{m}, (v, s, z), \empt,  \empt, \empt \big) \Big\}_{\st \bm{m}, s}
\end{split}
\end{equation}

\begin{equation}\label{equ::DUQ-OT-HF-reciever-sim-}
\begin{split}
\Big\{\simm_{\st\re}\Big(\empt, \mathcal{F}_{\scriptscriptstyle\duqothf}\big(\bm{m}, (v, s, z), \empt,  \empt, \empt \big)\big)\Big)\Big\}_{\st \bm{m}, s}\stackrel{c}{\equiv}  \Big\{\view_{\st\re}^{ \st\Gamma}\big(\bm{m}, (v, s, z), \empt,  \empt, \empt \big) \Big\}_{\st \bm{m}, s}
\end{split}
\end{equation}
%}}
where $\bm{m}=[(m_{\st 0, 0},m_{\st 1, 0}),\ldots, (m_{\st 0, z-1}, m_{\st 1, z-1})]$, $out_{\st 1}=z$, $out_{\st 2}=\empt$, and $\forall i$,  $i\in \{1,2\}$. 
\end{definition}

\subsubsection{Protocol.}\label{sec::DUQ-OT-HF} We proceed to present \rduqothf that realizes \duqothf. We build \rdqothf upon protocol DUQ-OT (presented in Figure \ref{fig::DQ-OT-with-unknown-query}). \rduqothf mainly relies on Lemma \ref{lemma::two-pairs-indis-} and the following technique. 
To fetch a record $m_{\st v}$ ``securely'' from a semi-honest \se that holds a database of the form $\bm{a}=[m_{\st 0}, m_{\st 1}, \ldots, m_{\st z-1}]^{\st T}$ where $T$ denotes transpose, 
%
%$A=
%\begin{bmatrix}
%m_0\\
%m_1\\
%\ldots\\
%m_z\\
%\end{bmatrix}
%$
% 
without revealing which plaintext record we want to fetch, we can perform as follows: 

\begin{enumerate}%[label=$\roman$]
\item set vector $\bm{b}=[b_{\st 0},\ldots, b_{\st z-1}]$, where all $b_{\st i}$s are set to zero except for $v$-th element $b_{\st v}$ which is set to $1$.

\item encrypt each element of $\bm{b}$ using additively homomorphic encryption, e.g., Paillier encryption. Let $\bm{b}'$ be the vector of the encrypted elements. 

 \item send $\bm{b}'$ to the database holder which performs $\bm{b}'\times \bm{a}$ homomorphically, and sends us the single result $res$.
 
 \item decrypt $res$ to discover $m_{\st v}$.\footnote{Such a technique was previously used by Devet \textit{et al.} \cite{DevetGH12} in the ``private information retrieval'' research line.} 
 \end{enumerate}

 In our \rduqothf,  $\bm{b}'$ is not sent for each query to  \se. Instead,  $\bm{b}'$ is stored once in one of the servers, for example, $\p_{\st 1}$. Any time \se computes a vector of responses, say $\bm{a}$, to an OT query, it sends $\bm{a}$ to  $\p_{\st 1}$ which computes $\bm{b}'\times \bm{a}$ homomorphically and sends the result to \re.  Subsequently, \re can decrypt it and find the message it was interested.   Thus, $\p_{\st 1}$ \emph{obliviously filters out} all other records of field elements that do not belong to $\re_j$ and sends to $\re_{\st j}$ only the messages that $\re_{\st j}$ is allowed to fetch. 
 Figure  \ref{fig::DQ-OT-with-unknown-query-and-HF-first-eight-phases} presents \rduqothf in detail.

% !TEX root =main.tex

%%%%%%%%%%%%%%%%%%
%
\begin{figure}[!htbp]
\setlength{\fboxsep}{.9pt}
\begin{center}
    \begin{tcolorbox}[enhanced,height=218.2mm,
    colframe=black,colback=white]
%%%%%%%%%%%%
\vspace{-2.2mm}

\begin{enumerate}%[leftmargin=4.5mm]

\item\label{phase::s-init} \underline{\textit{$\se$-side Initialization:}} 
$\mathtt{Init}(1^{\st \lambda})\rightarrow pk$

Chooses a large random prime number $p$, random element
$C \stackrel{\st \$}\leftarrow \mathbb{Z}_p$, and generator $g$. Publishes $pk=(C, p, g)$.

\item\label{DUQOT-HT::gen-key} \underline{\textit{$\re_{\st j}$-side One-off Setup:}}
$\mathtt{\re.Setup}(1^{\st \lambda})\rightarrow (pk_{\st j}, sk_{\st j})$
%

%\begin{enumerate}

Generates a key pair for the  homomorphic encryption, by calling $\keygen(1^{\st\lambda})\rightarrow(sk_{\st j}, pk_{\st j})$. Sends $pk_{\st j}$  to \tp and $\se$.

\item  \underline{\textit{\tp-side One-off Setup:}}
$\mathtt{\tp.Setup}(z, pk_{\st j})\rightarrow \bm{w}_{\st j}$

\begin{enumerate}

\item initializes an empty vector $\bm{w}_{\st j}=[]$ of size $z$.

\item creates a compressing vector, by setting $v$-th position of $\bm{w}_{\st j}$ to encrypted $1$ and setting the rest of $z-1$ positions to encrypted $0$. $ \forall t, 0\leq t\leq z-1:$

\begin{enumerate}

\item  sets  $d=1$, if $t=v$; sets $d=0$, otherwise. 

\item appends $\enc(pk_{\st j}, d)$ to  $\bm{w}_{\st j}$.

\end{enumerate}

\item sends $\bm{w}_{\st j}$ to $\p_{\st 1}$.

\end{enumerate}

\item \underline{\textit{$\re_{\st j}$-side Delegation:}}
$\mathtt{\re.Request}( pk)\hspace{-.6mm}\rightarrow\hspace{-.6mm} req=(req_{\st 1}, req_{\st 2})$

\begin{enumerate} 

\item picks random values: $r_{\st 1}, r_{\st 2} \stackrel{\st\$}\leftarrow\mathbb{Z}_{\st p}$.

\item sends $req_{\st 1} = r_{\st 1}$ to $\p_{\st 1}$ and $req_{\st 2}= r_{\st 2}$ to $\p_{\st 2}$.

\end{enumerate}

\item \underline{\textit{$\tp$-side Query Generation:}}
$\mathtt{\tp.Request}(1^{\st \lambda}, s,$ $ pk)\hspace{-1mm}\rightarrow\hspace{-1mm} (req'_{\st 1}, $ $req'_{\st 2},$ $ sp_{\st \se})$

\begin{enumerate}

\item splits  the private index $s$ into two shares $(s_{\st 1}, s_{\st 2})$ by calling  $\ses(1^{\st \lambda}, s, 2, 2)\rightarrow (s_{\st 1}, s_{\st 2})$.

\item picks a uniformly random value: $r_{\st 3} \stackrel{\st\$}\leftarrow\{0,1\}^{\st\lambda}$.

% \item send $(s_{\st 2}, r_{\st 3})$ to $\re_{\st j}$, $s_{\st 1}$ to  $\p_{\st 1}$, $s_{\st 2}$ to  $\p_{\st 2}$,  $r_{\st 3}$ to \se. 

\item sends $req'_{\st 1} =s_{\st 1}$ to  $\p_{\st 1}$, $req'_{\st 2}=s_{\st 2}$ to  $\p_{\st 2}$. It sends secret parameter $sp_{\st \se}=r_{\st 3}$ to \se and $sp_{\st \re} = (req'_{\st 2}, sp_{\st \se})$ to \re.

\end{enumerate}
\item \underline{\textit{$\p_{\st 2}$-side Query Generation:}}
$\mathtt{\p_{\st 2}.GenQuery}(req_{\st 2}, req'_{\st 2}, pk)\rightarrow q_{\st 2}$

\begin{enumerate}

\item computes queries:
  $\delta_{\st s_2}= g^{\st r_2},\ \  \delta_{\st 1-s_2} = \frac{C}{g^{\st r_{\st 2}}}$.
\item sends $q_{\st 2}=(\delta_{\st 0}, \delta_{\st 1})$ to  $\p_{\st 1}$. 

\end{enumerate}

\item\label{DUQOT-HT::gen-P1-side-q-gen}\underline{\textit{$\p_{\st 1}$-side Query Generation:}}
$\mathtt{\p_{\st 1}.GenQuery}(req_{\st 1},req'_{\st 1}, q_{\st 2},pk)\hspace{-1.1mm}\rightarrow\hspace{-1.1mm} q_{\st 1}$

\begin{enumerate}

\item computes  queries as: 
%
%\vspace{-2.5mm}
%
$\beta_{\st s_{\st 1}}=\delta_{\st 0}\cdot g^{\st r_1}, \beta_{\st 1-s_1}=\frac{\delta_{\st 1}} {g^{\st r_1}}$ 
%\vspace{-4.5mm}

\item sends $q_{\st 1}=(\beta_{\st 0}, \beta_{\st 1})$ to  $\se$.
\end{enumerate}

%********

\item\label{DUQOT-HT::gen-res}\underline{\textit{\se-side Response Generation:}} 
$\mathtt{GenRes}(m_{\st 0,0}, m_{\st 1, 0},\ldots, m_{\st 0,z-1},$ $ m_{\st 1, z-1}, pk, q_{\st 1}, sp_{\st \se})\rightarrow res$

\begin{enumerate}

\item aborts if  $C \neq \beta_{\st 0}\cdot \beta_{\st 1}$.

\item computes a response as follows. $\forall t, 0\leq t \leq z-1:$
\begin{enumerate}%[leftmargin=1.2mm]
\vspace{-.4mm}
\item   picks two random values $y_{\st 0, t}, y_{\st 1, t}  \stackrel{\$}\leftarrow\mathbb{Z}_{\st p}$.  
  
\item  computes  response:

 $ e_{\st 0, t} := (e_{\st 0, 0, t}, e_{\st 0, 1, t}) = (g^{\st y_{\st 0, t}}, \g(\beta_{\st 0}^{\st y_{0, t}}) \oplus (m_{\st 0, t}||r_{\st 3}))$
      
$e_{\st 1, t} := (e_{\st 1, 0, t}, e_{\st 1, 1, t}) = (g^{\st y_{1, t}}, \g(\beta_{\st 1}^{\st y_{\st 1, t}}) \oplus (m_{\st 1, t}||r_{\st 3})) $ 
   
\item   randomly permutes the elements of each pair $(e_{\st 0, t}, e_{\st 1, t})$ as $\pi(e_{\st 0, t}, e_{\st 1, t})\rightarrow ({e}'_{\st 0, t}, {e}'_{\st 1, t})$.
   
\end{enumerate}

\item     sends $res=(e'_{\st 0, 0}, e'_{\st 1, 0}),\ldots, (e'_{\st 0, z-1}, e'_{\st 1, z-1}) \text{ to } \p_{\st 1}$.
\end{enumerate}

%%****************
\item\label{DUQOT-HT::oblivius-filter}\underline{\textit{$\p_{\st 1}$-side Oblivious Filtering:}} 
$\mathtt{OblFilter}(res, pk_{\st j}, \bm{w}_{\st j})\hspace{-.7mm}\rightarrow\hspace{-.6mm} res'$

\begin{enumerate}

\item compresses \se's response using vector  $\bm{w}_{\st j}$ as follows. $\forall i,i', 0\leq i,i'\leq 1:$
    
$o_{\st i, i'}= (e'_{\st i, i',0}\hmul \bm{w}_{\st j}[0])\hadd...\hadd  (e'_{\st i,i', z-1}\hmul \bm{w}_{\st j}[z-1])$.
 
\item  sends $res'=(o_{\st 0, 0}, o_{\st 0, 1}), (o_{\st 1, 0}, o_{\st 1, 1}) \text{ to } \re_{\st j}$.

\end{enumerate}

\item\label{DUQOT-HT::message-ext} \underline{\textit{\re-side Message Extraction:}}  
$\mathtt{Retreive}(res', req,  sk_{\st j}, pk,$ $ sp_{\st \re})\hspace{-.6mm}\rightarrow\hspace{-.8mm} m_{\st s}$

\begin{enumerate}

\item decrypts the response from $\p_{\st 1}$ as follows. 
 $\forall i,i', 0\leq i,i'\leq 1:$
$ \dec(sk_{\st j}, o_{\st i,i'})\rightarrow o'_{\st i,i'}$.

\item sets $x=r_{\st 2}+r_{\st 1}\cdot(-1)^{\st s_2}$.

\item retrieves message $m_{\st s, v}$ as follows. 
 $\forall i, 0\leq i\leq1:$

\begin{enumerate}

\item sets $y=\g(({o}'_{\st i, 0})^{\st x})\oplus {o}'_{\st i, 1}$.

\item calls $\parse(\gamma, y)\rightarrow (\ux, \uy)$.

\item  sets $m_{\st s, v}=\ux$, if $\uy=r_{\st 3}$.

\end{enumerate}

\end{enumerate}

% %XXXXXXXXXXXXXXX
% \item\label{DUQOT-HT::oblivius-filter}\underline{\textit{$\p_{\st 1}$-side Oblivious Filtering:}} 
% %
% $\mathtt{OblFilter}(res, pk_{\st j}, \bm{w}_{\st j})\hspace{-.7mm}\rightarrow\hspace{-.6mm} res'$

% \begin{enumerate}

% \item compresses \se's response using vector  $\bm{w}_{\st j}$ as follows. $\forall i,i', 0\leq i,i'\leq 1:$
    
% $o_{\st i, i'}= (e'_{\st i, i',0}\hmul \bm{w}_{\st j}[0])\hadd...\hadd  (e'_{\st i,i', z-1}\hmul \bm{w}_{\st j}[z-1])$.
 
% \item  send $res'=(o_{\st 0, 0}, o_{\st 0, 1}), (o_{\st 1, 0}, o_{\st 1, 1}) \text{ to } \re_{\st j}$.

% \end{enumerate}

% \item\label{DUQOT-HT::message-ext}\hspace{-2mm} \underline{\textit{\re-side Message Extraction:}} 
% \hspace{-2mm} 
% %
% %
% $\mathtt{Retreive}(res', req,  sk_{\st j}, pk,$ $ sp_{\st \re})\hspace{-.6mm}\rightarrow\hspace{-.8mm} m_{\st s}$
% %

% \begin{enumerate}

% \item decrypt the response from $\p_{\st 1}$ as follows: 

%  $\forall i,i', 0\leq i,i'\leq 1:$
% %
% $ \dec(sk_{\st j}, o_{\st i,i'})\rightarrow o'_{\st i,i'}$. 

% \item set $x=r_{\st 2}+r_{\st 1}\cdot(-1)^{\st s_2}$.

% \item retrieve message $m_{\st s, v}$ as follows. 
% %
%  $\forall i, 0\leq i\leq1:$

% \begin{enumerate}

% \item set $y=\g(({o}'_{\st i, 0})^{\st x})\oplus {o}'_{\st i, 1}$.

% \item call $\parse(\gamma, y)\rightarrow (\ux, \uy)$.

% \item  set $m_{\st s, v}=\ux$, if $\uy=r_{\st 3}$.

% \end{enumerate}

% \end{enumerate}

%*********

\end{enumerate}
\vspace{-1mm}
\end{tcolorbox}
\end{center}
\vspace{-6mm}
    \caption{Phases \ref{phase::s-init}--\ref{DUQOT-HT::gen-res} of \rduqothf. 
    }
    \label{fig::DQ-OT-with-unknown-query-and-HF-first-eight-phases}
    \vspace{-2mm}
\end{figure}

\begin{theorem}\label{theo::DUQ-OTHF-2-sec}
Let $\mathcal{F}_{\scriptscriptstyle\duqothf}$ be the functionality defined in Section \ref{sec::DUQOT-HD-def}. If  
DL, CDH, and RO assumptions hold and additive homomorphic encryption satisfies IND-CPA, then \rduqothf (presented in Figure \ref{fig::DQ-OT-with-unknown-query-and-HF-first-eight-phases}) securely computes $\mathcal{F}_{\scriptscriptstyle\duqothf}$ in the presence of semi-honest adversaries, 
%
% (a) semi-honest receiver \re, honest  \se, \tp,  $\p_{\st 1}$, and $\p_{\st 2}$, (b)  semi-honest \se, honest \re, \tp, $\p_{\st 1}$, and $\p_{\st 2}$, (c) semi-honest \tp and honest \se, $\p_{\st i}$, and \re, or (d) semi-honest $\p_i$ (where $i\in \{1, 2\}$), honest \se, \tp, and \re,
%
w.r.t. Definition \ref{def::DUQ-OT-HF-sec-def}. 
\end{theorem}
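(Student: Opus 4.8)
The plan is to follow the simulation paradigm of Definition~\ref{def::DUQ-OT-HF-sec-def}, building a separate PPT simulator for each of the five corruptible parties ($\se$, $\re$, $\tp$, $\p_{\st 1}$, $\p_{\st 2}$) and showing that the view it produces is computationally indistinguishable from the real one. Since \rduqothf is obtained from DUQ-OT only by (i) having $\se$ answer for all $z$ message pairs and (ii) inserting the homomorphic ``oblivious filtering'' step at $\p_{\st 1}$, the simulators for $\se$, $\p_{\st 2}$, $\tp$, and $\re$ will be light adaptations of the corresponding DUQ-OT simulators from the proof of Theorem~\ref{theo::DUQ-OT-sec}, extended with an honestly generated homomorphic key pair whenever a party's view contains $pk_{\st j}$. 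The genuinely new case is a corrupt $\p_{\st 1}$, whose view now also contains the compressing vector $\bm{w}_{\st j}$ and the length-$z$ response vector received from $\se$; this is where the new assumptions (IND-CPA of the additive homomorphic encryption, and Lemma~\ref{lemma::two-pairs-indis-}) enter.

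\textbf{The routine cases.} For a corrupt $\p_{\st 2}$ the view has the same form as in DUQ-OT, namely $\{g,C,p,s_{\st 2},r_{\st 2}\}$; $\simm_{\st\p_{\st 2}}$ samples the public parameters, a random share $s'_{\st 2}\in\mathbb{U}$, and $C',r'_{\st 2}\stackrel{\st\$}\leftarrow\mathbb{Z}_{\st p}$, and indistinguishability follows from the security of $\ses(.)$ together with the uniformity of the remaining components. For a corrupt $\tp$, which has input $(v,s,z)$, receives no protocol output, and sees only public material plus $pk_{\st j}$, $\simm_{\st\tp}$ merely samples fresh randomness, the public parameters, and a fresh homomorphic public key; the views are identically distributed because the adversary is semi-honest. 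For a corrupt $\se$, the view is essentially $\{r_{\st\se},C,r_{\st 3},\beta_{\st 0},\beta_{\st 1},pk_{\st j}\}$, and $\simm_{\st\se}$ sets $\beta'_{\st 0}=g^{\st r'}$, $\beta'_{\st 1}=\frac{C'}{g^{\st r'}}$ for random $r',C'$, picks $r'_{\st 3}$ and a fresh $pk'_{\st j}$, reducing indistinguishability to DL exactly as in the DUQ-OT sender case. Finally, for a corrupt $\re_{\st j}$ the key observation is that, since $\re_{\st j}$ holds $sk_{\st j}$, decrypting the filtered pair $(o_{\st 0,0},o_{\st 0,1}),(o_{\st 1,0},o_{\st 1,1})$ yields precisely the DUQ-OT response restricted to $\re_{\st j}$'s own $v$-th pair; hence $\simm_{\st\re}$ runs the DUQ-OT receiver simulator on input $m_{\st s,v}$ to obtain a randomly permuted pair whose one ``live'' element decrypts to $m_{\st s,v}\|r_{\st 3}$ and whose other element is a uniform string, and then re-encrypts both components under a freshly generated $pk'_{\st j}$. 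Indistinguishability follows from DL, CDH, and RO (the junk element looks random after decryption) together with the security of $\pi(.)$, mirroring the DUQ-OT receiver argument.

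\textbf{The main case: corrupt $\p_{\st 1}$.} Here $\simm_{\st\p_{\st 1}}$ receives $(\empt,z)$ and must output a view of the form $\{g,C',p,\bm{w}'_{\st j},r'_{\st 1},s'_{\st 1},\delta'_{\st 0},\delta'_{\st 1},res'\}$ (together with the honestly derived $q_{\st 1}$ and outgoing $res'$). It samples the public parameters, a random share $s'_{\st 1}\in\mathbb{U}$, random $C',r'_{\st 1}\stackrel{\st\$}\leftarrow\mathbb{Z}_{\st p}$, and random group elements $\delta'_{\st 0},\delta'_{\st 1}$ (indistinguishable from the real $\delta_{\st 0}=g^{\st r_2}$, $\delta_{\st 1}=g^{\st a-r_2}$ by DL, as in the DUQ-OT $\p_{\st 1}$ case); it generates a fresh homomorphic key pair and sets $\bm{w}'_{\st j}$ to a length-$z$ vector of encryptions of $0$ except for one fixed position holding an encryption of $1$; and it builds a fake length-$z$ response vector by sampling, for each $t$, a pair of uniform group elements and uniform strings in place of $(e'_{\st 0,t},e'_{\st 1,t})$, after which it computes $res'$ deterministically exactly as in Figure~\ref{fig::DQ-OT-with-unknown-query-and-HF-first-eight-phases}. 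Indistinguishability of $\bm{w}'_{\st j}$ from the real compressing vector is immediate from IND-CPA of the homomorphic encryption, since $\p_{\st 1}$ does not hold $sk_{\st j}$ and so cannot tell which coordinate encrypts $1$. Indistinguishability of the fake response vector from the real one is the heart of the argument and is obtained by a hybrid over the $z$ pairs, switching one pair at a time from real to uniform and invoking Lemma~\ref{lemma::two-pairs-indis-} (with $\g$ in place of $\h$ and messages of the form $m_{\st i,t}\|r_{\st 3}$) at each step; this is sound because $\p_{\st 1}$ itself computes the query pair $(\beta_{\st 0},\beta_{\st 1})$ (it forwards it to $\se$), so the reduction can generate the remaining ``honest'' pairs alongside the single challenge pair, and can produce $\bm{w}'_{\st j}$ itself. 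Because $res'$ is a deterministic function of components that have already been matched, it matches as well.

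\textbf{Main obstacle and conclusion.} The step I expect to be the most delicate is precisely this hybrid reduction in the corrupt-$\p_{\st 1}$ case: one must verify that a single-challenge adversary against Lemma~\ref{lemma::two-pairs-indis-} can reconstruct the other $z-1$ response pairs and the homomorphic vector consistently, and that the random-oracle bookkeeping accumulated over the $z$ invocations contributes only a negligible loss in the bound. Combining the five cases yields simulators satisfying Relations~\ref{equ::DUQ-OT-HF-sender-sim-}--\ref{equ::DUQ-OT-HF-reciever-sim-}, which establishes the theorem.\hfill\(\Box\)
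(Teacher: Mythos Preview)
Your proposal is correct and follows essentially the same decomposition as the paper's own proof: both treat $\p_{\st 2}$, $\se$, $\tp$, and $\re$ by direct reduction to the corresponding DUQ-OT cases, and both identify the corrupt-$\p_{\st 1}$ case as the one new piece, handled via Lemma~\ref{lemma::two-pairs-indis-} for the $z$ response pairs together with IND-CPA for $\bm{w}_{\st j}$. The only noteworthy differences are cosmetic: the paper's $\simm_{\st\p_{\st 1}}$ simply draws $z$ uniform elements from the ciphertext range for $\bm{w}'$ (rather than encrypting a fixed unit vector under a fresh key as you do), and it invokes Lemma~\ref{lemma::two-pairs-indis-} once in bulk rather than spelling out the $z$-step hybrid you describe---your version is in fact slightly more careful on that point.
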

%We refer readers to Appendix \ref{sec::proof-of-DUQ-OT-HF} for the proof of Theorem \ref{theo::DUQ-OTHF-2-sec}.

% !TEX root =main.tex

%%%%%YYYY
\subsection{\rduqothf's Security Proof}\label{sec::proof-of-DUQ-OT-HF}

%%%%%%%%%%%%%%%%**********

We prove the security of  \rduqothf, i.e., Theorem \ref{theo::DUQ-OTHF-2-sec}.

\begin{proof}
To prove the theorem, we consider the cases where each party is corrupt at a time.

\subsubsection{Corrupt \re.} In the real execution, \re's view is:  
$\view_{\re}^{\st \rduqothf}\big(\bm{m}, (v, s, z)$ $  \empt, \empt, \empt\big) = \{r_{\st\re}, g, C, p, r_{\st 3}, s_{\st 2}, {o}_{\st 0}, {o}_{\st 1}, $ $m_{\st s, v} \}$, where 
$g$ is a random generator, $p$ is a large random prime number, 
${o}_{\st 0}:=({o}_{\st 0, 0}, {o}_{\st 0, 1})$, ${o}_{\st 1}:=({o}_{\st 1, 0}, {o}_{\st 1, 1})$, 
$C=g^{\st a}$ is a random value and public parameter, $a$ is a random value, and $r_{\st \re}$ is the outcome of the internal random coin of  \re that is used to (i) generate $(r_{\st 1}, r_{\st 2})$ and (ii) its public and private keys pair for additive homomorphic encryption.

We will construct a simulator $\simm_{\st \re}$ that creates a view for \re such that (i) \re will see only a pair of messages rather than $z$ pairs, and (ii) the view is indistinguishable from the view of corrupt \re in the real model. 
$\simm_{\st \re}$ which receives $m_{\st s, v}$ from \re performs as follows.

 \begin{enumerate}
 \item initiates an empty view and appends uniformly random coin $r'_{\st\re}$ to it, where $r'_{\st\re}$ will be used to generate \re-side randomness. It selects a large random prime number $p$ and a random generator $g$.

  \item sets response  as follows: 
  
  \begin{itemize}
 %%%%%%%%%%%%%%%%%%XXXXXX
 \item picks random values: $C', r'_{\st 1}, r'_{\st 2},y'_{\st 0}, y'_{\st 1}\stackrel{\st\$}\leftarrow\mathbb{Z}_{p}$, $ r'_{\st 3}\stackrel{\st\$}\leftarrow\{0, 1\}^{\st\lambda}$, $s'\stackrel{\st\$}\leftarrow\{0,1\}$, and $u\stackrel{\st\$}\leftarrow\{0, 1\}^{\st\sigma+\lambda}$. %It also picks $s\in\{0,1\}$ with the same distribution used to pick $s$ in the real model. 
 
 \item sets $x=r'_{\st 2}+r'_{\st 1}\cdot (-1)^{\st s'}$ and $\beta'_{\st 0}=g^{\st x}$. 
 \item sets $\bar{e}_{\st 0} := \big(\bar{e}_{\st 0,0}=g^{\st y'_0}, \bar{e}_{\st 0, 1}=\g(\beta'^{\st y'_0}_{\st 0})\oplus (m_{\st s,v}|| r'_{\st 3})\big)$ and $\bar{e}_{\st 1} := (\bar{e}_{\st 1,0}=g^{\st y'_{\st 1}}, \bar{e}_{\st 1,1}=u)$. 

 \item encrypts the elements of the pair under $pk$ as follows. $\forall i,i', 0\leq i, i'\leq 1: \bar{o}_{\st i,i'}=\enc(pk, \bar{e}_{\st i,i'})$. Let $\bar{o}_{\st 0}:=(\bar{o}_{\st 0, 0}, \bar{o}_{\st 0, 1})$ and $\bar{o}_{\st 1}:=(\bar{o}_{\st 1, 0}, \bar{o}_{\st 1, 1})$.

 \item randomly permutes the element of  pair $(\bar{o}_{\st 0}, \bar{o}_{\st 1})$. Let $(\bar{o}'_{\st 0}, \bar{o}'_{\st 1})$ be the result.

 \end{itemize}
 \item  appends $(g, C', p, r'_{\st 3}, s', \bar{o}'_{\st 0}, \bar{o}'_{\st 1}, m_{\st s, v})$ to the view and outputs the view.
 
 %%%%%%%%%%%%%%%XXXXXX
  %

 %
 \end{enumerate}

 Now, we argue that the views in the ideal and real models are indistinguishable. As we are in the semi-honest model, the adversary picks its randomness according to the protocol description; so, $r_{\st \re}$ and $r'_{\st \re}$ model have identical distributions, so do values $(r_{\st 3}, s_{\st 2})$ in the real model and $(r'_{\st 3}, s')$ in the ideal model, component-wise. Moreover,  values $g$ and $p$ in the real and ideal models, as they have been picked uniformly at random.  
For the sake of simplicity, in the ideal model let $\bar{e}'_{\st j}=\bar{e}_{\st 1} = (g^{y'_{\st 1}}, u)$ and in the real model let $e'_{\st i}=e_{\st 1-s}=(g^{\st y_{\st 1-s, v}}, \g(\beta^{\st y_{1-s, v}}_{\st 1-s})\oplus (m_{\st 1-s, v}||r_{\st 3}))$, where $i, j\in\{0,1 \}$.  Note that $\bar{e}'_{\st j}$ and $e'_{\st i}$ contain the elements that the adversary gets after decrypting the messages it receives from $\p_{\st 1}$ in the real model and from $\simm_{\st\re}$ in the ideal model.

We will explain that  $e'_{\st i}$ in the real model and $\bar{e}'_{\st j}$ in the ideal model are indistinguishable. In the real model, it holds that $e_{\st 1-s}=(g^{\st y_{\st 1-s, v}}, \g(\beta^{\st y_{\st 1-s, v}}_{\st 1-s})\oplus (m_{1-s, v}||r_3))$, where $\beta^{\st y_{\st 1-s, v}}_{\st 1-s}=\frac{C}{g^{\st x}}=g^{\st a-x}$. Since $y_{\st 1-s, v}$ in the real model and $y'_{\st 1}$ in the ideal model have been picked uniformly at random and unknown to the adversary, $g^{\st y_{\st 1-s, v}}$ and $g^{\st y'_{1}}$  have identical distributions. Moreover, in the real model, given $C=g^{\st a}$, due to DL problem, $a$ cannot be computed by a PPT adversary.  Also, due to CDH assumption, \re cannot compute  $\beta^{\st y_{1-s, v}}_{\st 1-s}$, given $g^{\st y_{1-s}}$ and $g^{\st a-x}$. We know that $\g(.)$ is considered a random oracle and its output is indistinguishable from a random value. Therefore,  $\g(\beta^{\st y_{1-s, v}}_{\st 1-s})\oplus (m_{\st 1-s, v} || r_{\st 3})$ in the real model and $u$ in the ideal model are indistinguishable. This means that $e'_{\st i}$ and $\bar{e}'_{\st j}$ are indistinguishable too, due to DL, CDH, and RO assumptions. 
 
Also, ciphertexts $\bar{o}_{\st 1,0}=\enc(pk, g^{y'_{\st 1}})$ and $\bar{o}_{\st 1,1}=\enc(pk, u)$ in the ideal model and ciphertexts ${o}_{\st 1-s,0}=\enc(pk, g^{\st y_{1-s, v}})$ and ${o}_{\st 1-s,1}=\enc(pk, \g(\beta^{\st y_{1-s, v}}_{1-s})\oplus (m_{\st 1-s, v}||r_{\st 3}))$ in the real model have identical distributions due to IND-CPA property of the additive homomorphic encryption. 
 Furthermore,  (i) $y_{\st s, v}$ in the real model and $y'_0$ in the ideal model have been picked uniformly at random and (ii) the decryption of both $e'_{\st 1-i}$ and $\bar{e}'_{\st 1-j}$ contain $m_{\st s, v}$; therefore, $e'_{\st 1-i}$ and $\bar{e}'_{\st 1-j}$ have identical distributions. Also, $m_{\st s, v}$ has identical distribution in both models.  Both $C$ and $C'$ have also been picked uniformly at random from $\mathbb{Z}_{\st p-1}$; therefore, they have identical distributions.

 In the ideal model, $\bar{e}_{\st 0}$ always contains encryption of actual message $m_{\st s, v}$ while $\bar{e}_{\st 1}$ always contains a dummy value $u$. However, in the ideal model the encryption of the elements of pair  $(\bar{e}_{\st 0}, \bar{e}_{\st 1})$ and in the real model the encryption of the elements of pair $(e_{\st 0,v}, e_{\st 1,v})$ have been randomly permuted, which results in $(\bar{o}'_{\st 0}, \bar{o}'_{\st 1})$ and $(o_{\st 0}, o_{\st 1})$ respectively. 
Moreover, ciphertexts $\bar{o}_{\st 0,0}=\enc(pk, g^{\st y'_{\st 0}})$ and $\bar{o}_{\st 0,1}=\enc(pk, $ $\g(\beta'^{\st y'_0}_{\st 0})\oplus (m_{\st s,v}|| r'_{\st 3})))$ in the ideal model and ciphertexts ${o}_{\st s,0}=\enc(pk, g^{\st y_{\st s, v}})$ and ${o}_{\st s,1}=\enc(pk, \g(\beta^{y_{\st s, v}}_{\st s})\oplus (m_{\st s, v}||r_{\st 3}))$ have identical distributions due to IND-CPA property of the additive homomorphic encryption. Thus, the permuted pairs have identical distributions too.

We conclude that the two views are computationally indistinguishable, i.e., Relation \ref{equ::DUQ-OT-HF-reciever-sim-} (in Section \ref{sec::Delegated-Unknown-Query-OT-HF}) holds. That means, even though \se holds $z$ pairs of messages and generates a response for all of them, \re's view is still identical to the case where \se holds only two pairs of messages.

 \subsubsection{Corrupt \se.}   This case is identical to the corrupt \se in the proof of DUQ-OT (in Section \ref{sec::DUQ-OT-Security-Proof}) with a minor difference. Specifically, the real-model view of \se in this case is identical to the real-model view of \se in DUQ-OT. Nevertheless, now $\simm_{\st\se}$ receives a vector $\bm{m}=[(m_{\st 0, 0},m_{\st 1, 0}),...,$ $ (m_{\st 0, z-1},$ $m_{\st 1, z-1})]$ from \se, instead of only a single pair that $\simm_{\st\se}$ receives in the proof of DUQ-OT. $\simm_{\st\se}$ still carries out the same way it does in the corrupt \se case in the proof of DUQ-OT. Therefore, the same argument that we used (in Section \ref{sec::DUQ-OT-Security-Proof}) to argue why real model and ideal model views are indistinguishable (when \se is corrupt), can be used in this case as well.
 
 Therefore, Relation \ref{equ::DUQ-OT-HF-sender-sim-}  (in Section \ref{sec::Delegated-Unknown-Query-OT-HF}) holds.

\subsubsection{Corrupt $\p_{\st 2}$.} This case is identical to the corrupt   $\p_{\st 2}$ case in the proof of DUQ-OT. Thus, Relation \ref{equ::DUQ-OT-HF-server-sim-}  (in Section \ref{sec::Delegated-Unknown-Query-OT-HF}) holds.

\subsubsection{Corrupt $\p_{\st 1}$.} In the real execution, $\p_{1}$'s view is: 
$\view_{\p_1}^{\st \rduqothf}\big(\bm{m}, (v, s, z), $ $ \empt, \empt, \empt \big)=\{g, C, p, s_1, \bm{w}_{\st j}, r_{\st 1}, \delta_{\st 0}, \delta_{\st 1},$\\ $ (e'_{\st 0,0}, e'_{\st 1,0}),$ $ ...,$ $ (e'_{\st 0, z-1},$ $ e'_{\st 1, z-1})\}$. Ideal-model $\simm_{\st\p_{\st 1}}$ operates as follows.

\begin{enumerate}
\item initiates an empty view. It selects a large random prime number $p$ and a random generator $g$.
\item picks two random values $\delta'_{\st 0}, \delta'_{\st 1}\stackrel{\st\$}\leftarrow\mathbb{Z}_{p}$. 

\item constructs an empty vector $\bm{w}'$. It picks $z$ uniformly at random elements $w'_{\st 0},..., w'_{\st z}$  from the encryption (ciphertext) range and inserts the elements into $\bm{w}'$. 

\item picks two uniformly random values $s'_{\st 1}\stackrel{\st\$}\leftarrow\mathbb U$ and $C', r'_{\st 1}\stackrel{\st\$}\leftarrow\mathbb{Z}_{p}$, where $\mathbb{U}$ is the output range of $\ses(.)$. 

\item picks $z$ pairs of random values as follows  $(a_{\st 0,0}, a_{\st 1,0}),..., (a_{\st 0,z-1},$ $ a_{\st 1,z-1})\stackrel{\st\$}\leftarrow\mathbb{Z}_{\st p}$. 
\item appends  $s'_{\st 1},g, C', p, r'_{\st 1}, \delta'_{\st 0}, \delta'_{\st 1}$ and pairs $(a_{\st 0,0}, a_{\st 1,0}),..., (a_{\st 0,z-1},$ $ a_{\st 1,z-1})$ to the view and outputs the view. 
\end{enumerate}

Next, we argue that the views in the ideal and real models are indistinguishable. The main difference between this case and the corrupt $\p_{\st 1}$ case in the proof of DUQ-OT (in Section  \ref{sec::DUQ-OT-Security-Proof}) is that now, in the real model, $\p_1$ has: (i) a vector $\bm{w}_{\st j}$ of ciphertexts and (ii) $z$ pairs $(e'_{\st 0,0}, e'_{\st 1,0}),..., (e'_{\st 0,z-1},$ $ e'_{\st 1,z-1})$.  Therefore, we can reuse the same argument we provided for the corrupt $\p_1$ case in the proof of DUQ-OT to argue that the views (excluding $\bm{w}_{\st j}$ and $(e'_{\st 0,0}, e'_{\st 1,0}),..., (e'_{\st 0,z-1},$ $ e'_{\st 1,z-1})$) have identical distributions.

Due to Lemma \ref{lemma::two-pairs-indis-}, the elements of each pair  $(e'_{\st 0, i}, e'_{1, i})$ in the real model are indistinguishable from the elements of each pair  $(a_{\st 0, i}, a_{\st 1, i})$ in the ideal model, for all $i$, $0 \leq i \leq z-1$.  Also, due to the IND-CPA property of the additive homomorphic encryption scheme, the elements of $\bm{w}_{\st j}$ in the real model are indistinguishable from the elements of $\bm{w}'$ in the ideal model.

Hence, Relation \ref{equ::DUQ-OT-HF-server-sim-} (in Section \ref{sec::Delegated-Unknown-Query-OT-HF}) holds.

\subsubsection{Corrupt $\tp$.} This case is identical to the corrupt \tp in the proof of DUQ-OT, with a minor difference; namely, in this case, \tp also has input $z$ which is the total number of message pairs that \se holds. Thus, we can reuse the same argument provided for the corrupt \tp in the proof of DUQ-OT to show that the real and ideal models are indistinguishable. Thus, Relation \ref{equ::DUQ-OT-HF-t-sim-} (in Section \ref{sec::Delegated-Unknown-Query-OT-HF}) holds.
\hfill\(\Box\) 
\end{proof}

%\clearpage

%\vspace{-2mm}
\section{A Compiler for Generic OT with Constant Size Response}\label{sec::the-compiler}

In this section, we present a compiler that transforms  \emph{any} $1$-out-of-$n$ OT that requires \re to receive $n$ messages (as a response) into a $1$-out-of-$n$ OT that enables \re to receive only a \emph{constant} number of messages.

 The main technique we rely on is the encrypted binary vector that we used in Section \ref{sec::Delegated-Unknown-Query-OT-HF}.  The high-level idea is as follows.  During query computation, \re (along with its vector that encodes its index $s\in\{0, n-1\}$) computes a binary vector of size $n$, where all elements of the vector are set to $0$ except for $s$-th element which is set to $1$. \re encrypts each element of the vector and sends the result as well as its query to \se. Subsequently, \se computes a response vector (the same manner it does in regular OT), homomorphically multiplies each element of the response by the element of the encrypted vector (component-wise), and then homomorphically sums all the products. It sends the result (which is now constant with regard to $n$) to \re, which  decrypts the response and retrieves the result $m_{\st s}$. 
 
 Next, we will present a generic OT's syntax, and  introduce the generic compiler using the syntax. 
 
%\vspace{-2mm}

\subsection{Syntax of a Conventional OT}\label{sec::OT-syntax}

%\vspace{-1mm}

Since we aim to treat any  OT in a block-box manner, we first present the syntax of an OT. A conventional (or non-delegated) $1$-out-of-$n$ OT (\onenot) has the following algorithms:

\begin{itemize}[leftmargin=4.5mm]

\item[$\bullet$] $\mathtt{\se.Init}(1^{\st\lambda})\rightarrow pk$: a probabilistic algorithm run by \se. It takes as input security parameter $1^{\st\lambda}$ and returns a public key $pk$. 

% =======\\

% \item[$\bullet$] $\mathtt{Setup}(1^{\st\lambda})\rightarrow (sk, pk)$: a probabilistic algorithm run by \re. It takes as input security parameter $1^{\st\lambda}$ and returns a pair of private and public keys $(sk, pk)$. 

% =====\\

\item[$\bullet$] $\mathtt{\re.GenQuery}(pk, n, s)\rightarrow (q, sp)$: a probabilistic algorithm run by \re. It takes as input $pk$, the total number of messages $n$, and a secret index $s$. It returns a query (vector) $q$ and a secret parameter $sp$.  

 \item[$\bullet$] $\mathtt{\se.GenRes}(m_{\st 0},\ldots,m_{\st n-1}, pk, q)\rightarrow res$: a probabilistic algorithm run by \se. It takes as input $pk$ and $q$. It generates an encoded response (vector) $res$.

  \item[$\bullet$] $\mathtt{\re.Retreive}(res, q, sp, pk, s)\rightarrow m_{\st s}$: a deterministic algorithm run by \se. It takes as input $res$,  $q$, $sp$,  $pk$, and $s$. It returns message $m_{\st s}$. 
 
\end{itemize}

% \begin{itemize}

% \item[$\bullet$] $\mathtt{Init}(1^{\st\lambda})\rightarrow pk$: a probabilistic algorithm run by \se. It takes as input security parameter $1^{\st\lambda}$ and returns a public key $pk$. 

% =======\\

% \item[$\bullet$] $\mathtt{Setup}(1^{\st\lambda})\rightarrow (sk, pk)$: a probabilistic algorithm run by \re. It takes as input security parameter $1^{\st\lambda}$ and returns a pair of private and public keys $(sk, pk)$. 

% =====\\

% \item[$\bullet$] $\mathtt{GenQuery}(sk,pk, s)\rightarrow q$: a probabilistic algorithm run by \re. It takes as input $sk$,  $pk$, and an index $s$. It returns a query (vector) $q$.  

%  \item[$\bullet$] $\mathtt{GenRes}(pk, q)\rightarrow res$: a probabilistic algorithm run by \se. It takes as input $pk$ and $q$. It generates an encoded response (vector) $res$. 

%   \item[$\bullet$] $\mathtt{Retreive}(res, q, sk, pk, s)\rightarrow m_{\st s}$: a deterministic algorithm run by \se. It takes as input $res$,  $q$, $sk$,  $pk$, and $s$. It returns message $m_{\st s}$. 
 
% \end{itemize} 

The functionality that a $1$-out-of-$n$ OT  computes can be defined as: 
 $\mathcal{F}_{\scriptscriptstyle\onenot}:\big((m_{\st 0}, \ldots, m_{\st n-1}), s\big) \rightarrow (\empt, m_{\st s})$. Informally, the security of $1$-out-of-$n$ OT states that (1) \re's view can be simulated given its input query $s$ and output message $m_{\st s}$ and (2) \se's view can be simulated given its input messages $(m_{\st 0}, \ldots, m_{\st n-1})$. We refer readers to \cite{DBLP:books/cu/Goldreich2004} for further discussion on $1$-out-of-$n$ OT.

\subsection{The Compiler}\label{sec::compiler}
We present the compiler in detail in Figure \ref{fig::generic-short-res-HE}.  We highlight that in the case where each $e_i \in res$ contains more than one value, e.g., $e_{\st i}=[e_{\st 0,i},\ldots, e_{\st w-1,i}]$ (due to a certain protocol design), then each element of $e_i$ is separately multiplied and added by the element of vector $\bm{b}'$, e.g., the $j$-st element of the response is $e_{\st j, 0}\hmul \bm{b}'[0]\hadd\ldots\hadd e_{\st j, n-1}\hmul \bm{b}'[n-1]$, for all $j$, $0\leq j\leq w-1$. In this case, only $w$ elements are sent to \re.

%%%%%%%%%%%%%%
\begin{figure}[!htbp]
\setlength{\fboxsep}{1pt}
\begin{center}
    \begin{tcolorbox}[enhanced,
    drop fuzzy shadow southwest,
    colframe=black,colback=white]
%%%%%%%%%%%%
%\vspace{-2mm}
\begin{enumerate}%[leftmargin=5.2mm]
\item\underline{\textit{$\se$-side Initialization:}}
$\mathtt{Init}(1^{\st\lambda})\rightarrow pk$

This phase involves \se. 
\begin{enumerate}
    \item calls $\mathtt{\se.Init}(1^{\st\lambda})\rightarrow pk$. 
    \item publishes $pk$.
\end{enumerate}

\item\underline{\re-side Setup:} 
$\mathtt{Setup}(1^{\st\lambda})\rightarrow (sk_{\st \re}, pk_{\st \re})$ 

This phase involves \re. 
\begin{enumerate}
 \item calls 
 $\keygen(1^{\st\lambda})\rightarrow(sk_{\st \re}, pk_{\st \re})$. 

% $\mathtt{Setup}(1^{\st\lambda})\rightarrow (sk, pk)$.
%
\item publishes $pk_{\st \re}$.
 \end{enumerate}
\item\underline{\re-side Query Generation:} 
$\mathtt{GenQuery}(pk_{\st \re}, n, s)\rightarrow (q, sp, \bm{b}')$

This phase involves \re. 
\begin{enumerate}
\item calls $\mathtt{\re.GenQuery}(pk,n, s)\rightarrow (q, sp)$. 
\item constructs a vector $\bm{b}=[b_{\st 0},\ldots,b_{\st n-1}]$, as: 

\begin{enumerate}
\item sets every element $b_{\st i}$ to zero except for $s$-th element $b_{\st s}$ which is set to $1$.
\item encrypts each element of $\bm{b}$ using additive homomorphic encryption, $\forall 0\leq i\leq n-1: b'_{\st i}=\enc(pk_{\st \re}, b_{\st i})$. 
Let $\bm{b}'$ be the vector of the encrypted elements. 
\end{enumerate}
 \item sends $q$ and $\bm{b}'$ to \se and locally stores $sp$. 
\end{enumerate}
\item\underline{\se-side Response Generation:}
$\mathtt{GenRes}(m_{\st 0},\ldots,m_{\st n-1}, pk, pk_{\st \re}, q, \bm{b}')\rightarrow res$

This phase involves \se. 
\begin{enumerate}
\item calls $\mathtt{\se.GenRes}(m_{\st 0},\ldots,m_{\st n-1}, pk, q)\rightarrow res$. Let $res=[e_{\st 0},\ldots, e_{\st n-1}]$. 
\item compresses the response using vector  $\bm{b}'$ as follows. $\forall i, 0\leq i\leq n-1:$ 
    \vspace{-2mm}
$$e= (e_{0}\hmul \bm{b}'[0])\hadd\ldots\hadd  (e_{\st n-1}\hmul \bm{b}'[n-1])$$
 
\item  sends $res=e$  to \re.
\end{enumerate}

\item\underline{\re-side Message Extraction.} 
$\mathtt{Retreive}(res, q, sp, pk,sk_{\st \re}, s)\rightarrow m_{\st s}$

This phase involves \re. 

\begin{enumerate}

\item calls $\dec(sk_{\st \re}, res)\rightarrow res'$. 

\item calls $\mathtt{\re.Retreive}(res', q, sp, pk, s)\rightarrow m_{\st s}$. 
\end{enumerate}
\end{enumerate}
%%%%%%%%%%%%%%%
\end{tcolorbox}
\end{center}
\vspace{-4mm}
    \caption{A compiler that turns a $1$-out-of-$n$ OT with response size $O(n)$ to  a $1$-out-of-$n$ OT with response size $O(1)$.}
    \label{fig::generic-short-res-HE}
    \vspace{-3mm}
\end{figure}

\begin{theorem}\label{theo::one-out-of-n-OT}
Let $\mathcal{F}_{\scriptscriptstyle\onenot}$ be the functionality defined above. If  
\onenot is secure and additive homomorphic encryption meets IND-CPA, then generic OT with constant size response (presented in Figure \ref{fig::generic-short-res-HE}) (i) securely computes $\mathcal{F}_{\scriptscriptstyle\onenot}$ in the presence of semi-honest adversaries and (ii) offers $O(1)$ response size, w.r.t. the total number of messages $n$. 
\end{theorem}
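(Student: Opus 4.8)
The plan is to prove the three parts of the statement — correctness, the $O(1)$ response bound, and simulation security against each corrupted party — treating the underlying \onenot and the additive homomorphic scheme as black boxes, so that everything reduces to the security of \onenot and to IND‑CPA. For correctness and the size bound, I would first check that the homomorphic compression reconstructs the addressed coordinate: by construction $\bm b'[i]=\enc(pk_{\st\re},0)$ for all $i\neq s$ and $\bm b'[s]=\enc(pk_{\st\re},1)$, so the additive homomorphic properties give $e=(e_{\st 0}\hmul\bm b'[0])\hadd\cdots\hadd(e_{\st n-1}\hmul\bm b'[n-1])=\enc(pk_{\st\re},\sum_i e_{\st i}\cdot b_{\st i})=\enc(pk_{\st\re},e_{\st s})$, hence $\dec(sk_{\st\re},e)=e_{\st s}$ and the subsequent call to the underlying $\mathtt{\re.Retreive}$ returns $m_{\st s}$ exactly as in the original \onenot (and coordinatewise when each $e_{\st i}$ is a $w$‑tuple, in which case $w$ values are returned). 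Since $e$ is a single ciphertext — or $w$ of them, with $w$ fixed by the underlying protocol and independent of $n$ — the response size is $O(1)$ in $n$, which gives part (ii).

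\emph{Corrupt \se.} Relative to its view in the underlying OT, a semi‑honest sender additionally obtains $pk_{\st\re}$ and the vector $\bm b'$ of $n$ ciphertexts under $pk_{\st\re}$; all of its other incoming data ($pk$, the query $q$, its own coins) already occurs in the underlying‑OT sender view. I would let $\simm_{\st\se}$ (a) run the \onenot sender simulator $\simm^{\st\text{OT}}_{\st\se}(m_{\st 0},\ldots,m_{\st n-1})$ to obtain the underlying‑OT part of the view, (b) sample $(sk_{\st\re},pk_{\st\re})\leftarrow\keygen(1^{\st\lambda})$ and append $pk_{\st\re}$, and (c) append $\bm b'=(\enc(pk_{\st\re},0),\ldots,\enc(pk_{\st\re},0))$. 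Indistinguishability follows from a hybrid argument that switches the $n$ coordinates of $\bm b'$ from the true indicator vector to the all‑zero vector one at a time — each step indistinguishable by IND‑CPA, since a corrupt‑sender distinguisher never sees $sk_{\st\re}$ — combined with the security of \onenot against a corrupt sender for the remaining component.

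\emph{Corrupt \re.} Here the compiled receiver learns \emph{strictly less} than in the underlying OT: instead of the full response vector $[e_{\st 0},\ldots,e_{\st n-1}]$ it obtains only the compressed ciphertext $e$, decrypts it to $e_{\st s}$, and feeds $e_{\st s}$ to $\mathtt{\re.Retreive}$. I would let $\simm_{\st\re}(s,m_{\st s})$ run the \onenot receiver simulator $\simm^{\st\text{OT}}_{\st\re}(s,m_{\st s})$ to obtain a simulated $(q,sp,pk,\widetilde{res},\ldots)$, sample $(sk_{\st\re},pk_{\st\re})\leftarrow\keygen(1^{\st\lambda})$, set $e\leftarrow\enc(pk_{\st\re},\widetilde{res}_{\st s})$, and output the coins together with $(pk,pk_{\st\re},sk_{\st\re},q,sp,\bm b',e,m_{\st s})$, where $\bm b'$ is regenerated from the coins exactly as an honest \re would. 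A distinguisher against this $\simm_{\st\re}$ yields one against $\simm^{\st\text{OT}}_{\st\re}$, because key generation and the single encryption are performed identically in both worlds and can be carried out inside the reduction; so receiver‑indistinguishability reduces to receiver‑security of \onenot, and the honest decryption of $e$ equals $\widetilde{res}_{\st s}$, on which $\mathtt{\re.Retreive}$ yields $m_{\st s}$ since $\widetilde{res}$ is distributed like the real response.

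\emph{Main obstacle.} I expect the only delicate point to live inside the corrupt‑\re case: the ciphertext $e$ that \re receives in the real protocol is not a \emph{fresh} encryption of $e_{\st s}$ but one whose randomness depends on the masking randomness of the other coordinates, so a receiver holding $sk_{\st\re}$ could in principle extract a relation among the hidden $e_{\st i}$. This is handled either by assuming circuit/function privacy of the homomorphic scheme or — more simply — by having \se re‑randomise $e$ with a fresh encryption of $0$ before sending it, a change that affects neither correctness nor the $O(1)$ bound and lets the simulator above output a genuinely fresh $\enc(pk_{\st\re},\widetilde{res}_{\st s})$. A secondary implicit requirement, inherited from the syntax of Section~\ref{sec::OT-syntax}, is that $\mathtt{\re.Retreive}$ uses only the $s$‑th coordinate of the response vector; this holds for the standard constructions and should be stated explicitly. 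Once these points are granted, the remainder is the routine two‑part reduction sketched above.
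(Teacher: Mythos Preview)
Your proposal is correct and follows essentially the same approach as the paper's proof sketch: handle the corrupt-\se case by invoking IND-CPA on the encrypted indicator vector and the underlying \onenot sender simulator, handle the corrupt-\re case by running the underlying \onenot receiver simulator and wrapping its $s$-th response coordinate in a single homomorphic ciphertext, and read off the $O(1)$ bound from the single compressed output. Your treatment is in fact more careful than the paper's sketch---in particular, you correctly isolate the function/circuit-privacy subtlety for the homomorphically-evaluated ciphertext on the receiver side (and the implicit assumption that $\mathtt{\re.Retreive}$ touches only the $s$-th coordinate), both of which the paper elides.
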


%Appendix \ref{theo::compiler-sec--} presents the proof of Theorem \ref{theo::one-out-of-n-OT}.

\subsection{Proof of Theorem \ref{theo::one-out-of-n-OT}}\label{theo::compiler-sec--}
\begin{proof}[sketch]
Compared to an original $1$-out-of-$n$ OT, the only extra information that \se learns in the real model is a vector of $n$ encrypted binary elements. Since the elements have been encrypted and the encryption satisfies IND-CPA, each ciphertext in the vector is indistinguishable from an element picked uniformly at random from the ciphertext (or encryption) range. Therefore, it would suffice for a simulator to pick $n$ random values and add them to the view. As long as the view of \se in the original $1$-out-of-$n$ OT can be simulated, the view of \se in the new $1$-out-of-$n$ OT can be simulated too (given the above changes). 

Interestingly, in the real model, \re learns less information than it learns in the original $1$-out-of-$n$ OT because it only learns the encryption of the final message $m_s$. The simulator (given $m_s$ and $s$) encrypts $m_s$ the same way as it does in the ideal model in the $1$-out-of-$n$ OT. After that, it encrypts the result again (using the additive homomorphic encryption) and sends the ciphertext to \re. Since in both models, \re receives the same number of values in response, the values have been encrypted twice, and \re can decrypt them using the same approaches, the two models have identical distributions. 

Moreover, the response size is $O(1)$, because the response is the result of (1) multiplying two vectors of size $n$ component-wise and (2) then summing up the products which results in a single value in the case where each element of the response contains a single value (or $w$ values if each element of the response contains $w$ values). 
\hfill\(\Box\) 
\end{proof}

\section{Conclusion}

OT is a crucial privacy-preserving technology.  OTs have found extensive applications in designing secure Multi-Party Computation (MPC) protocols \cite{Yao82b}, Federated Learning (FL) \cite{YangLCT19}, and in accessing sensitive field elements of remote private databases while preserving privacy \cite{CamenischDN09}. In this work, we have identified various gaps both in the privacy of databases and in the OT research. We proposed several novel OTs to address these gaps. 
We have presented a few real-world applications of the proposed OTs, while also formally defining and proving their security within the simulation-based model.

\bibliographystyle{splncs04}
\bibliography{ref}

\appendix

%%%%%%%

%\input{full-simulation-sec-def}

 %!TEX root =main.tex

\section{The Original OT of Naor and Pinkas}\label{sec::OT-of-Naor-Pinkas}

Figure \ref{fig::Noar-OT} restates the original OT of Naor and Pinkas \cite[pp. 450, 451]{Efficient-OT-Naor}. 

%%%%%%%%%%%%%%%%%%%%%%%%%%%%%%

\begin{figure}[!h]
\setlength{\fboxsep}{1pt}
\begin{center}
    \begin{tcolorbox}[enhanced,
    drop fuzzy shadow southwest,
    colframe=black,colback=white]
%%%%%%%%%%%%
\vspace{-2mm}
\begin{enumerate}%[leftmargin=5.2mm]

\item \underline{\textit{$\se$-side Initialization:}} 
$\mathtt{\se.Init}(1^{\st\lambda})\rightarrow pk$

\begin{enumerate}

\item  chooses a random large prime number $p$ (where $\log_{\st 2}p=\lambda$), a random element
$C \stackrel{\st \$}\leftarrow \mathbb{Z}_{\st p}$ and random generator $g$.
\item publishes $pk=(C, g, p)$. 

\end{enumerate}

\item \underline{\textit{$\re$-side Query Generation:}} 
$\mathtt{\re.GenQuery}(pk, s)\rightarrow (q, sp)$

\begin{enumerate}

\item picks a random value $r \stackrel{\st \$}\leftarrow \mathbb{Z}_p\setminus\{0\}$ and sets $sp=r$. 
\item sets $\beta_{\st s}=g^{\st r}$ and $\beta_{\st 1-s}=\frac{C}{\beta_{\st s}}$.

\item sends $q=\beta_{\st 0}$ to \se and locally stores $sp$.

\end{enumerate}

\item \underline{\textit{$\se$-side Response Generation:}} 
$\mathtt{\se.GenRes}(m_{\st 0}, m_{\st 1}, pk, q)\rightarrow res$

\begin{enumerate}

\item computes $\beta_{\st 1}=\frac{C}{\beta_{\st 0}}$.

\item chooses two random values, $y_{\st 0}, y_{\st 1}\stackrel{\st \$}\leftarrow \mathbb{Z}_p$. 

\item encrypts the elements of the pair $(m_{\st 0}, m_{\st 1})$ as follows: 
$$e_{\st 0}:=(e_{\st 0,0}, e_{\st 0,1})=(g^{\st y_0}, \h(\beta_{\st 0}^{\st y_0}) \oplus m_{\st 0})$$
$$e_{\st 1}:=(e_{\st 1,0}, e_{\st 1,1})=(g^{\st y_1}, \h(\beta_{\st 1}^{\st y_1}) \oplus m_{\st 1})$$

\item sends $res=(e_{\st 0}, e_{\st 1})$ to \re.
\end{enumerate}

\item\underline{\textit{\re-side Message Extraction:}} 
$\mathtt{\re.Retreive}(res, sp, pk, s)\rightarrow m_{\st s}$

\begin{itemize}
\item retrieves the related message $m_{\st s}$ by computing:   
$m_{\st s}=\h((e_{\st s,0})^{\st r})\oplus e_{\st s, 1}$

\end{itemize}

\end{enumerate}
%%%%%%%%%%%%%%%
\end{tcolorbox}
\end{center}
%\vspace{-4mm}
\caption{Original OT proposed by  Naor and Pinkas \cite[pp. 450, 451]{Efficient-OT-Naor}. In this protocol, the input of \re is a private binary index $s$ and the input of \se is a pair of private messages $(m_{\st 0}, m_{\st 1})$.} 
%\vspace{-4mm}
\label{fig::Noar-OT}
\end{figure}

\section{\rdqothf in more Detail}\label{sec::DQ-HF-OT-detailed-protocol}
Figure \ref{fig::DQHT-OT} presents the \rdqothf that realizes \dqothf.

%!TEX root = main.tex

%%%%%%%%%%%%%%
\begin{figure}[!htbp]
\setlength{\fboxsep}{1pt}
\begin{center}
    \begin{tcolorbox}[enhanced,
    drop fuzzy shadow southwest,
    colframe=black,colback=white]
%%%%%%%%%%%%
\begin{enumerate}%[leftmargin=5.2mm]
\item \underline{\textit{$\se$-side Initialization:}} 
$\mathtt{Init}(1^{\st \lambda})\rightarrow pk$
\begin{enumerate}

\item chooses a sufficiently large prime number $p$.

\item selects random element
$C \stackrel{\st \$}\leftarrow \mathbb{Z}_p$ and generator $g$.
\item publishes $pk=(C, p, g)$. 

\end{enumerate}

\item \underline{\textit{$\re_j$-side Delegation:}}
$\mathtt{\re.Request}( 1^{\st \lambda}, s, pk)\rightarrow req=(req_{\st 1}, req_{\st 2})$
\begin{enumerate}

\item splits  the private index $s$ into two shares $(s_{\st 1}, s_{\st 2})$ by calling  $\ses(1^{\st \lambda}, s, 2, 2)\rightarrow (s_{\st 1}, s_{\st 2})$.

\item picks two uniformly random values: $r_{\st 1}, r_{\st 2} \stackrel{\st\$}\leftarrow\mathbb{Z}_{\st p}$.

\item sends $req_{\st 1 }=(s_{\st 1}, r_{\st 1})$ to $\p_{\st1}$ and $req_{\st 2 }=(s_{\st 2}, r_{\st 2})$ to $\p_{\st 2}$.

\end{enumerate}

\item \underline{\textit{$\p_{\st 2}$-side Query Generation:}}
$\mathtt{\p_{\st 2}.GenQuery}(req_{\st 2}, ,pk)\rightarrow q_{\st 2}$

\begin{enumerate}

\item computes a pair of partial queries:
\vspace{-2mm}
  $$\delta_{\st s_2}= g^{\st r_2},\ \  \delta_{\st 1-s_2} = \frac{C}{g^{\st r_2}}$$
  
\item sends $q_{\st 2}=(\delta_{\st 0}, \delta_{\st 1})$ to  $\p_{\st 1}$. 

\end{enumerate}

\item\underline{\textit{$\p_{\st 1}$-side Query Generation:}}
$\mathtt{\p_{\st 1}.GenQuery}(req_{\st 1}, q_{\st 2},pk)\rightarrow q_{\st 1}$

\begin{enumerate}

\item computes a pair of final queries as: 
\vspace{-2mm}
$$\beta_{\st s_1}=\delta_{0}\cdot g^{\st r_1},\ \ \beta_{\st 1-s_1}=\frac{\delta_{\st 1}} {g^{\st r_1}}$$

\item sends $q_{\st 1}=(\beta_{\st 0}, \beta_{\st 1})$ to  $\se$.

\end{enumerate}

\item\underline{\textit{\se-side Response Generation:}} 
$\mathtt{GenRes}(m_{\st 0, 0}, m_{\st 1, 0},\ldots, m_{\st 0, z-1}, m_{\st 1, z-1},  pk, q_{\st 1})\rightarrow res$

\begin{enumerate}

\item aborts if  $C \neq \beta_{\st 0}\cdot \beta_{\st 1}$.
\item computes a response as follows. $\forall t, 0\leq t\leq z-1:$

\begin{enumerate}[leftmargin=3.5mm]

\item  picks two random values $y_{\st 0, t}, y_{\st 1, t}  \stackrel{\$}\leftarrow\mathbb{Z}_{\st p}$.
\item computes response:
\vspace{-2mm}
   $$e_{\st 0, t} := (e_{\st 0, 0, t}, e_{\st 0, 1, t}) = (g^{\st y_{0, t}}, \h(\beta_{\st 0}^{\st y_{0, t}}) \oplus m_{\st 0, t})$$
$$e_{\st 1, t} := (e_{\st 1, 0, t}, e_{\st 1, 1, t}) = (g^{\st y_{1, t}}, \h(\beta_{\st 1}^{\st y_{1, t}}) \oplus m_{\st 1, t}) $$

\end{enumerate}

\item sends $res=(e_{\st 0, 0}, e_{\st 1, 0}),..., (e_{\st 0, z-1}, e_{\st 1, z-1})$ to $\p_{\st 1}$. 

\end{enumerate}

\item\underline{\textit{$\p_{\st 1}$-side Oblivious Filtering:}} 
$\mathtt{OblFilter}(res)\rightarrow  res'$

\begin{itemize}
\item[$\bullet$] forwards $res'=(e_{\st 0, v}, e_{\st 1, v})$ to $\re_{\st j}$ and discards the rest of the messages received from \se. 
\end{itemize}

\item\underline{\textit{\re-side Message Extraction:}} 
$\mathtt{Retreive}(res', req,   pk)  \rightarrow m_{\st s}$

\begin{enumerate}

\item sets $x=r_{\st 2}+r_{\st 1}\cdot(-1)^{\st s_2}$.

\item retrieves message $m_{\st s,v}$ by setting:  
 $m_{\st s, v}=\h((e_{\st s, 0, v})^{\st x})\oplus e_{\st s, 1 v}$

\end{enumerate}

\end{enumerate}
%%%%%%%%%%%%%%%
\end{tcolorbox}
\end{center}
\vspace{-2mm}
    \caption{\rdqothf: Our protocol that realizes \dqothf.}
    \label{fig::DQHT-OT}
    \vspace{-3mm}
\end{figure}

\begin{theorem}\label{theo::DQ-OTHF-sec}
Let $\mathcal{F}_{\scriptscriptstyle\dqothf}$ be the functionality defined in Section \ref{sec::DQOT-HF}. If  
DL, CDH, and RO assumptions hold, then \rdqothf (presented in Figure \ref{fig::DQHT-OT}) securely computes $\mathcal{F}_{\scriptscriptstyle\dqothf}$ in the presence of semi-honest adversaries, 
%
% (a) semi-honest receiver \re, honest  \se,  $\p_{\st 1}$, and $\p_{\st 2}$, (b)  semi-honest \se, honest \re, $\p_{\st 1}$, and $\p_{\st 2}$, or (c) semi-honest $\p_{\st i}$ (where $i\in \{1, 2\}$), honest \se and \re,
%
w.r.t. Definition \ref{def::DQ-OT-HF-sec-def}. 
\end{theorem}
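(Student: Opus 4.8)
The plan is to follow the simulation-based paradigm of Definition~\ref{def::DQ-OT-HF-sec-def}: for each party that a semi-honest adversary may corrupt (namely \se, \re, $\p_{\st 1}$, and $\p_{\st 2}$) I would construct a PPT simulator and argue that its output is computationally indistinguishable from that party's real-world view. The key structural observation is that \rdqothf is obtained from DQ-OT (Figure~\ref{fig::DQ-OT}) only by having \se compute $z$ response pairs instead of one and having $\p_{\st 1}$ forward exactly the $v$-th pair to $\re_{\st j}$ while discarding the rest. Hence the entire query-generation transcript — the shares $(s_{\st 1},s_{\st 2})$, the masks $(r_{\st 1},r_{\st 2})$, and the pairs $(\delta_{\st 0},\delta_{\st 1})$ and $(\beta_{\st 0},\beta_{\st 1})$ — is distributed exactly as in DQ-OT, so three of the four cases reduce almost verbatim to the corresponding cases in the proof of Theorem~\ref{theo::DQ-OT-sec}, and only the corrupt-$\p_{\st 1}$ case genuinely needs the multi-record machinery.

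For a \textbf{corrupt sender}, the real view is $\{r_{\st\se},C,\beta_{\st 0},\beta_{\st 1}\}$, which coincides with the DQ-OT sender view; the only bookkeeping change is that $\simm_{\st\se}$ now receives the full vector $\bm m$ instead of a single pair (and ignores it), outputting $\beta'_{\st 0}=g^{\st r'}$, $\beta'_{\st 1}=C'/g^{\st r'}$, with indistinguishability following from DL exactly as in Section~\ref{sec::DQ-OT-proof}, which gives Relation~\ref{equ::sender-sim-DQ-OT-HF}. For a \textbf{corrupt} $\p_{\st 2}$, the view $\{g,C,p,s_{\st 2},r_{\st 2}\}$ and its simulator are identical to the DQ-OT case: the security of the XOR secret-sharing scheme makes $s_{\st 2}$ indistinguishable from a uniform element of $\mathbb U$, giving Relation~\ref{equ::server2-sim-DQ-OT-HF}. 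For a \textbf{corrupt receiver}, because $\p_{\st 1}$ forwards only the $v$-th pair, \re sees exactly one DQ-OT response pair together with its output $m_{\st s,v}$, so $\simm_{\st\re}$ is the DQ-OT receiver simulator — it sets $e'_{\st s}$ to a faithful encryption of $m_{\st s,v}$ and $e'_{\st 1-s}$ to a uniform string, and indistinguishability rests on DL, CDH, and RO as before, yielding Relation~\ref{equ::reciever-sim-DQ-OT-HF}. Correctness of \re's extraction is inherited directly from the DQ-OT correctness analysis summarised in Table~\ref{tab:analysis}, since \re runs the unchanged retrieval on the $v$-th pair.

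The main obstacle is the \textbf{corrupt} $\p_{\st 1}$ \textbf{case}, since $\p_{\st 1}$ is the one party whose view is genuinely enlarged: its real view is $\{g,C,p,s_{\st 1},r_{\st 1},\delta_{\st 0},\delta_{\st 1},(e_{\st 0,0},e_{\st 1,0}),\ldots,(e_{\st 0,z-1},e_{\st 1,z-1})\}$, i.e. the DQ-OT $\p_{\st 1}$ view augmented with \emph{all} $z$ response pairs returned by \se, and the simulator is handed $\p_{\st 1}$'s input $v$ together with its output $z$. I would let $\simm_{\st\p_{\st 1}}$ generate the components $s_{\st 1},r_{\st 1},\delta_{\st 0},\delta_{\st 1},C',g,p$ exactly as the DQ-OT $\p_{\st 1}$ simulator and then append $z$ pairs $(a_{\st 0,t},a_{\st 1,t})$ of freshly sampled uniform values of the appropriate ranges. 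The justification that these simulated pairs are indistinguishable from the real ones is precisely Lemma~\ref{lemma::two-pairs-indis-}: conditioned on $r_{\st 1},C,g^{\st r_2}$, and $C/g^{\st r_2}$ — which is the totality of $\p_{\st 1}$'s query-phase knowledge — each pair $\big(g^{\st y_{0,t}},\h(\beta_{\st 0}^{\st y_{0,t}})\oplus m_{\st 0,t}\big)$, $\big(g^{\st y_{1,t}},\h(\beta_{\st 1}^{\st y_{1,t}})\oplus m_{\st 1,t}\big)$ is indistinguishable from a uniform pair under DL, CDH, and RO. Because the $y_{\st i,t}$ are drawn independently across $t$, a routine hybrid over the $z$ record indices lifts the single-pair statement of the lemma to the joint statement over all $z$ pairs, at the cost of a $z$-factor in the distinguishing advantage; combined with the DQ-OT argument for the remaining components this establishes Relation~\ref{equ::server1-sim-DQ-OT-HF}. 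Assembling the four cases completes the proof; the expected technical friction lies entirely in phrasing the $\p_{\st 1}$ hybrid so that the conditioning on the query-phase values is handled uniformly across all $z$ pairs.
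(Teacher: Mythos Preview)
Your proposal is correct and follows essentially the same route as the paper: you reduce the \se, \re, and $\p_{\st 2}$ cases verbatim to the corresponding DQ-OT simulators, and for the enlarged $\p_{\st 1}$ view you replace the $z$ response pairs by uniform values and invoke Lemma~\ref{lemma::two-pairs-indis-}. Your explicit hybrid over the $z$ record indices is slightly more careful than the paper, which simply asserts per-pair indistinguishability for all $i$ without spelling out the hybrid, but the underlying argument is the same.
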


%Appendix \ref{sec::proof-of-DQ-OTHF-sec} presents the proof of Theorem \ref{theo::DQ-OTHF-sec}. 

%%%%%%%%%%%%%%%%**********

\subsection{Proof of  Theorem \ref{theo::DQ-OTHF-sec}}\label{sec::proof-of-DQ-OTHF-sec}
Below, we prove the security of  \rdqothf, i.e., Theorem \ref{theo::DQ-OTHF-sec}.

\begin{proof}
To prove the above theorem, we consider the cases where each party is corrupt at a time.

\subsubsection{Corrupt \re.} Recall that in \rdqothf, sender  \se holds a vector $\bm{m}$ of $z$ pairs of messages (as opposed to DQ-OT where \se holds only a single pair of messages). In the real execution, \re's view is:  
$\view_{\st\re}^{\st \rdqothf}\big(m_{\st 0}, m_{\st 1}, v,$ $  \empt, s\big) = \{r_{\st\re}, g, C, p, e_{\st 0,v}, e_{\st 1, v}, m_{\st s,v} \}$, where $C=g^{\st a}$ is a random value and public parameter, 
where $g$ is a random generator, $a$ is a random value, $p$ is a large random prime number, and $r_{\st \re}$ is the outcome of the internal random coin of  \re and is used to generate $(r_{\st 1}, r_{\st 2})$.

We will construct a simulator $\simm_{\st\re}$ that creates a view for \re such that (i) \re will see only a pair of messages (rather than $z$ pairs), and (ii) the view is indistinguishable from the view of corrupt \re in the real model. 
$\simm_{\st \re}$ which receives $(s, m_{\st s})$ from \re operates as follows.

 \begin{enumerate}
 \item initiates an empty view and appends uniformly random coin $r'_{\st\re}$ to it, where $r'_{\st\re}$ will be used to generate \re-side randomness. It chooses a large random prime number $p$ and a random generator $g$.

 \item sets $(e'_{\st 0}, e'_{\st 1})$ as follows: 
 \begin{itemize}
 \item splits $s$ into two shares: $\ses(1^{\st\lambda}, s, 2, 2)\rightarrow (s'_{\st 1}, s'_{\st 2})$.
 \item picks uniformly random values: $C', r'_{\st 1}, r'_{\st 2},y'_{\st 0}, y'_{\st 1}\stackrel{\st\$}\leftarrow\mathbb{Z}_{\st p}$.
 %
 %\item picks two additional uniformly random values: $y'_1, y'_2\stackrel{\$}\leftarrow\mathbb{Z}_{p-1}$.
 %
 \item sets $\beta'_s=g^{\st x}$, where $x$ is set as follows: 
 \begin{itemize}
 \item[$*$] $x = r'_{\st 2} + r'_{\st 1}$, if $(s = s_{\st 1} = s_{\st 2} = 0)$ or $(s = s_{\st 1} = 1\wedge s_{\st 2} = 0)$.
 \item[$*$] $x = r'_{\st 2} - r'_{\st 1}$, if $(s =0 \wedge s_{\st 1} = s_{\st 2} = 1)$ or $(s = s_{\st 2} = 1\wedge s_{\st 1} = 0)$.
 \end{itemize}
 \item picks a uniformly random value $u\stackrel{\st\$}\leftarrow\mathbb{Z}_{p}$ and then sets $e'_s = (g^{\st y'_s}, \h(\beta'^{\st y'_s}_{\st s})\oplus m_{\st s})$ and $e'_{\st 1-s} = (g^{\st y'_{1-s}}, u)$. 
 \end{itemize}
 \item  appends $(g, C', p, r'_{\st 1}, r'_{\st 2},  e'_{\st 0}, e'_{\st 1}, m_{\st s})$ to the view and outputs the view.
 \end{enumerate}

The above simulator is identical to the simulator we constructed for DQ-OT. Thus, the same argument that we used (in the corrupt \re case in Section \ref{sec::DQ-OT-proof}) to argue why real model and ideal model views are indistinguishable, can be used in this case as well.  That means, even though \se holds $z$ pairs of messages and generates a response for all of them, \re's view is still identical to the case where \se holds only two pairs of messages. 
Hence, Relation \ref{equ::reciever-sim-DQ-OT-HF}  (in Section \ref{sec::DQOT-HF}) holds.

%%%%%%%%%%%

 \subsubsection{Corrupt \se.}   This case is identical to the corrupt \se in the proof of DQ-OT (in Section \ref{sec::DQ-OT-proof}) with a minor difference. Specifically, the real-model view of \se in this case is identical to the real-model view of \se in DQ-OT; however, now $\simm_{\st\se}$ receives a vector $\bm{m}=[(m_{\st 0, 0},m_{\st 1, 0}),...,$ $ (m_{\st 0, z-1},$ $m_{\st 1, z-1})]$ from \se, instead of only a single pair that $\simm_{\st\se}$ receives in the proof of DQ-OT. $\simm_{\st\se}$ still operates the same way it does in the corrupt \se case in the proof of DQ-OT. Therefore, the same argument that we used (in Section \ref{sec::DQ-OT-proof}) to argue why real model and ideal model views are indistinguishable (when \se is corrupt), can be used in this case as well.
 
 Therefore, Relation \ref{equ::sender-sim-DQ-OT-HF}  (in Section \ref{sec::DQOT-HF}) holds.

\subsubsection{Corrupt $\p_{\st 2}$.} This case is identical to the corrupt   $\p_{\st 2}$ case in the proof of DQ-OT. So, Relation \ref{equ::server2-sim-DQ-OT-HF}  (in Section \ref{sec::DQOT-HF}) holds.

\subsubsection{Corrupt $\p_{\st 1}$.} In the real execution, $\p_{\st 1}$'s view is: 
$\view_{\st\p_1}^{\st \rdqothf}\big((m_{\st 0}, m_{\st 1}), v,$ $ \empt, s\big)=\{g, C, p, s_{\st 1}, r_{\st 1}, \delta_{\st 0}, \delta_{\st 1}, (e_{\st 0,0},$ \\ $ e_{\st 1,0}),$ $\ldots, (e_{\st 0, z-1},$ $ e_{\st 1, z-1})\}$. Ideal-model $\simm_{\st\p_{1}}$ that receives $v$ from $\p_{\st 1}$ operates as follows.

\begin{enumerate}
\item initiates an empty view. It selects a large random prime number $p$ and a random generator $g$.

\item picks two random values $\delta'_{\st 0}, \delta'_{\st 1}\stackrel{\st\$}\leftarrow\mathbb{Z}_{\st p}$. 
\item picks two uniformly random values $s'_{\st 1}\stackrel{\st\$}\leftarrow\mathbb U$ and $C', r'_{\st 1}\stackrel{\st\$}\leftarrow\mathbb{Z}_{\st p-1}$, where $\mathbb{U}$ is the output range of $\ses(.)$. 

\item picks $z$ pairs of random values as follows  $(a_{\st 0,0}, a_{\st 1,0}),..., (a_{\st 0,z-1},$ $ a_{\st 1,z-1})\stackrel{\st\$}\leftarrow\mathbb{Z}_{\st p}$. 
\item appends  $s'_{\st 1}, g, C',p, r'_{\st 1}, \delta'_{\st 0}, \delta'_{\st 1}$ and pairs $(a_{\st 0,0}, a_{\st 1,0}),..., (a_{\st 0,z-1},$ $ a_{\st 1,z-1})$ to the view and outputs the view. 
\end{enumerate}

Now, we explain why the views in the ideal and real models are indistinguishable. The main difference between this case and the corrupt $\p_{\st 1}$ case in the proof of DQ-OT (in Section  \ref{sec::DQ-OT-proof}) is that now $\p_{\st 1}$ has $z$ additional pairs $(e_{\st 0,0}, a_{\st 1,0}),..., (e_{\st 0,z-1},$ $ a_{\st 1,z-1})$. Therefore, regarding the views in real and ideal models excluding the additional $z$ pairs, we can use the same argument we provided for the corrupt $\p_1$ case in the proof of DQ-OT to show that the two views are indistinguishable. Moreover, due to Lemma \ref{lemma::two-pairs-indis-}, the elements of each pair  $(e_{\st 0,i}, e_{\st 1,i})$ in the real model are indistinguishable from the elements of each pair  $(a_{\st 0,i}, a_{\st 1,i})$ in the ideal model, for all $i$, $0 \leq i \leq z-1$.  Hence, Relation \ref{equ::server1-sim-DQ-OT-HF} (in Section \ref{sec::DQOT-HF}) holds.
\hfill\(\Box\) 
\end{proof}

\end{document}